\documentclass[aps, physrev, reprint,superscriptaddress,nofootinbib, showkeys]{revtex4-2}

\usepackage{float}
\makeatletter
\let\newfloat\newfloat@ltx
\makeatother

\usepackage[english]{babel}

\usepackage[utf8]{inputenc}
\usepackage{selinput}
\usepackage[normalem]{ulem}
\usepackage[shortlabels]{enumitem}

\usepackage{braket}
\usepackage{amsthm}
\usepackage{amsmath}
\usepackage{mathtools}
\usepackage{physics}
\usepackage{xcolor}
\usepackage{graphicx}
\usepackage{adjustbox}
\usepackage{placeins}
\usepackage[T1]{fontenc}
\usepackage{lipsum}
\usepackage{csquotes}
\usepackage{bm}

\usepackage[linesnumbered,ruled,vlined]{algorithm2e}
\SetKwInput{kwInit}{Init}

\def\Cbb{\mathbb{C}}

\def\HC{\mathcal{H}}

\def\LC{\mathcal{L}}

\def\ad{^{\dagger}}

\newcommand{\old}[1]{}

\usepackage[toc,page]{appendix}
\usepackage[colorlinks=true,citecolor=blue,linkcolor=magenta]{hyperref}

\usepackage{bm}
\usepackage{bbm}
\usepackage{comment}
\usepackage{mathdots}
\usepackage{verbatim}
\usepackage{natbib}
\usepackage{nccmath}
\usepackage{amsfonts} 

\newcommand{\ketbraq}[1]{\ketbra{#1}{#1}}
\newcommand{\bramatket}[3]{\langle #1 \hspace{1pt} | #2 | \hspace{1pt} #3 \rangle}

\newcommand{\avg}[1]{\langle #1\rangle }
\newcommand{\dya}[1]{\ket{#1}\!\bra{#1}}

\newcommand{\poly}{\operatorname{poly}}

\newcommand{\Ebb}{\mathbb{E}}

\newcommand{\Ubb}{\mathbb{U}}

\newcommand{\BC}{\mathcal{B}}

\newcommand{\GC}{\mathcal{G}}

\newcommand{\MC}{\mathcal{M}}
\newcommand{\NC}{\mathcal{N}}
\newcommand{\OC}{\mathcal{O}}

\newcommand{\TC}{\mathcal{T}}

\newcommand{\VC}{\mathcal{V}}
\newcommand{\WC}{\mathcal{W}}

\newcommand{\Var}{{\rm Var}}

\renewcommand{\leq}{\leqslant}

\newcommand{\spn}{{\rm span}}

\renewcommand{\vec}[1]{\boldsymbol{#1}}  

\newcommand{\rholh}{\hat{\rho}}

\def\Hom{{\rm Hom}}

\newcommand{\al}{\alpha }
\newcommand{\bt}{\theta }
\newcommand{\gm}{\gamma }

\newcommand{\dl}{\delta }

\newcommand{\ep}{\epsilon}

\newcommand{\lm}{\lambda }
\newcommand{\Lm}{\Lambda }

\newcommand{\sg}{\sigma }

\newcommand{\om}{\omega }

\def\eye{\mathds{1}}

\def\U{\mathrm{U}}

\def\l{\lambda}

\newtheorem{theorem}{Theorem}
\newtheorem{lemma}{Lemma}

\newtheorem{corollary}{Corollary}

\newtheorem{definition}{Definition}

\usepackage{amssymb}
\usepackage{dsfont}

\newsavebox{\mstrut}

\newcommand{\rrangle}{%
    \rangle\kern-0.4\ht\mstrut\right\rangle%
}
\newcommand{\llangle}{%
    \langle\kern-0.4\ht\mstrut\left\langle%
}

\newcommand{\kett}[1]{\sbox{\mstrut}{$#1$}%
    \mathinner{\left.\left|{#1}\right\rrangle} }

\newcommand{\QST}{U_{\rm QST}}
\newcommand{\QS}{\rm Q.S.}
\newcommand{\CT}{\rm C.T.}
\newcommand{\QT}{\rm Q.T.}

\usepackage[caption = false]{subfig}
\usepackage{ytableau}
\usepackage{multirow}

\usepackage{algpseudocode}

\usepackage{tabularx,booktabs}
\newcolumntype{L}{>{$}l<{$}}  

\usepackage{titlesec}
\titleformat{\paragraph}[runin]
  {\normalfont\bfseries}
  {}
  {0pt}
  {} 

\begin{document}

\title{Practical framework for simulating permutation-equivariant quantum circuits}
\author{Su Yeon Chang}
\thanks{suyeon.chang97@gmail.com}
\affiliation{Theoretical Division, Los Alamos National Laboratory, Los Alamos, NM 87545, USA}

\author{Martin Larocca }
\affiliation{Theoretical Division, Los Alamos National Laboratory, Los Alamos, NM 87545, USA}
\affiliation{Quantum Science Center,  Oak Ridge, TN 37931, USA}

\author{M. Cerezo}
\thanks{cerezo@lanl.gov}
\affiliation{Information Sciences, Los Alamos National Laboratory, Los Alamos, NM 87545, USA}
\affiliation{Quantum Science Center,  Oak Ridge, TN 37931, USA}

\begin{abstract}

Understanding which subclasses of quantum circuits are efficiently classically simulable is fundamental to delineating the boundary between classical and quantum computation. In this context, it is well known that certain tasks based on permutation-equivariant unitaries--i.e., $n$-qubit circuits whose action commutes with the qubit-permuting representation of the symmetric group $S_n$--can be simulated in polynomial time. However, existing approaches scale as $\OC(n^7)$, and can rapidly become prohibitively expensive. In this work, we introduce a practical algorithm for simulating $S_n$-equivariant circuits under the assumption that the gate generators are at most $k$-local, with $k\in\mathcal{O}(1)$. The resulting method runs in $\mathcal{O}(n^{\omega+1})$ time for constant depth, where $\omega$ is the matrix multiplication exponent, significantly lowering the polynomial degree compared to existing techniques. Finally, we numerically validate this scaling by simulating the dynamical evolution of the Lipkin--Meshkov--Glick model, and show that for $n=512$ spins, a standard laptop can compute the concurrence of the evolved state in under two minutes.
\end{abstract}

\maketitle

\section{Introduction}

Efficient classical simulation of quantum circuits typically arises when the evolution is confined to an effectively low-dimensional subspace, so that the relevant degrees of freedom scale only polynomially with system size~\cite{ermakov2024unified,feng2025quon,cerezo2023does,cirstoiu2024fourier,teng2025leveraging,angrisani2024classically,angrisani2025simulating}.
This viewpoint unifies several of the best-understood simulable families: low-entanglement dynamics admits tensor-network descriptions with controlled bond dimension~\cite{Vidal2003Efficient}, $n$-qubit Clifford circuits correspond to linear symplectic updates of Pauli operators via a binary $\mathrm{Sp}(2n,\mathbb{Z}_2)$ action~\cite{gottesman1997stabilizer,gottesman1998heisenbergrepresentation,bravyi2016improved}, and matchgate/fermionic-linear-optics circuits act linearly on Majorana modes, yielding a representation of $\mathrm{SO}(2n)$~\cite{knill2001fermionic,jozsa2008matchgates,bravyi2002fermionic,brod2016efficient}.
Recent work has sought additional simulable families, both to sharpen the boundary between classical and quantum computation and to provide principled baselines for benchmarking quantum devices.
However, Lie-algebraic classification results indicate that such tractable cases are rare, as generic local gates generate quasi-universal dynamics with no polynomially-sized subspaces, thus precluding their classical simulation~\cite{ragone2023lie,wiersema2023classification, kokcu2024classification,kazi2022landscape}. This makes known simulable classes even more precious and rare, and further motivates improving surrogate simulation techniques for circuit classes that do possess underlying low-dimensional structures.

Another notable class of efficiently simulable circuits is that of $S_n$-equivariant (permutation-equivariant) evolutions~\cite{schatzki2022theoretical,kazi2023universality}, i.e., $n$-qubit unitaries that commute with the qubit-permuting representation of the symmetric group $S_n$.
For these circuits, Ref.~\cite{anschuetz2022efficient} established a polynomial-time classical simulation algorithm via a tensor-network contraction scheme whose worst-case scaling is $\mathcal{O}(n^7)$.
Despite the fact that this is a polynomial scaling---and thus allows us to view $S_n$-equivariant circuits as ``efficiently'' simulable in principle---the associated algorithms can still be prohibitively expensive at relatively modest problem sizes.
Notably, although such circuits have clear practical uses for simulating settings where labels carry no intrinsic meaning---such as models of indistinguishable particles, collective-spin Hamiltonians, and equivariant architectures for graph-structured (geometric) quantum machine learning~\cite{schatzki2022theoretical,meyer2023exploiting,verdon2019quantumgraph,skolik2022equivariant,mernyei2022equivariant,east2023all,gibbs2024exploiting,yadin2023thermodynamics,park2026hyqurp, kraus2013ground,chang2025primer}---there has been surprisingly little effort devoted to reducing their simulation cost.

In this work, we take a step in this direction by introducing an efficient and practical classical simulation framework for $S_n$-equivariant quantum circuits based on Schur--Weyl block decomposition and simple matrix multiplication, achieving an improved worst-case complexity of $\mathcal{O}(n^4)$.
We also present a numerical routine for computing the block-diagonal form of arbitrary $k$-local $S_n$-equivariant Pauli operators, for $k\in\mathcal{O}(1)$, with complexity $\mathcal{O}(n^2)$.
Beyond circuit-level simulation, we introduce a hybrid scheme that incorporates permutation-invariant classical shadows (PI-CS)~\cite{sauvage2024classical}, allowing one to collect data from a generic input quantum state in an initial measurement phase and then compute expectation values with our proposed techniques.
We demonstrate these methods with an end-to-end numerical study of the Lipkin--Meshkov--Glick model~\cite{lipkin1965validity,dusuel2004finite, dusuel2005continuous, vidal2006concurrence}, including state preparation, time evolution, and observable estimation to compute the spin concurrence~\cite{wootters1998entanglement}, a central quantity in quantum information theory.
Finally, leveraging the resulting large-scale simulations, we numerically investigate the convergence of the concurrence to a proposed thermodynamic limit~\cite{vidal2006concurrence,pal2023complexity}.

The paper is structured as follows. We start with a brief summary on the theory of permutation-equivariant quantum circuits in Section~\ref{sec:permutation_equivariant_circuit}, with a special focus on their underlying representation-theoretical properties. Then, in Section~\ref{sec:classical_simulation}, we introduce a classical simulation method for such circuits,  provided that their gates are generated by at most two-local Pauli operators. We also extend the study to arbitrary $k$-local Pauli generators, and present the time complexity required to evaluate their matrix expression in the Schur basis. In Section~\ref{sec:numerics}, we showcase our method to classically simulate properties of the Lipkin--Meshkov--Glick (LMG) model.  Our discussions are presented in Section~\ref{sec:discussion}.

\section{Preliminaries\label{sec:permutation_equivariant_circuit}}

The main goal of this work is to develop a new algorithm to efficiently classically simulate permutation-equivariant quantum circuits. Before proceeding to our main results, we find it convenient to first define a few important concepts.  

\subsection{$S_n$-equivariant operators}

We begin by denoting the $n$-qubit Hilbert space  as $\HC=(\mathbb{C}^2)^{\otimes n}$, $\mathbb{U}(\HC)$ the unitary group, and $\LC(\HC)=\HC\otimes \HC^*$ the space of linear operators acting on $\HC$. Then, let $S_n$ denote the symmetric group, i.e., the group whose elements are all the bijections from the set of $n$ elements onto itself, and $R:S_n\rightarrow \mathbb{U}(\HC)$  the so-called qubit-permuting representation. That is, given $\sg \in S_n$, we have 
\begin{equation}
    R(\sg) \bigotimes_{i=1}^n\ket{\psi_i} = \bigotimes_{i=1}^n \ket{\psi_{\sg^{-1}(i)}}\;. 
    \label{eq:sn_representation}
\end{equation}

From here, we define $S_n$-equivariant operators~\cite{meyer2023exploiting, ragone2022representation, schatzki2022theoretical, nguyen2022atheory} as follows
\footnote{A note on terminology. \emph{Equivariance} is a property of a map between $G$-modules: a linear map $\Phi:V\to W$ is $G$-equivariant if $\Phi\circ \rho_V(g)=\rho_W(g)\circ \Phi$ for all $g\in G$. Here $\rho_V$ and $\rho_W$ are group homomorphisms from $G$ to $\U(V)$ and $\U(W)$ respectively. In contrast, \emph{invariance} is a property of a vector $v$ in a $G$-module $V$: it is $G$-invariant if $\rho_V(g)v=v$ for all $g\in G$. Given that $\Hom(V,W)$ is itself a group module (one isomorphic to $V^* \otimes W$), a $G$-equivariant operator $A\in \Hom(V,W)$ corresponds precisely to a $G$-invariant vector in that $G$-module. For this reason, the literature sometimes uses ``$S_n$-invariant operator'' and ``$S_n$-equivariant operator'' interchangeably.}.

\begin{definition}[$S_n$-equivariant operator]\label{def:equivariant}
Let $R$ be the representation of $S_n$ defined in Eq.~\eqref{eq:sn_representation}. An operator $U\in \LC(\HC)$ is called $S_n$-equivariant if and only if $U\in \mathrm{comm}(S_n)$, where $\mathrm{comm}(S_n)$ denotes the (first-order) commutant of $R(S_n)$, defined as
\begin{equation}\label{eq:comm}
  \mathrm{comm}(S_n)=\{A\in \LC(\HC)\mid [R(g),A]=0,\ \forall g\in S_n\}\,.
\end{equation}
In particular, if $U$ is generated by a Hermitian operator $H$ via $U(t)=e^{-itH}$, then $U(t)$ is $S_n$-equivariant for all $t\in\mathbb R$ if and only if $H\in \mathrm{comm}(S_n)$.
\end{definition}

In particular, we can obtain a basis of $\mathrm{comm}(S_n)$ by symmetrizing a basis of $\LC(\HC)$, i.e., by twirling over $S_n$ all Pauli strings. Here, given an operator $A\in\LC(\HC)$, we define its twirl $\TC_{S_n}(A)$ as 
\begin{equation}
\TC_{S_n}(A)= \frac{1}{n!}\sum_{\sigma \in S_n} R(\sigma) A R(\sigma)\ad\,.
\end{equation}
From the previous, one can readily see that the basis of $\mathrm{comm}(S_n)$ will be given by the sum of all distinct Pauli strings that have $k_X$ $X$ symbols, $k_Y$
$Y$ symbols, and $k_Z$ $Z$ symbols~\cite{kazi2023universality}. Specifically, defining the operator
\begin{equation}
P_{\vec{k}} = X^{\otimes k_X} Y^{\otimes k_Y} Z^{\otimes k_Z}\eye^{\otimes n - k},
\label{eq:Pk}
\end{equation}
with weight vector $\vec{k} = (k_X, k_Y, k_Z)$ and  total locality $k = \abs{\vec{k}} = k_X + k_Y + k_Z$, then we have
\begin{equation}
    \mathrm{comm}(S_n)={\rm span}_{\mathbb{C}}\{\TC_{S_n}(P_{\vec{k}})\}_{\vec{k}}\,.
\end{equation}
Here, we refer to $\TC_{S_n}(P_{\vec{k}})$ as a symmetrized Pauli string.

Note that for fixed $k$ the only freedom is on the triplet $(k_X, k_Y, k_Z)$. Hence, the number of nonnegative integer solutions to $k =  k_X + k_Y + k_Z$
is the stars-and-bars count
\begin{equation}
    N(k)=\binom{k+2}{2}\,.
\end{equation}
Thus, summing $k$ from $0$ to $n$ leads to 
\begin{equation}\label{eq:dim-com}
    \dim(\mathrm{comm}(S_n))=\sum_{k=0}^nN(k)=\binom{n+3}{3}={\rm Te}_{n+1}\,,
\end{equation}
where ${\rm Te}_{x}$ denotes the $x$-th Tetrahedral number. 

In what follows, we will consider the task of classically simulating an expectation value of the form
\begin{equation}\label{eq:expec}
    f(\rho)=\Tr[U\rho U\ad O]\,,
\end{equation}
where $U$ is an $S_n$-equivariant unitary, and $O$ an $S_n$-equivariant Hermitian measurement operator. In turn, we will assume that $U$ is a unitary quantum circuit composed of $L$ gates as 
\begin{equation}
    U = \prod_{\ell = 1}^L  e^{-i H_\ell}\,,   \label{eq:U_expression}
\end{equation}
where $H_\ell$ are $S_n$-equivariant Hermitian gate generators. This general form encompasses parametrized circuits in variational quantum algorithms and quantum machine learning~\cite{cerezo2020variationalreview,bharti2022noisy,chang2025primer}, as well as digitized adiabatic evolutions~\cite{albash2018adiabatic} or Trotterized quantum dynamics.

Crucially, we note that $f(\rho)$ is invariant under permutations acting on the initial state~\cite{meyer2023exploiting, nguyen2022atheory, schatzki2022theoretical}, as 
\begin{align}
     f(R(\sigma) \rho R\ad(\sigma))&=\Tr[U  R(\sigma) \rho R\ad(\sigma)  U\ad  O ] 
    \nonumber\\ & = \Tr[R(\sigma)  U  \rho  U\ad O R\ad(\sigma) ] 
    \nonumber \\ & = \Tr[U \rho U\ad O]=f(\rho)\,,
\end{align}
where we use the fact that $U$ and $O$ commute with $R(\sigma)$. Hence,  the final expectation value remains unchanged under the action of any group element $\sigma \in S_n$ on the input state $\rho$.

\begin{figure*}[t]
    \subfloat[$U$]{\includegraphics[width = 0.33\textwidth]{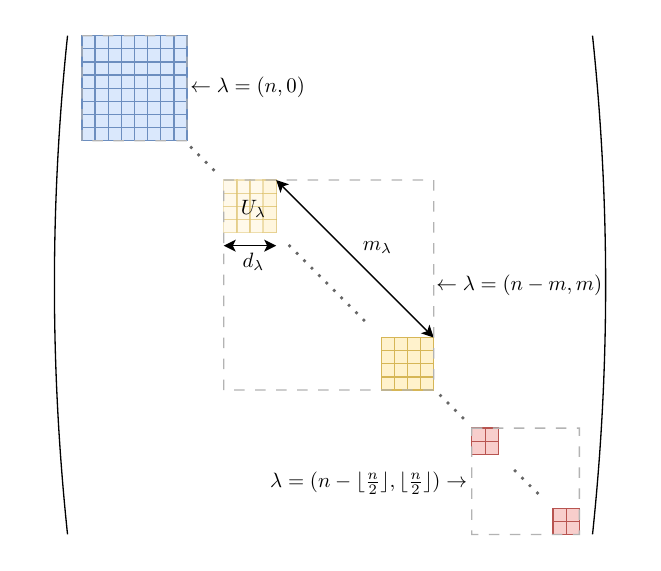}}
    \subfloat[$R(\sg)$]{\includegraphics[width = 0.33\textwidth]{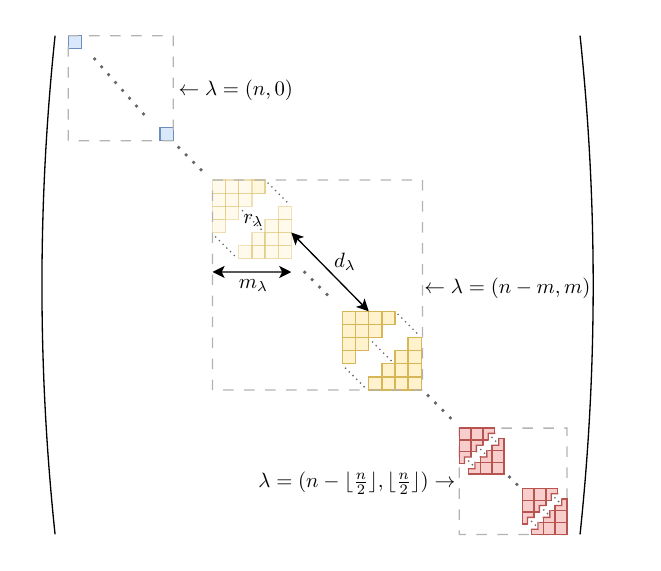}}
    \subfloat[\label{fig:rho_symmetric}$\rho$]{\includegraphics[width = 0.33\textwidth]{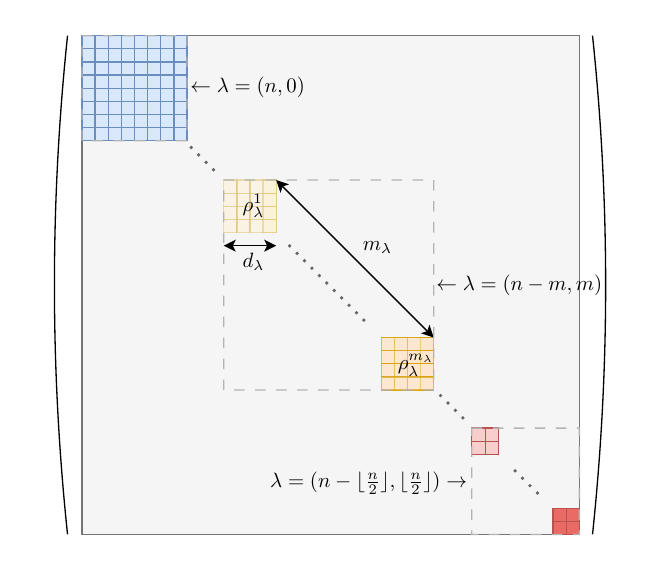}}
\caption{\textbf{Schematic representation of matrices in the basis that block-diagonalizes the group's action.} Each of the dashed blocks corresponds to the irreps labeled by an index $\lm = (n-m, m)$ where $m = 0,\dots, \lfloor \frac{n}{2}\rfloor$. The irreps $\lm$ of $U$ are of the dimension $d_\lm$ and multiplicity $m_\lm$, while the irrep of $R(\sg)$ have a dimension $m_\lm$ and multiplicity $d_\lm$. Note that as illustrated by the gray region in (c), the initial states $\rho$ need not be block-diagonal as it is not necessarily $S_n$-equivariant.\label{fig:rep_theory}}
\end{figure*} 

Our general strategy for simulating $f(\rho)$ in Eq.~\eqref{eq:expec} is to Heisenberg evolve the measurement operator through the layers of the circuit. Crucially, we will leverage the fact that due to their $S_n$-equivariance, these operators have an underlying polynomially-sized structure  (as expected from Eq.~\eqref{eq:dim-com}) that can be understood and exploited with tools from representation theory. 

In particular, we recall Maschke's theorem~\cite{fulton1991representation}, which states that there exists a change of basis, known as the Schur transform, under which the finite-dimensional representation of a group can be expressed as a direct sum of irreducible representations (irreps) 
\begin{equation}\label{eq:irrep-decomp}
    R(\sg\!\in \!S_n) \cong \bigoplus_{\lm}\!\!\!\!\bigoplus_{\nu=1}^{{\rm dim}(r_\lm)}\!\!\!r_\lm(\sg) = \bigoplus_\lm r_\lm(\sg) \otimes \eye_{{\rm mult}(r_\lm)},
\end{equation}
where $\lm$ corresponds to the irreps of $S_n$, $r_\lm$ the irreps of dimension ${\rm dim}(r_\lm)$ and ${\rm mult}(r_\lm)$ its multiplicity. One can prove that~\cite{fulton1991representation}
\begin{equation}
    {\rm dim}(r_\lm) = \frac{n!(n-2m + 1)!}{(n - m + 1)! m! (n-2m)! }\,,
    \label{eq:m_lm}
\end{equation}
and
\begin{equation}
    {\rm mult}(r_\lm) = n - 2m + 1\in\OC(n)\;, 
    \label{eq:d_lm}
\end{equation}
from which we recover $\sum_{\lm} {\rm dim}(r_\lm)\cdot {\rm mult}(r_\lm) = 2^n$. Importantly, the irreps $\lm$ of $S_n$ appearing in Eq.~\eqref{eq:irrep-decomp} can be visualized with a two-row Young diagram~\cite{fulton1997young} and parameterized with a non-negative integer $m$, as $\lm \equiv \lm(m) = (n-m, m)$ where $m = 0,\dots, \lfloor \frac{n}{2} \rfloor$\footnote{Note that the irreps $\lm$ are only labeled by two values $\lm = (\lm_1, \lm_2)$ as we focus on qubits. However, if the quantum system is composed of $D$-dimensional qudits, each irrep is labeled by $D$ values, $\lm_1,\dots, \lm _D$ and the Young diagram will have $D$ rows.}. 

In the same basis under which the representation is block-diagonal, we can also express the $S_n$-equivariant unitaries $U$ and measurement operator $O$ as
\begin{equation}
    U  \cong \bigoplus_{\lm} \eye_{m_\lm} \otimes U_\lm \;,
    \label{eq:U_block}
\end{equation}
and
\begin{equation}
    O \cong \bigoplus_{\lm} \eye_{m_\lm} \otimes O_\lm\;, 
    \label{eq:O_block}
\end{equation}
where $U_\lm\ $ and $O_\lm$ are irrep blocks labeled by $\lm$ with size $d_\lm \times d_\lm$ and multiplicity $m_\lm$. Since Eq.~\eqref{eq:O_block} holds for all Hermitian $S_n$ equivariant generators of the circuit in Eq.~\eqref{eq:U_expression}, then we find that each irrep block of the circuit $U_\lm$ can be computed independently as 
\begin{equation}
    U_\lm  = \prod_{\ell = 1}^L e^{-i (H_\ell)_\lm}\,, 
    \label{eq:U_lambda}
\end{equation}
where we expressed $H_\ell \cong \bigoplus_\lm \eye_{m_\lm} \otimes (H_\ell)_\lm$ with $(H_\ell)_\lm$ being a Hermitian matrix of size $d_\lm \times d_\lm$. Then, we note that the block-diagonal forms of Eq.~\eqref{eq:irrep-decomp} and Eqs.~\eqref{eq:U_block}--\eqref{eq:O_block} make it evident that the commutation relations of Eq.~\eqref{eq:comm} must hold. In addition, this also means that the matrices $U_\lambda$ acts on spaces of dimension $d_\lm = {\rm mult}(r_\lm)$, while the multiplicity of each operator block is $m_\lm = {\rm dim}(r_\lm)$. Thus, relative to the representation-theoretic convention of Eq.~\eqref{eq:irrep-decomp} for $S_n$, the roles of dimension and multiplicity are interchanged for $S_n$-equivariant operators.  In Fig.~\ref{fig:rep_theory}, we illustrate the decomposition of $R(\sg)$ and $U$.

Finally, we note that under the action of the $S_n$-equivariant quantum circuit, the Hilbert space decomposes as 
\begin{equation}
    \HC \cong \bigoplus_{\lm} \Cbb^{m_\lm} \otimes \HC_\lm = \bigoplus_\lm \bigoplus_{p_\lm= 1}^{m_\lm} \HC_\lm^{p_\lm}\;,
\end{equation}
where each subspace $\HC^{p_\lm}_\lm$ of dimension $d_\lm$  is spanned by the so-called Schur basis $\ket{\lm, q_\lm, p_\lm}$, i.e.,
\begin{equation}
    \HC^{p_\lm}_\lm = \spn\left\{\;\ket{\lm, p_\lm, q_\lm}\;\right\}^{d_\lm - 1}_{q_\lm = 0}\;. 
\end{equation}
The  states $\ket{\lm, q_\lm, p_\lm}$ are labeled with three indices, the irrep label $\lm$, the multiplicity label $p_\lm = 0,\dots, m_\lm-1$, and the dimension label  $q_\lm = 0,\dots,d_\lm-1$. In the next section, we provide additional useful details for the Schur basis.

\begin{table*}[t]
\renewcommand{\arraystretch}{1.7}

    \centering
    \begin{tabular}{c|c|c|c}
             Partition &  Young Tableau ($p_\lm = p_\lm^0$) & Dimension label, $q_\lm$  & Schur basis \\
         \hline 
        \multirow{5}{*}{$\lm = (4, 0)$}   & \multirow{ 5}{*}{ \begin{ytableau} 1 & 2 & 3 & 4 \end{ytableau} } & 0 &  {\color{blue}$\ket{0000}$} \\ 
        & & 1 & {\color{blue}$\frac{1}{2}\left(\ket{0001} + \ket{0010} + \ket{0100} + \ket{1000} \right) $}\\ 
        & & 2 & {\color{blue}$\frac{1}{\sqrt{6}}\left(\ket{0011} + \ket{0110} + \ket{1100} + \ket{1001} + \ket{0101} + \ket{1010} \right) $}\\ 
        & & 3 & {\color{blue}$\frac{1}{2}\left(\ket{0111} + \ket{1011} + \ket{1101} + \ket{1110} \right) $}\\ 
        & & 4 & {\color{blue}$\ket{1111}$}\\ 
        \hline
         \multirow{3}{*}{$\lm = (3, 1)$} &   \multirow{3}{*}{\begin{ytableau} 1 & 3 & 4 \\ 2  \end{ytableau} } & 0 & {\color{red}$\frac{1}{\sqrt{2}}(\ket{01} - \ket{10})$}{\color{blue}$\ket{00}$} \\
        & & 1 & {\color{red}$\frac{1}{2}(\ket{01} - \ket{10})$}{\color{blue}$(\ket{01} + \ket{10})$} \\    
        & & 2& {\color{red}$\frac{1}{\sqrt{2}}(\ket{01} - \ket{10})$}{\color{blue}$\ket{11}$ }  \\[3pt]   
        \hline 
        \vspace{5pt} $\lm = (2, 2)$ & \vspace{1cm} \begin{ytableau} 1 & 3 \\ 2 & 4 \end{ytableau} & 0 & {\color{red}$\frac{1}{2}\left(\ket{01} - \ket{10}\right)\left(\ket{01} - \ket{10}\right)$}\\  
    \end{tabular}
    \caption{\textbf{Partitions of 4 qubits enumerated by Young tableaux and their corresponding Schur basis.} For each partition $\lm = (n-m, m)$, a representative canonical Young tableau ($p_\lm = p_\lm^0$) is shown with the corresponding Schur basis state, expressed in the computational basis. As shown above, the Schur basis consists of an antisymmetric state (in red) on $2m$ qubits and a symmetric state (in blue) on $n - 2m$ qubits defined as Eq.~\eqref{eq:schur_basis_expression}. The dimension label $q_\lm$ also characterizes the Hamming weight in the symmetric part of the basis.  }
    \label{tab:Schur_basis}
\end{table*}

\subsection{Schur basis\label{sec:schur_basis}}
As previously mentioned, the Schur basis is related to the computational basis through the Schur transform, denoted as $\QST$. Accordingly, we can write~\cite{bacon2005quantum}:
\begin{equation}
 \ket{\lm, p_\lm, q_\lm} =  \sum_{i_1,\dots,i_n}[\QST]^{\lm, p_\lm, q_\lm}_{i_1,\dots, i_n} \ket{i_1\dots i_n}
\end{equation}
where $\ket{i_1, \dots, i_n}$ is a computational basis state labeled by a bitstring $i_1  \cdots i_n \in \{0, 1\}^{n}$.  The entries of $\QST$ can be constructed using Young diagrams and Young tableaux.

A Young tableau is defined as a filling of the Young diagram. In particular, it is called standard if its entries increase strictly each row and strictly down each column with integers in the set $\{1,..,n\}$. For each standard Young tableau, we can associate a multiplicity $p_\lm$ ~\cite{fulton1997young}. Then, given a Young tableau $p_\lm$, we define ${\rm Row}(p_\lm)$ and ${\rm Col}(p_\lm)$ as the subgroup of permutations obtained by permuting the integers within each row and column of $p_\lm$, respectively. In Table~\ref{tab:Schur_basis}, we present an example of Young tableaux for 4 qubits, corresponding to three distinct partitions. Each partition defines a Young tableau and labels a distinct irrep, $\lm$. 

Throughout this work, among the different standard Young tableaux,  we refer to the Young tableau presented in Table~\ref{tab:Schur_basis} as the canonical tableau, denoted by the multiplicity label $p_\lm^0$. This choice follows the convention of Appendix A in Ref.~\cite{anschuetz2022efficient}, where the corresponding Young tableau is filled in a fixed order---first column-wise, then row-wise. Although we show only one specific Young tableau corresponding to a particular multiplicity label $p_\lm$, other Young tableaux corresponding to the same irrep $\lm$ can be constructed by permuting the qubit indices within the boxes, yielding different multiplicity labels $p_\lm$.

Using the qubit-defining representation $R$ of the permutation action $\sg \in S_n$, introduced in Eq.~\eqref{eq:sn_representation}, and given a Young tableau $p_\lm$, we define the Young symmetrizer~\cite{bacon2005quantum} as the operator  
\begin{align}
\mathbf{S}_{p_\lm} \! & = \frac{m_\lm}{n!} a(p_\lm) b(p_\lm) 
\label{eq:young_symmetrizer}
\end{align}
where $a(p_\lm)$ symmetrizes all the components within each row of the tableau and $b(p_\lm)$ antisymmetrizes all the components within each column~\cite{howe2022irreducible, fulton1997young}. Explicitly, these operators are given by
\begin{align}
    a(p_\lm) & = \left(\sum_{\sg_r\in {\rm Row}(p_\lm)}\mkern-6mu R(\sg_r)\right), \\ 
    b(p_\lm) & =  \Bigg(\sum_{\sg_c\in {\rm Col}(p_\lm)}   {\rm sgn}(\sg_c) R(\sigma_c) \Bigg)  
\end{align}
where $ {\rm sgn}(\sg)$ is the parity of the permutation $\sigma$. 

The Young symmetrizer  projects the Hilbert space $\HC$ onto the subspace $\HC^{p_\lm}_{\lm}$, which is spanned by the orthonormal Schur basis $\{ \ket{\lm, p_\lm, q_\lm} \}^{d_\lm - 1}_{q_\lm = 0}$. 
By applying $\mathbf{S}_{p_\lm}$ to the standard computational basis of $\HC$, its image corresponds to the desired copy of the irrep subspace labeled by $\lm$. From these vectors, one can explicitly construct the Schur basis associated with a given Young tableau $p_\lm$. In particular, choosing the canonical Young tableau  yields the basis states: 
\begin{equation}
    \ket{\lm, p^0_\lm, q_\lm } = \ket{\Psi}^{\otimes m} \otimes \ket{\Sigma^{(n-2m)}_{q_\lm}}\,,    \label{eq:schur_basis_expression}
\end{equation}
where  $\ket{\Psi}$ is the two-qubit antisymmetric singlet state given by 
\begin{equation}
    \ket{\Psi} =  \frac{1}{\sqrt{2}}\left(\ket{01} - \ket{10}\right)\,,
    \label{eq:antisymmetric}
\end{equation}
and $\ket{\Sigma^{(n-2m)}_{q_\lm}}$ is the Dicke state on $n - 2m$ qubits with Hamming weight $q_\lm$ for $q_\lm = 0,\dots,d_\lm -1$~\cite{bartschi2019deterministic}. 
More precisely, the Dicke state $\ket{\Sigma^{(n-2m)}_{q_\lm}}$ is defined as the normalized, equal superposition of all computational basis states on $(n-2m)$ qubits with Hamming weight $q_\lm$, representing the fully symmetric part
\begin{equation}
    \ket{\Sigma_{q_\lm}^{(n-2m)}} = \binom{n-2m}{q_\lm}^{-\frac{1}{2}} \sum_{\substack{x \in \{0, 1\}^{n-2m} \\ {\rm HW}(x) = q_\lm}}\ket{x}\,.
\end{equation}
Here, $\ket{x}$ denotes a computational basis state and ${\rm HW}(x)$ its Hamming weight.

Thus, for the Young tableau with label $\lm = (n-m, m)$, the canonical Schur basis consists of $m$ antisymmetric singlet pairs (occupying $2m$ qubits) arising from the column antisymmetrization $b(p_\lm)$, while the remaining $n-2m$ qubits span a fully symmetric subspace generated by the row symmetrization $a(p_\lm)$. An explicit example of such a basis constructed via the Young symmetrizer is shown in Table~\ref{tab:Schur_basis}. 

The total number of Schur basis states to be constructed scales as $\sum_\lm d_\lm \approx (n+2)^2/4$. As the system size $n$ grows, explicitly constructing an orthogonal Schur basis set across different multiplicity sectors becomes increasingly cumbersome. However, since Eqs.~\eqref{eq:U_block} and~\eqref{eq:O_block} show that the same irrep blocks are repeated across the multiplicity indices, it is sufficient to work with the canonical Schur basis when constructing $U_\lm$ and $O_\lm$.

\section{Classical Simulation of $S_n$-equivariant unitaries\label{sec:classical_simulation}}

In this section, we will present an efficient algorithm that can be used to estimate  $f(\rho)$. To begin, let us note that Eqs.~\eqref{eq:U_block} and~\eqref{eq:O_block} imply that this function can be expressed as a sum of the trace terms on the individual irrep blocks. That is, 
\begin{align}
    f(\rho)
    & = \sum_{\lm}\sum_{p_\lm = 1}^{m_\lm}\Tr[\rho_{\lm}^{p_\lm} U_\lm\ad O_\lm U_\lm] \nonumber \\ 
    & = \sum_{\lm}\Tr[\left(\sum_{p_\lm = 1}^{m_\lm} \rho^{p_\lm}_\lm\right)\widetilde{O}_\lm]\;,   
    \label{eq:loss_block}
\end{align}
with $\widetilde{O}_\lm = U_\lm \ad O_\lm U_\lm$ the Heisenberg evolved operator within each irrep. In this equation, $\rho_\lm^{p_\lm}$ denotes the projection of $\rho$ onto the subspace $\HC_{\lm}^{p_\lm}$. Note that the initial state $\rho$ is not necessarily $S_n$-equivariant, and therefore cannot, in general, be block-diagonalized in the Schur basis, as illustrated in Fig.~\ref{fig:rho_symmetric}. This is precisely why we work in the Heisenberg picture of backward-in-time evolution of measurement operators.

In what follows, we divide the task of estimating $f(\rho)$ into two parts. First, we will  focus on the evaluation of $\widetilde{O}_\lm$, which will require us to compute the projection of $O$ and each gate generator $H_\ell$ into the irreps and Heisenberg evolve the measurement operator. Second, we will discuss  how to obtain $\rho_\lm^{p_\lm}$.

For the first task, one needs to block-diagonalize the circuit generator and the measurement operators in the Schur basis. 
As a solution to this problem, Ref.~\cite{anschuetz2022efficient} proposed an algorithm to compute the matrix elements of any arbitrary $S_n$-equivariant operator using tensor network diagram contraction, with computational complexity  of $\order{n^7}$. Although this complexity scales polynomially with respect to the system size $n$, it can rapidly become prohibitively expensive to deploy at even modest system sizes. Our approach is to instead focus on block-diagonalizing in the basis obtained from the canonical Young tableau for specific $S_n$-equivariant operators that are physically motivated, rather than trying to work with arbitrary ones (see Refs.~\cite{sauvage2024classical,schatzki2022theoretical} for similar approaches).   

Our decision is motivated by the fact that, in practice, most of the $S_n$-equivariant generators in $\GC_{S_n}$ are at most two-local~\cite{schatzki2022theoretical, cervia2021lipkin, li2024enforcing, zheng2025toward}. For instance, if we are implementing a Trotterized evolution, the terms in a physical Hamiltonian are one- or two-bodied. Hence, one possible choice for the set of local $S_n$-equivariant generators $\GC$ is 
\begin{equation}
    \GC_{S_n} = \left\{\frac{1}{n}\sum_{j=1}^n P_j\;, \frac{2}{n(n-1)}\sum_{1\le k < j \le n} P_jP_k\right\}\,,
\label{eq:Sn_generators}
\end{equation}
where $P_j \in \{X, Y, Z\}$ is a fixed Pauli matrix at the $j$-th qubit.
Similarly, we can also consider a set of  equivariant observables $\chi$, given by
\begin{equation}\chi = \GC_{S_n}\cup \left\{\bigotimes_{j=1}^nP_j\;\right\}\,,
\label{eq:Sn_observables}
\end{equation}
which also contains the global measurement operators.

Here, we also find it important to note that the choice of generators $\GC_{S_n}$, while being local, leads to subspace-universal circuits within each invariant subspace. That is, unitaries as in Eq.~\eqref{eq:U_expression} with generators taken from $\GC_{S_n}$ will be subspace controllable and can generate any arbitrary unitaries in each subspace $\lm$ (see Theorem 1 in Ref.~\cite{kazi2023universality}).

\subsection{$S_n$-equivariant operators in Schur basis \label{sec:Sn_operators}}
Let $A$ be an $S_n$-equivariant operator such as those in Eqs.~\eqref{eq:Sn_generators} and \eqref{eq:Sn_observables}. As previously mentioned, in the Schur basis, $A$ admits the block-diagonal form $A \cong \bigoplus_\lm \eye_{m_\lm} \otimes A_\lm$, where the matrix entries at position $(q_\lm, q'_\lm)$, denoted as  $\left(A_\lm\right)_{q_\lm, q'_\lm}$, can be found as the inner product between $A$ and two Schur basis states within a fixed multiplicity. In particular, we will focus on the case when $p_\lm = p_\lm^0$, so that
\begin{equation}
\left(A_\lm\right)_{q_\lm, q'_\lm} = \bramatket{\lm, p^0_\lm, q_\lm}{A}{\lm, p^0_\lm, q'_\lm}. 
\end{equation}
Thus, we can explicitly compute these matrix elements by considering the action of $A$ on the symmetric and antisymmetric parts of the canonical tableau. 

We first consider $A$ to be solely composed of Pauli-$Z$ operators. 
As the canonical Schur basis $\ket{\lm, p^0_\lm, q_\lm}$ is characterized by the Hamming weight of the computational basis in the symmetric part, given by $h_{\lm,q_\lm} = q_\lm + m$, Pauli-$Z$ operators act diagonally in the Schur basis. Using the eigenvalues of $A$ expressed as a function of the Hamming weight in the computational basis (c.f. Lemma 4 in Ref.~\cite{kazi2023universality}), the matrix elements in each irrep  take the form 
\begin{align}
\left(A_\lm\right)_{q_\lm q'_\lm} =         \frac{n^2 - n - 4nh_{\lm,q_\lm}  + 4h_{\lm,q_\lm}^2}{n(n-1)}\dl_{q_\lm q'_\lm}\,,
\end{align}
for $A = \frac{2}{n(n-1)}\sum_{j < k} Z_j Z_k$;
\begin{align}
\left(A_\lm\right)_{q_\lm q'_\lm}    = 
    \
        \left(1 - \frac{2 h_{\lm,q_\lm}}{n}\right) \dl_{q_\lm q'_\lm}\,,
\end{align}
for $A = \frac{1}{n}\sum_{i} Z_i$; and 
\begin{align}
\left(A_\lm\right)_{q_\lm q'_\lm}    = 
    (-1)^{h_{\lm,q_\lm}} \dl_{q_\lm q'_\lm} \,,
\end{align} 
for $A = \bigotimes_{i=1}^n Z_i$.

Next, we analyze the operators composed of Pauli-$X$ operators, beginning with the operator $A = X^{\otimes n}$. The action of $A$ simply flips each of the computational basis bits, while changing the sign of the antisymmetric state. As a result, we find
\begin{align}
    A\Big(\ket{\Psi}^{\otimes m} & \otimes \ket{\Sigma^{(n-2m)}_{q_\lm}} \Big) \nonumber \\ & = (-1)^m \ket{\Psi}^{\otimes m } \otimes \ket{\Sigma^{(n-2m)}_{n-2m -q_\lm}}\;.
\end{align}
Therefore, each block $\lm$ has an anti-diagonal structure, with entries consisting solely of $1$ or $-1$: 
\begin{equation}
    \left(A_\lm\right)_{q_\lm q'_\lm} = (-1)^{m}\dl_{q_\lm, n - 2m - q'_\lm}\,.
\end{equation}
Then, taking $A = \frac{1}{n}\sum_{i=1}^n X_i$, the antisymmetric part $\ket{\Psi}$ always vanishes, and the action on the symmetric part shifts the Hamming weight by $\pm 1$, yielding
\begin{align}
\left(A_\lm\right)_{q_\lm q'_\lm}    \!= \!\frac{\al^{-}(\lm, q_\lm)}{n}\dl_{q_\lm -1, q'_\lm}  \! +\! \frac{\al^{+}(\lm, q_\lm)}{n} \dl_{q_\lm + 1, q'_\lm}\,,  
\end{align}
where we define the coefficients 
\begin{align}
    \al^{-}(\lm, q_\lm) & = \sqrt{q_\lm (n - 2m - (q_\lm - 1))}\,, \\ 
    \al^{+}(\lm, q_\lm) & = \sqrt{(q_\lm + 1) (n - 2m - q_\lm)}\,. 
\end{align} 
In this case, each block $\lm$ will be a tridiagonal matrix with the main diagonal hollow and $\order{2n}$ non-zero off-diagonal elements.  

The previous examples illustrate a general pattern: Schur blocks of at most two-local $S_n$-equivariant operators, as well as the global $\bigotimes_{i=1}^n P_i$ ones, are sparse---being diagonal, anti-diagonal or banded---with at most $\order{n}$ entries per block. As a result, all blocks $A_\lm$ can be constructed with total classical cost $\order{n^2}$. 
The representations for all other operators of interest, as well as additional details, are provided in Appendix~\ref{adx:explicit_blocks}.

\subsection{Classical simulation of $S_n$-equivariant circuits}

Using the results from the previous section, we now have a practical way to compute the Heisenberg-evolved measurement operators $\widetilde{O}_\lm$ on each of the size $d_\lm \times d_\lm$ blocks, for all $\left\lfloor \frac{n}{2}\right\rfloor +1 $ distinct irrep labels $\lm$.

Specifically, we begin by projecting the observable $O$ and each circuit generator onto each irrep blocks using the techniques previously described. Then, let us focus on a single irrep. If we compute the single layer exponential $e^{-i (H_\ell)_\lm}$ and use matrix product in the Heisenberg evolution, the naive cost would scale as $\order{L  d_\lm^3}$ with dominating matrix exponentiation cost. For two-local symmetrized Pauli strings, the irreducible blocks $(H_\ell)_\lambda$ are, in the worst case, banded Hermitian matrices with constant bandwidth. Hence, they can be diagonalized in $\order{d_\lm^2}$ time, allowing us to express them as
\begin{equation}\label{eq:diag}
    (H_\ell)_\lm = Q_{\ell, \lm} \Lm_{\ell, \lm} (Q_{\ell,\lm})\ad,
\end{equation} with $Q_{\ell, \lm}$ unitary and $\Lm_{\lm, \ell}$ diagonal. Then, the unitary in the irrep is obtained as
\begin{equation}
    e^{-i (H_\ell)_\lm}
      = Q_{\ell,\lm}\,e^{-i \Lm_{\ell,\lm} } \left(Q_{\ell, \lm} \right)\ad\,,
\end{equation}
where we recall that the exponentiation of a diagonal matrix has a cost of $\order{d_\lm}$. Thus, given access to Eq.~\eqref{eq:diag}, we can compute a single layer Heisenberg evolution in time $\order{d_\lm^\omega}$ with  $\omega$  the matrix multiplication exponent, taking a value between 2.37 and 3 depending on the method used. Summing over irreps, layers, and taking into account the cost of diagonalizing all generators leads to the following theorem.

\begin{theorem}[Complexity of Heisenberg evolution\label{thm:time_complexity}]
    Consider an $S_n$-equivariant circuit $U$ with a set of one- and two-local generators defined as in  Eq.~\eqref{eq:Sn_generators}, and an $S_n$-equivariant observable $O$ from the set Eq.~\eqref{eq:Sn_observables}.  The Heisenberg-evolved operator $U\ad O U$ can be classically evaluated in the Schur basis with time cost scaling as 
    \begin{equation}
        N_{\CT } \in \order{n^3 + L n^{\om + 1}}\;,
        \label{eq:time_complexity}
    \end{equation} 
    and the memory cost scaling as 
    \begin{equation}
        N_{\rm memory} \in \order{n^3}\;. 
    \end{equation}
\end{theorem}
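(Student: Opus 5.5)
The argument is a cost accounting over irreps, using the sparsity results of Section~\ref{sec:Sn_operators} and the diagonalization identity Eq.~\eqref{eq:diag}. The block structure Eqs.~\eqref{eq:U_block}--\eqref{eq:O_block} and Eq.~\eqref{eq:U_lambda} means it suffices to compute $\widetilde{O}_\lm = U_\lm\ad O_\lm U_\lm$ independently for each of the $\lfloor n/2\rfloor+1$ irrep labels $\lm=(n-m,m)$. Here $d_\lm = n-2m+1 \leq n+1$, so every per-irrep cost will be a function of $d_\lm$, and the total is obtained by summing over $m$. I will use the bound $\sum_{m=0}^{\lfloor n/2\rfloor} d_\lm^{a} \in \order{n^{a+1}}$ for any fixed $a\geq 0$, which follows since the $d_\lm$ range over roughly $n/2$ values, all at most $n+1$.

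\textbf{Step 1 (constructing blocks).} For every generator $H_\ell$ from $\GC_{S_n}$ and the observable $O\in\chi$, Section~\ref{sec:Sn_operators} and Appendix~\ref{adx:explicit_blocks} show that each block is diagonal, anti-diagonal, or banded with constant bandwidth. Each such block therefore has $\order{d_\lm}$ nonzero entries with closed-form values, costing $\order{d_\lm}$ per block and $\order{n^2}$ per operator summed over irreps. Since $\GC_{S_n}$ is a fixed finite set of constant size (a fixed Pauli choice, up to $1$- and $2$-body symmetrized terms), only $\order{1}$ distinct generator shapes exist. A circuit layer $e^{-iH_\ell}$ only rescales one of these shapes by a parameter, i.e.\ $H_\ell = \theta_\ell G$ with $G\in\GC_{S_n}$. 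I will state this convention explicitly, or, more safely, not rely on it and just diagonalize each layer's block afresh, as below.

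\textbf{Step 2 (diagonalization).} Each $(H_\ell)_\lm$ is a banded Hermitian matrix of constant bandwidth. I will reduce it to tridiagonal form in $\order{d_\lm^2}$ operations by bandwidth-preserving Givens rotations. I will then diagonalize the tridiagonal form, including the eigenvectors $Q_{\ell,\lm}$, in $\order{d_\lm^2}$ via divide-and-conquer or MRRR. If instead I want a conservative bound independent of these fast eigensolvers, I will diagonalize each of the $\order{1}$ generator types once per irrep at $\order{d_\lm^3}$. Summing over irreps then gives $\order{n^4}$ once, which is still within $\order{n^3 + Ln^{\om+1}}$ for $L\geq 1$ because $n^4\leq n^{\om+1}$ fails only if $\om<3$, so this fallback needs care. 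The cleaner route is to reuse the eigendecomposition of the fixed generator $G$ across layers, which cost $\order{\sum_\lm d_\lm^2}=\order{n^3}$ under the fast tridiagonal solver. This is the origin of the $n^3$ term.

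\textbf{Step 3 (Heisenberg evolution).} Per layer and irrep, $\widetilde{O}_\lm \mapsto e^{i(H_\ell)_\lm}\widetilde{O}_\lm e^{-i(H_\ell)_\lm}$ equals $Q\, e^{i\Lm} (Q\ad \widetilde{O}_\lm Q) e^{-i\Lm} Q\ad$. This requires four dense matrix products at $\order{d_\lm^\om}$ and diagonal scalings at $\order{d_\lm^2}$. Summing over irreps gives $\order{n^{\om+1}}$ per layer, hence $\order{Ln^{\om+1}}$ in total. Adding the Step 2 cost gives Eq.~\eqref{eq:time_complexity}.

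\textbf{Memory.} At any time we store the dense $\widetilde{O}_\lm$ together with the current $Q_{\ell,\lm}$ (or the $\order{1}$ cached eigenbases) for all $\lm$. This is $\order{\sum_\lm d_\lm^2}=\order{n^3}$. The sparse generator blocks take only $\order{n^2}$, and layers are processed sequentially, so no $L$ dependence arises.

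\textbf{Main obstacle.} The delicate step is justifying the $\order{d_\lm^2}$ diagonalization with eigenvectors, which is what keeps the one-time term at $n^3$. I will first cite the standard banded-to-tridiagonal reduction plus divide-and-conquer. If that is deemed insufficient, I will fall back on the observation that the $1$- and $2$-local symmetrized generators act as polynomials of degree at most $2$ in the collective spin operators on the spin-$(n-2m)/2$ irrep. For the $1$-local generators the eigenvectors are then known analytically as Wigner rotations of the $J_z$ basis. These can be generated by recurrences in $\order{d_\lm^2}$, which gives the $n^3$ term at least for $1$-local generators.
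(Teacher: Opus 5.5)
Your proposal follows the paper's own accounting essentially step for step. There is a one-time $\mathcal{O}(d_\lambda^2)$ eigendecomposition of each fixed generator block, summing to $\mathcal{O}(n^3)$, and the paper likewise assumes the generators come from a fixed set. Then come $\mathcal{O}(d_\lambda^\omega)$ dense products per layer and irrep, summing to $\mathcal{O}(Ln^{\omega+1})$, and $\mathcal{O}(\sum_\lambda d_\lambda^2)=\mathcal{O}(n^3)$ storage.

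The paper merely asserts the $\mathcal{O}(d_\lambda^2)$ eigendecomposition for constant-bandwidth blocks. Your Givens banded-to-tridiagonal route would not achieve it once the $\Theta(d_\lambda^2)$ rotations are accumulated into a dense $Q_{\ell,\lambda}$, but no reduction is needed. The two-body blocks couple $q_\lambda$ only to $q_\lambda$ and $q_\lambda\pm 2$, so they split by parity into tridiagonal matrices, where MRRR yields all eigenvectors in $\mathcal{O}(d_\lambda^2)$. Moreover, since $\sum_{j<k}P_jP_k=\tfrac{1}{2}\bigl[(\sum_j P_j)^2-n\bigr]$, your Wigner-rotation fallback actually covers the two-local generators too, not just the one-local ones.
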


The first term in  Eq.~\eqref{eq:time_complexity} arises from the diagonalization of the one- and two-body generators in Schur basis. For each irrep $\lm$, this requires $\order{d_\lm^2}$ operations, and summing  over contributions across all irreps yields $\sum_\lm d_\lm^2 = \sum_{m = 1}^{
\lfloor n/2\rfloor } (n-2m +1)^2 \in \order{n^3}$. The second term accounts for the matrix multiplication cost to obtain $U_\lm$, as well as $U_\lm\ad O_\lm U_\lm$ across all irreps. For each irrep, this costs $\order{L d_\lm^\omega}$, leading to the total complexity $\sum_\lm L d_\lm^\omega \in \order{L n^{\omega + 1}}$. In addition, the diagonalizing matrices $Q_{\ell, \lm}$ of dimension $d_\lm \times d_\lm$ must be stored for all $\lm$ and $\ell$, resulting in the memory cost of $\order{n^3}$.  Here, we underline that the theorem assumes that the circuit generators are drawn from a fixed set. If, instead, each layer involve a new linear combination of generators from the set, the diagonalization should be performed independently at each layer, resulting in an overall time complexity of $\order{Ln^3 + L n^{\omega + 1}}$ and a memory complexity $\order{Ln^3}$.

A priori, the diagonalization of the circuit generators needs to be performed only once as a preprocessing step for each value of $n$. Hence, assuming access to such diagonalization, the cost of simulation per layer is simply given by $\order{ n^{\om + 1}}$. However, as we will observe in Section~\ref{sec:numerics},  the practical runtime scales more favorably than the theoretical upper bound.

At this point, we note that the results in Theorem~\ref{thm:time_complexity} are closely related to those in Lemmas~3 and 4 in Ref.~\cite{anschuetz2022efficient}, but refined here for circuits generated by at most two-local Pauli generators. In fact, below we show how to extend our theorem for $k$-local Pauli generators with $k \in \order{1}$. 

To finish, we note that if the components $\sum_{p_\lm = 1}^{m_\lm} \rho^{p_\lm}_\lm$ are known, the previous algorithm readily allows us to compute $f(\rho)$ with an additional step of matrix multiplication and tracing. While in general, we might not have access to the decomposition of the input state within each irrep, in some special cases we can efficiently compute them. This is, for instance, the case when $\rho$ is $S_n$-equivariant itself, then $\rho^{p_\lm}_\lm=\rho_\lm$ (i.e., the same component repeats across multiplicities) or when it is a simple state such as  $\rho=\dya{0}^{\otimes n}$. Here,  the following corollary follows.
\begin{corollary}
    Assume that we are given access to the components of the general input state $\rho$ within each irrep, as well as to the diagonalized circuit generators. Then, we can compute $f(\rho)$ for an $L$-layered circuit with $L\in\OC(1)$ in time scaling as 
        \begin{equation}
        N_{\CT } \in \order{ n^{\om + 1}}\;.
\label{eq:time_complexity-f}
    \end{equation} 
\end{corollary}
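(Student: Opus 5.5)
The plan is to assemble the corollary from Theorem~\ref{thm:time_complexity} and Eq.~\eqref{eq:loss_block}, removing the preprocessing term by assumption and then bounding the final contraction. Since the diagonalizations $(H_\ell)_\lm = Q_{\ell,\lm}\Lm_{\ell,\lm}Q_{\ell,\lm}\ad$ are assumed given, the $\order{n^3}$ term in Eq.~\eqref{eq:time_complexity}, which comes only from diagonalizing the generators, no longer appears. The remaining cost is the Heisenberg evolution, followed by one trace per irrep.

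\textbf{Step 1: projecting $O$.} For each $\lm=\lm(m)$, $m=0,\dots,\lfloor n/2\rfloor$, we build $O_\lm$ from the explicit sparse block formulas of Sec.~\ref{sec:Sn_operators} and Appendix~\ref{adx:explicit_blocks}. This costs $\order{n^2}$ in total.

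\textbf{Step 2: evolving through the layers.} In each irrep, for $\ell=L,\dots,1$ we update
\[
\widetilde{O}_\lm \leftarrow Q_{\ell,\lm}e^{i\Lm_{\ell,\lm}}Q_{\ell,\lm}\ad\,\widetilde{O}_\lm\,Q_{\ell,\lm}e^{-i\Lm_{\ell,\lm}}Q_{\ell,\lm}\ad .
\]
Forming $e^{\pm i\Lm_{\ell,\lm}}$ costs $\order{d_\lm}$, and multiplying by a diagonal matrix costs $\order{d_\lm^2}$. The update therefore uses a constant number of dense $d_\lm\times d_\lm$ products, i.e.\ $\order{d_\lm^\omega}$ per layer. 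Because $d_\lm=n-2m+1\le n+1$ and there are $\order{n}$ irreps,
\[
\sum_\lm L\,d_\lm^\omega \le L\,(\lfloor n/2\rfloor+1)(n+1)^\omega \in \order{L\,n^{\omega+1}},
\]
which is $\order{n^{\omega+1}}$ when $L\in\OC(1)$.

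\textbf{Step 3: the final trace.} By Eq.~\eqref{eq:loss_block}, $f(\rho)=\sum_\lm\Tr[\bar\rho_\lm\widetilde{O}_\lm]$, where $\bar\rho_\lm=\sum_{p_\lm}\rho_\lm^{p_\lm}$ is assumed given as a $d_\lm\times d_\lm$ matrix. We compute each trace as the entrywise sum $\sum_{q,q'}(\bar\rho_\lm)_{qq'}(\widetilde{O}_\lm)_{q'q}$, which costs $\order{d_\lm^2}$. The total over irreps is $\order{n^3}\subseteq\order{n^{\omega+1}}$, since $\omega\ge2$.

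\textbf{Main obstacle.} The step needing care is the interpretation of ``access to the components'': the proof must use the summed blocks $\bar\rho_\lm$, not all $m_\lm$ individual copies. The number of copies $m_\lm$ is exponential, so summing the copies explicitly would break the bound. If a form such as $\rho^{\otimes}$-symmetric or $\dya{0}^{\otimes n}$ is given instead, we note that $\bar\rho_\lm$ is obtained directly (e.g.\ $\bar\rho_\lm=m_\lm\rho_\lm$ in the $S_n$-equivariant case). Combining the three steps gives $N_{\CT}\in\order{n^{\omega+1}}$.
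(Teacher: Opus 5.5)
Your proposal is correct and follows essentially the paper's argument. You drop the $\order{n^3}$ diagonalization term of Theorem~\ref{thm:time_complexity} because the eigendecompositions are given, bound the Heisenberg evolution by $\sum_\lm L\,d_\lm^\omega\in\order{L n^{\omega+1}}=\order{n^{\omega+1}}$ for $L\in\OC(1)$, and absorb the final per-irrep contraction with the multiplicity-summed blocks $\sum_{p_\lm}\rho_\lm^{p_\lm}$ into that bound; your explicit accounting of the $\order{n^2}$ projection of $O$ and the $\order{n^3}$ trace step simply spells out what the paper leaves implicit.
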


More generally, we will see below that  if $\rho$ is a generic state that  lives in a quantum computer, we can obtain its irrep projections via classical shadows techniques. This incurs some additional complexity in the form of classical shadows samples, as well as some classical post-processing. Still, even in this case our algorithm remains efficient across the board. We explore this setting in Section~\ref{sec:shadow}, with additional details given in Appendix~\ref{adx:pi_cs}.

\subsection{Generic $k$-local symmetrized Pauli operators\label{sec:k_local_PS}}
Until now, we have restricted our analysis to the case when the elements in $\GC$ are at most two-local, where we can analytically find their irrep projections.  In this section, we extend the framework to symmetrized $k$-local Pauli strings $\TC_{S_n}(P_{\vec{k}})$ with $k \in \order{1}$. In particular, we refer the reader to  Appendix~\ref{adx:k_local_generators} for an explicit algorithm that numerically evaluates the matrix expression of $\TC_{S_n}(P_{\vec{k}})$ in the Schur basis for an arbitrary $\vec{k}$. 

The following theorem states the computational complexity bound for evaluating the matrix representation of the $k$-local symmetrized Pauli strings in the Schur basis. 
\begin{theorem}
Consider a symmetrized Pauli operator $\TC_{S_n}(P_{\vec{k}})$ with $P_{\vec{k}}$ defined in Eq.~\eqref{eq:Pk}. Given that $k \in \order{1}$, the total time complexity to find the matrix representation in the Schur basis scales at most as $\order{n^2}$. 
\end{theorem}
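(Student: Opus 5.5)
The plan is to compute the blocks $(\TC_{S_n}(P_{\vec{k}}))_\lm$ directly in the canonical Schur basis of Eq.~\eqref{eq:schur_basis_expression}, and to show that each block is banded with $\order{1}$ bandwidth and that each nonzero entry costs $\order{1}$. Since $\sum_\lm d_\lm = \sum_{m}(n-2m+1)\in\order{n^2}$, this gives the $\order{n^2}$ bound.

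First we fix the structure of the operator. Up to normalization, $\TC_{S_n}(P_{\vec{k}})$ is the sum over all placements of $k_X$ $X$'s, $k_Y$ $Y$'s and $k_Z$ $Z$'s on distinct qubits. Writing $X=\sigma^++\sigma^-$ and $Y=-i(\sigma^+-\sigma^-)$, it is a linear combination of $\order{1}$ symmetrized monomials in $\sigma^\pm$ and $Z$ (at most $3^k$ of them). Each such monomial has the form $\TC_{S_n}\big((\sigma^+)^{\otimes a}(\sigma^-)^{\otimes b}Z^{\otimes c}\eye^{\otimes n-k}\big)$ with $a+b+c=k$. It changes the Hamming weight by exactly $a-b$. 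Since the operator commutes with the collective $SU(2)$ action, it is block-diagonal in $\lm$. Within the canonical copy, Hamming weight equals $q_\lm+m$, so the block $(A_\lm)_{q,q'}$ is nonzero only when $|q-q'|\le k$. Each block therefore has at most $(2k+1)d_\lm$ nonzero entries.

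Next we evaluate each entry $\bramatket{\lm,p^0_\lm,q}{A}{\lm,p^0_\lm,q'}$ in closed form. We split every placement of the $k$ non-identity sites according to how many sites fall on the $2m$ singlet qubits and how they pair up. There are only $\order{1}$ patterns: a site sits in a singlet pair alone, two sites sit in the same pair, or the site lies in the symmetric register. The number of placements realizing each pattern is a polynomial in $m$ and $n-2m$ of degree at most $k$. The contribution of one singlet is $\bra{\Psi}P\otimes\eye\ket{\Psi}=0$ for $P\neq\eye$ and $\bra{\Psi}P\otimes P'\ket{\Psi}=-\delta_{PP'}$ for $P,P'\in\{X,Y,Z\}$. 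The remaining sites act on the Dicke state $\ket{\Sigma^{(n-2m)}_{q'}}$ through a symmetrized $j$-local operator with $j\le k$. Its matrix element between Dicke states is a closed-form combinatorial expression: a sum over how many $\sigma^+$, $\sigma^-$ and $Z$ hit $0$'s or $1$'s, weighted by binomial ratios such as $\binom{n-2m-j}{q-\cdot}$ over the normalization $\sqrt{\binom{n-2m}{q}\binom{n-2m}{q'}}$. This is the generalization of the $\al^{\pm}$ coefficients above. Each entry is thus a sum of $\order{c^k}=\order{1}$ terms. If we evaluate binomial ratios as products of $\order{k}$ factors, or precompute factorial tables or logarithms once in $\order{n}$, each term costs $\order{1}$.

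Summing over entries gives $\order{k\, 3^k\sum_\lm d_\lm}=\order{n^2}$, and the precomputation costs only $\order{n}$. The main obstacle is the middle step: correctly enumerating how the symmetrized placements split between the singlet pairs and the Dicke register, with the right multiplicities and signs. We must also verify that this split is exhaustive, for instance that a site placed on one half of a singlet together with another on the symmetric part gives zero. We would check the resulting formula against the explicit two-local blocks already derived (e.g., $\frac{1}{n}\sum_i X_i$ and $\sum_{j<k}Z_jZ_k$) and against brute-force diagonalization for small $n$.
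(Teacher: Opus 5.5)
Your proposal is correct and follows essentially the same route as the paper: only identical Pauli pairs landing on the same singlet survive (each contributing $-1$), the remaining Paulis act on the Dicke register and are counted by how many hit $\ket{0}$'s versus $\ket{1}$'s, so each $(\lm,q_\lm)$ costs $\order{1}$ with precomputed factorials and the total is $\order{\sum_\lm d_\lm}=\order{n^2}$ (working with matrix elements rather than the action on states, as you do, makes the vanishing of lone Paulis on a singlet immediate via $\bra{\Psi}P\otimes\eye\ket{\Psi}=0$). One correction: block-diagonality in $\lm$, with identical blocks across multiplicities, comes from $S_n$-equivariance as in Eq.~\eqref{eq:O_block}, not from commuting with the collective $SU(2)$ action, which $\TC_{S_n}(P_{\vec{k}})$ generally does not do (e.g., $\sum_i X_i$).
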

As a result, for constant locality $k$, the computational cost of evaluating the matrix elements of $k$-local symmetrized Pauli strings is negligible compared to the cost of computing $U_\lm$ for all irrep labels $\lm$.  
Moreover, for a $k$-local Pauli string, the corresponding matrix block $H_\lm$ is at most $k$-banded. Therefore, its eigendecomposition can be performed with a computational cost that also scales at most as $\order{n^2}$, and the overall time complexity stated in Theorem~\ref{thm:time_complexity} remains valid. We note that this bound is tighter compared to the bound of $\order{n^4}$ provided in Appendix E of Ref.~\cite{anschuetz2022efficient}. 
In contrast, for generic Pauli strings with $k \in \order{n}$,  we recover the time complexity of $\order{n^4}$.

\subsection{Initial state acquisition via classical shadows~\label{sec:shadow}}

In this section, we discuss how a quantum computer can be used to estimate the irrep projection of $\rho$ via classical shadows~\cite{elben2022randomized}.  As usual, it is fundamental to pick a shadow tomographic procedure that aligns well with the information we are trying to extract.  For $S_n$-equivariant quantum simulations, the initial state can be efficiently acquired using PI-CS, a shadow protocol tailored to permutation-invariant circuits~\cite{sauvage2024classical}. We briefly review the PI-CS protocol and analyze the associated resource requirements, thereby completing the complexity analysis for $S_n$-equivariant simulations over generic states. Additional details on PI-CS are provided in Appendix~\ref{adx:pi_cs}.

We start by considering an arbitrary initial state $\rho$, which is not necessarily $S_n$-equivariant. The last equality in Eq.~\eqref{eq:loss_block} implies that it is sufficient to retrieve the sum of the initial states over multiplicities for each irrep block $\lm$\footnote{The fact that we do not require each individual projection onto the multiplicity blocks is crucial, as that would require us to store an exponential amount of information.}. A variant of PI-CS, dubbed Deep PI-CS, performs this task by block-diagonalizing the initial state with the quantum Schur transform (QST) circuit $\QST$ of depth scaling linearly with respect to system size $n$ and polylogarithmically in the inverse precision~\cite{bacon2005quantum, bacon2006efficient, kirby2017practical, krovi2019efficient}. This leads to the following theorem for the quantum sample complexity:
\begin{lemma}[Quantum circuit and sample complexity for generic states]

 Given a generic state $\rho$, the expectation value of a set of $M$ distinct $S_n$-equivariant observables $\{O_i\}_{i=1}^M$ can be estimated up to additive accuracy $\ep$, and with a success probability $1 -\dl$, via deep PI-CS using a number of quantum samples scaling as 
    \begin{equation}
        N_{\QS} \in \order{\log(\frac{M}{\dl})\frac{n^2}{\ep^2} \max_i{\norm{O_i}^2_{\infty}}}\;, 
    \end{equation}
where $\norm{\cdot}_{\infty}$ denotes the spectral norm of the matrix. The procedure requires a quantum Schur transform circuit of depth $\order{n \poly\left(\log(\epsilon^{-1}_{\rm QST})\right)}$  where $\ep_{\rm QST}$ denotes the implementation error of the Schur transform.
\label{thm:classical_shadow_PI}
\end{lemma}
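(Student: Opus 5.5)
The plan is to reduce the estimation of each $\Tr[\rho O_i]=\sum_\lm \Tr[(\sum_{p_\lm}\rho_\lm^{p_\lm})(O_i)_\lm]$ to a standard classical-shadows argument applied to the block-diagonal, multiplicity-traced part of $\rho$. We restrict to this part because of Eq.~\eqref{eq:loss_block} and the block forms of Eq.~\eqref{eq:O_block}.

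The deep PI-CS protocol works as follows. Apply $\QST$ to $\rho$, and measure the irrep register $\lm$ and the multiplicity register $p_\lm$ in the computational basis; the latter amounts to tracing it out. On the remaining $d_\lm$-dimensional register, apply a random unitary from a unitary 3-design, for instance a Clifford-type or Haar ensemble on $\Cbb^{d_\lm}$, and measure. The probability of outcome $\lm$ is $\Tr[\sum_{p_\lm}\rho_\lm^{p_\lm}]$. Conditioned on $\lm$, the post-measurement state is the normalized $\bar\rho_\lm=\sum_{p}\rho_\lm^{p}/\Tr[\cdot]$. The usual inversion of the $d_\lm$-dimensional global-Clifford measurement channel then gives a snapshot $\hat\rho_\lm=(d_\lm+1)U^\dagger\dya{b}U-\eye_{d_\lm}$. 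Multiplied by the indicator of outcome $\lm$, this yields an unbiased estimator $\hat\rho=\bigoplus_\lm \hat\rho_\lm$ of $\bigoplus_\lm\sum_p\rho_\lm^p$. Unbiasedness for $\Tr[\rho O_i]$ follows immediately, since $O_i$ only sees these blocks.

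The key step is the variance bound. By the standard global-shadow computation on each block, the variance of $\Tr[\hat\rho_\lm (O_i)_\lm]$ conditioned on $\lm$ is at most $3\Tr[(O_{i,\lm}^{0})^2]\le 3 d_\lm\norm{O_i}_\infty^2$, where $O^0$ is the traceless part. Averaging over $\lm$ and using $d_\lm\le n+1$ gives a shadow norm of order $n\norm{O_i}_\infty^2$. This would suggest $n/\ep^2$ samples. To match the stated $n^2$, I would either allow the cruder bound coming from $\Tr[O^2]\le d_\lm\norm{O}^2$ plus the traceful part, or, more safely, bound the worst case $\max_\lm (d_\lm+1)^2\norm{O_i}_\infty^2\in\order{n^2\norm{O_i}^2_\infty}$. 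The latter comes from the range of the estimator, which is what enters Hoeffding-type median-of-means. Either way, the bound is $\order{n^2\norm{O_i}^2_\infty}$.

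Median-of-means then gives accuracy $\ep$ for all $M$ observables simultaneously with failure probability $\dl$ using $\order{\log(M/\dl)\,\sigma^2/\ep^2}$ samples, via a union bound. This yields $N_{\QS}$. The circuit-depth claim is inherited directly from the efficient Schur transform constructions~\cite{bacon2005quantum,krovi2019efficient}: $\QST$ has depth $\order{n\,\poly(\log \ep_{\rm QST}^{-1})}$, while the random unitaries on a $\order{\log n}$-qubit register are cheap by comparison. I expect the main obstacle to be the variance step, specifically handling the randomness of $\lm$ and the nontrivial traces of $O_\lm$ cleanly. I would address it by conditioning on $\lm$ and applying the law of total variance.
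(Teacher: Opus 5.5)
Your proposal is correct and follows essentially the paper's argument: apply the Schur transform, measure the irrep label, apply a random unitary on the dimension register, split the variance by the law of total variance as $\sum_\lambda p(\lambda)\Var[\hat o_\lambda]+\Var_\lambda o_\lambda$ with $\Var_\lambda o_\lambda\le\norm{O}_\infty^2$, then use standard shadow-norm bounds, median-of-means with a union bound, and the cited $\order{n\,\poly(\log \ep_{\rm QST}^{-1})}$ Schur-transform depth. The one difference is that your global 3-design bound $3\Tr[(O_\lambda^0)^2]\le 3(n+1)\norm{O}_\infty^2$ is sharper than the paper's $d_{\max}^2\norm{O}_\infty^2$ (the natural order for local-Clifford measurements on the $\lceil\log_2(n+1)\rceil$-qubit register), so your $\order{n}$ variance already implies the stated $\order{n^2}$; the detour through the estimator's range is unnecessary, since median-of-means uses only the variance.
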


If the initial state is $S_n$-equivariant, it can also be decomposed as $\rho \cong \bigoplus_{\lm} \eye_{m_\lm} \otimes \rho_\lm$ with $\rho_\lm$ a $d_\lm \times d_\lm$ matrix as well. In this specific case, we can use shallow PI-CS, or symmetrized PI-CS. Here, there is no need to apply a QST, reducing the quantum circuit complexity. We find that the following lemma holds.
\begin{lemma}[Quantum circuit and sample complexity for $S_n$-equivariant states]
Given an $S_n$-equivariant state, the expectation value of a set of $M$ distinct $S_n$-equivariant observables $\{O_i\}_{i=1}^M$ can be estimated up to additive accuracy $\ep$, and with a success probability $1 -\dl$, via symmetrized PI-CS using a quantum circuit of constant depth, i.e., $N_{\QT}\in \order{1}$, and a number of quantum samples scaling as
    \begin{equation}
        N_{\QS} \in \order{\log\left(\frac{M}{\delta}\right)\frac{n}{\ep^2} \max_i{\norm{O_i}_F^2} }
    \end{equation}
where $\norm{\cdot}_F$ denotes the Frobenius norm of the matrix. 
\end{lemma}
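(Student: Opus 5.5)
The plan is to follow the symmetrized (shallow) PI-CS protocol of Ref.~\cite{sauvage2024classical} and bound its variance using the block structure of $S_n$-equivariant operators. Since $\rho$ is $S_n$-equivariant, $\rho\cong\bigoplus_\lm \eye_{m_\lm}\otimes\rho_\lm$, and by Eq.~\eqref{eq:loss_block} we have $\Tr[\rho O_i]=\sum_\lm m_\lm\Tr[\rho_\lm (O_i)_\lm]$. All information needed is therefore contained in the $\order{n^3}$ numbers $\rho_\lm$. We do not need a Schur transform to access them. Applying a uniformly random permutation (or simply exploiting that $\rho=\TC_{S_n}(\rho)$) together with a global random single-qubit rotation $V^{\otimes n}$, $V\in SU(2)$, before a computational-basis measurement realizes a measurement channel that is itself $S_n$-equivariant. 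The rotation $V^{\otimes n}$ is a product of single-qubit gates, so the quantum circuit has constant depth, $N_{\QT}\in\order{1}$.

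\textbf{Step 1: the measurement channel.} By Schur--Weyl duality, the channel $\MC(\rho)=\Ebb_{V}\sum_b \bra{b}V^{\otimes n}\rho V^{\dagger\otimes n}\ket{b}\,V^{\dagger\otimes n}\dya{b}V^{\otimes n}$, restricted to $S_n$-equivariant inputs, acts within the commutant. There it decomposes over the irreps of $SU(2)$ acting on each block $\LC(\HC_\lm)\cong\bigoplus_{j}V_j$, the spin-$j$ components, with $j\le d_\lm-1$. On each such component $\MC$ is a scalar $c_{\lm,j}$, so it is invertible on the relevant subspace whenever all $c_{\lm,j}>0$. This gives an unbiased snapshot $\hat\rho=\MC^{-1}(\cdot)$ that can be computed classically blockwise.

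\textbf{Step 2: the variance bound.} For an observable $O$ the single-shot estimator is $\hat o=\Tr[\hat\rho O]=\Tr[\MC^{-1}(O)\,\sigma_b]$, where $\sigma_b$ is the post-measurement snapshot. Its variance is at most $\Ebb[\hat o^2]$, which I would express through the equivariant shadow norm. The key estimate is $\Var(\hat o)\in\order{n\norm{O}_F^2}$, measured with the Frobenius norm of the reduced blocks. I would obtain it by expanding $O_\lm$ in the spin-$j$ components and using $\sum_\lm\sum_j \|O_{\lm,j}\|^2/c_{\lm,j}$. The factor $n$ should come from two sources: the number of irreps, $\lfloor n/2\rfloor+1$, and the fact that each $c_{\lm,j}^{-1}$ grows at most like $d_\lm\in\order{n}$ after normalization.

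\textbf{Step 3: amplification.} Once the variance is bounded, median-of-means over $N_{\QS}$ snapshots, combined with a union bound over the $M$ observables, gives accuracy $\ep$ with probability $1-\dl$ using $\order{\log(M/\dl)\,n\max_i\norm{O_i}_F^2/\ep^2}$ samples.

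\textbf{Main obstacle.} The hard step is Step 2: computing the eigenvalues $c_{\lm,j}$ of the twirled measurement channel and showing that the inverse does not blow up faster than linearly. I would first compute $c_{\lm,j}$ from Clebsch--Gordan overlaps of $\dya{b}$ with the spin-$j$ tensor operators, using Haar averaging over $SU(2)$. If that estimate comes out worse than $\order{n}$, I would fall back on the bounds derived in Ref.~\cite{sauvage2024classical} for symmetrized PI-CS, which I expect to give exactly this $n\norm{O}_F^2$ scaling.
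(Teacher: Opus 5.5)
Your overall skeleton matches the paper's: constant depth because $W^{\otimes n}$ is a single layer of one-qubit gates, a variance bound of order $n\norm{O}_F^2$, then median-of-means with a union bound as in Eq.~\eqref{eq:classical_shadow}. The paper, however, does not derive the variance bound; it imports $\Var[\hat o]\le(2n+1)\norm{O}_F^2$ from Ref.~\cite{sauvage2024classical}. The part you try to supply yourself breaks at Step~1, because the measurement channel restricted to $\mathrm{comm}(S_n)$ is not diagonal in $\lm$. $\mathrm{SU}(2)$-covariance plus Schur's lemma only give the following. On the spin-$j$ isotypic component, which occurs once for every $\lm$ with $d_\lm-1\ge j$, $\MC$ acts as $\eye_{V_j}\otimes M_j$, where $M_j$ is a full matrix in the $\lm$ index.

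A two-qubit example already shows the mixing. Take $\rho=\dya{\Psi}$. Then $W^{\otimes 2}$ fixes the singlet, only $h=1$ occurs, and $\Pi_1=\dya{\Psi}+\dya{T_0}$ with $\ket{T_0}=(\ket{01}+\ket{10})/\sqrt{2}$. Hence $\MC(\dya{\Psi})\propto\dya{\Psi}+P_{\rm trip}/3$, with constant $1$ for Hamming-weight outcomes and $1/2$ for your computational-basis outcomes. This output leaks into $\lm=(2,0)$. More generally, with Hamming-weight outcomes and $P_\lm$ the projector onto the $\lm$-isotypic subspace,
\[
\MC(P_\lm)=m_\lm\sum_\nu \frac{\min(d_\lm,d_\nu)}{d_\nu}P_\nu .
\]
This is the same inter-irrep coupling visible in Eq.~\eqref{eq:tilde_c}, where the paper's $C$ is block-diagonal only in the shift $\Delta$. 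So the scalars $c_{\lm,j}$ do not exist for $j\le n-2$, and $\sum_{\lm,j}\norm{O_{\lm,j}}^2/c_{\lm,j}$ is not the quantity to bound.

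Here is what your route actually requires. Since outcome probabilities are at most one,
\[
\Var[\hat o]\le \Ebb_W\sum_h \Tr[\MC^{-1}(O)\,(W\ad)^{\otimes n}\Pi_h W^{\otimes n}]^2=\langle\!\langle O|\MC^{-1}|O\rangle\!\rangle\le \norm{O}_F^2/\lambda_{\min}\big(\MC|_{\mathrm{comm}(S_n)}\big).
\]
So you need a lower bound of order $1/n$ on the smallest eigenvalue of every coupled block $M_j$. Your heuristic does not provide this, and its ``two sources'' of a factor $n$ would, if both were present, multiply to $n^2$. Your scalar picture is correct only for $j\in\{n-1,n\}$, where $\lm=(n,0)$ alone contributes. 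With Hamming-weight outcomes the eigenvalue there is exactly $1/(2j+1)$, consistent with the constant $2n+1$.

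The channel you wrote has a further problem. Restricted to $\mathrm{comm}(S_n)$, the computational-basis channel is the Hamming-weight channel with outcome $h$ reweighted by $\binom{n}{h}^{-1}$. Its spin-$n$ eigenvalue is then $2^n/[(2n+1)\binom{2n}{n}]$, which is exponentially small. So even your estimate $c^{-1}\lesssim d_\lm$ fails for that channel, and you must coarse-grain to the projectors $\Pi_h$ of Eq.~\eqref{eq:HW_projectors}, as the paper does. With these corrections, the needed spectral bound on the coupled blocks is exactly what Ref.~\cite{sauvage2024classical} supplies. Your proof therefore closes only through your stated fallback, which is precisely the paper's citation-based argument.
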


At this point, it is important to note that unlike deep PI-CS, in which additional quantum circuit and qubits are required, the symmetrized PI-CS requires no additional quantum resources. However, while the shadow post-processing is negligible in deep PI-CS, the symmetrized PI-CS does incur non-trivial classical overhead as one must invert the measurement channel and compute the relevant Clebsch–-Gordan coefficients. This adds an extra classical runtime cost of at least $\order{n^6}$, leading to a trade-off between quantum and classical resource consumption. More details on the PI-CS can be found in Appendix~\ref{adx:pi_cs} and \ref{adx:symmetrized_pi_cs}.

\section{Numerical simulation\label{sec:numerics}} 
To benchmark the classical simulation algorithm introduced in Section~\ref{sec:classical_simulation}, we consider the task of ground state preparation via a digitized adiabatic quantum computing (AQC) protocol~\cite{aharonov2008adiabatic, albash2018adiabatic, barends2016digitized}. The AQC approach relies on the adiabatic theorem~\cite{born1928beweis}, which states that a system initialized in the ground state of the initial Hamiltonian $H_0$ remains in its instantaneous ground state provided the Hamiltonian governing the system varies sufficiently slowly compared to the inverse of the minimum energy gap $\Delta$ between the ground state and the first excited state throughout the full evolution.  

Given a target Hamiltonian $H_1$, the system is initially prepared in the ground state of a simple initial Hamiltonian $H_0$, and then driven adiabatically towards the ground state of $H_1$. 
To this end, we introduce a schedule $s(t): [0, T] \to [0, 1] $, satisfying the boundary condition $s(0) = 0$ and $s(T) = 1$, where $T$ denotes the total annealing time. The system dynamics is governed by a time-dependent Hamiltonian constructed as a linear interpolation between the initial and target Hamiltonians as
\begin{equation}
    H(t) = \left(1 - s(t)\right)H_0 + s(t) H_1\;. 
\end{equation}

Ideally, the evolution operator can be written as: 
\begin{equation}
    U(T)   =  \mathcal{T} \exp\left( - i \int_0^T H(t) dt  \right)\;, 
\end{equation}
where $\mathcal{T}$ denotes the time-ordered operator.
While AQC is naturally implemented on analog quantum simulators, an equivalent procedure can be realized on the digital quantum hardware by discretizing the continuous time evolution. 
In the digitized implementation, the continuous time evolution is approximated by a sequence of discrete time steps: 
\begin{equation}
    U(T) \approx \prod_{j = 1}^{L} e^{ -i H(t_j)\Delta t}\,,
\end{equation}
where $L$  is the number of time steps, $\Delta t = \frac{T}{L}$ the size of the time step, and $t_j = j \Delta t$ discrete times. Each short-time exponential can be approximated by first- or second-order Trotter-Suzuki decomposition.
The final time evolution operator approximately follows the adiabatic trajectory up to controllable Trotter or non-adiabatic errors.

Throughout this work, we choose the initial $S_n$-equivariant Hamiltonian to be $H_0 = -\sum_i^n X_i$, which is a standard and widely adopted choice in AQC protocols.  The system is initialized in the ground state of $H_0$, 
$\rho_0 = \ketbraq{\psi_0}$,  where $\ket{\psi_0} = \ket{+}^{\otimes n } = \frac{1}{\sqrt{2^n}} \sum_{j=0}^{2^n-1} \ket{j}$, which corresponds to an equiprobable superposition over all computational basis states. Then, we take our target $S_n$-equivariant Hamiltonian to be that of the Lipkin--Meshkov--Glick (LMG) model~\cite{lipkin1965validity}, a paradigmatic collective-spin model with infinite-range interactions. The Hamiltonian reads 
\begin{equation}
H_{\rm LMG}
= -\frac{J}{n} \sum_{i<j} \left(X_i X_j + \gm Y_i Y_j\right) + h_z\sum_{i} Z_i\; 
\end{equation}
where $J$ sets the interaction strength, $\gamma$ controls anisotropy between $X$ and $Y$ couplings, and $h_z$ is a longitudinal field along the $z$ direction. 

Since  the initial state $\rho_0$ lies in the fully symmetric irrep subspace labeled by $\lambda = (n,0)$, and both $H_0$ and $H_{\mathrm{LMG}}$ are $S_n$-equivariant, the dynamics remain confined to this symmetric subspace throughout the adiabatic evolution. This restriction is not merely an artificial simplification, but rather arises naturally in several physically relevant settings~\cite{huang2021dynamic, zhang2024heisenberg}. The LMG model within the symmetric sector has been extensively studied in the thermodynamic limit, and its ground-state properties are well understood using semiclassical approaches~\cite{dusuel2004finite, dusuel2005continuous, vidal2006concurrence}.

In order to verify the fidelity of our classical simulation, we numerically compute several observables whose behavior has been characterized analytically in the aforementioned references. All simulations were performed using the Julia~\cite{bezanson2017julia} on a 12-core CPU system with 36~GB of memory. The number of adiabatic time steps is chosen to scale linearly with respect to the system size, $L\in \order{n}$, to ensure the convergence of the digitized adiabatic evolution~\cite{childs2021theory}. Although this scaling is adapted empirically without computing a rigorous bound based on the minimum spectral gap, the numerical results presented validate this choice. 

\begin{figure}[h]
    \centering
    \includegraphics[width=0.97\linewidth]{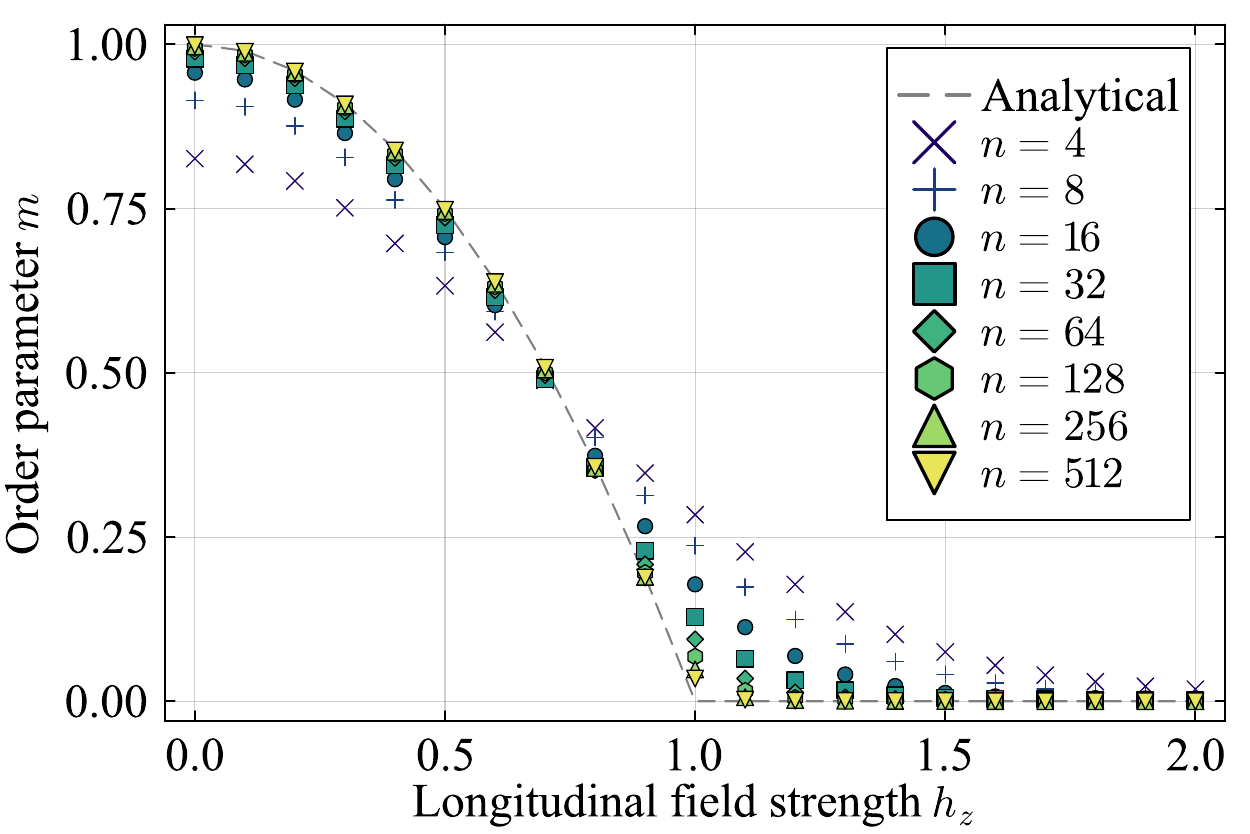}
    \caption{\textbf{Order parameter $m$ as a function of the longitudinal field, $h_z$.} The dashed line corresponds to the analytical expression in the thermodynamic limit, given by Eq.~\eqref{eq:order_params_th}. The result  is shown for $J = 1$ and $\gamma = 0.5$, however, the order parameter is independent of $J$ and $\gamma$, and thus identical for all anisotropy values.  }
    \label{fig:order_params}
\end{figure}

In the anisotropic regime $0\le\gamma\le1$, the LMG model exhibits a second-order quantum phase transition at $h_z=1$. A suitable order parameter is defined in terms of collective magnetization along the $z$-direction, which, in the thermodynamic limit, admits the analytical form
\begin{equation}
    m = 1- \frac{4}{n^2}\avg{(\sum_{i} Z_i)^2} = \begin{cases}
        1 - h_z^2 & \text{for }0 \le h_z < 1\;,\\ 
        0 & \text{for } h_z \ge 1\;, 
    \end{cases}
\label{eq:order_params_th}
\end{equation}
which depends only on the field strength $h_z$. 
In Fig.~\ref{fig:order_params}, we display the order parameter value $m$ computed as a function of $h_z$ for different numbers of qubits and its analytical value in the thermodynamic limit, showing that our algorithms can accurately simulate the LMG model. 

\begin{figure}[h]
    \centering
    
    \subfloat[$\gamma = 0.5$]{\includegraphics[width = 0.97\linewidth]{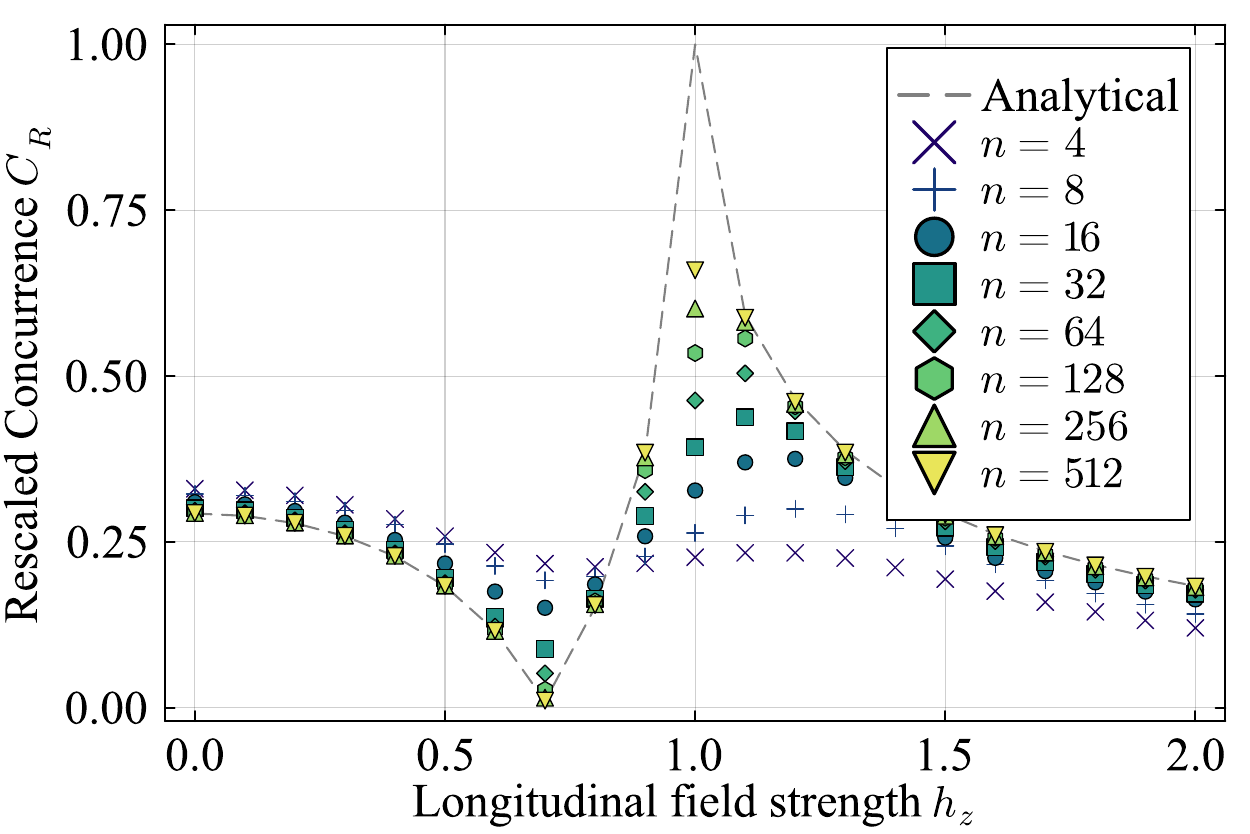}} \\
    \subfloat[$\gamma = 0.8$]{\includegraphics[width = 0.97\linewidth]{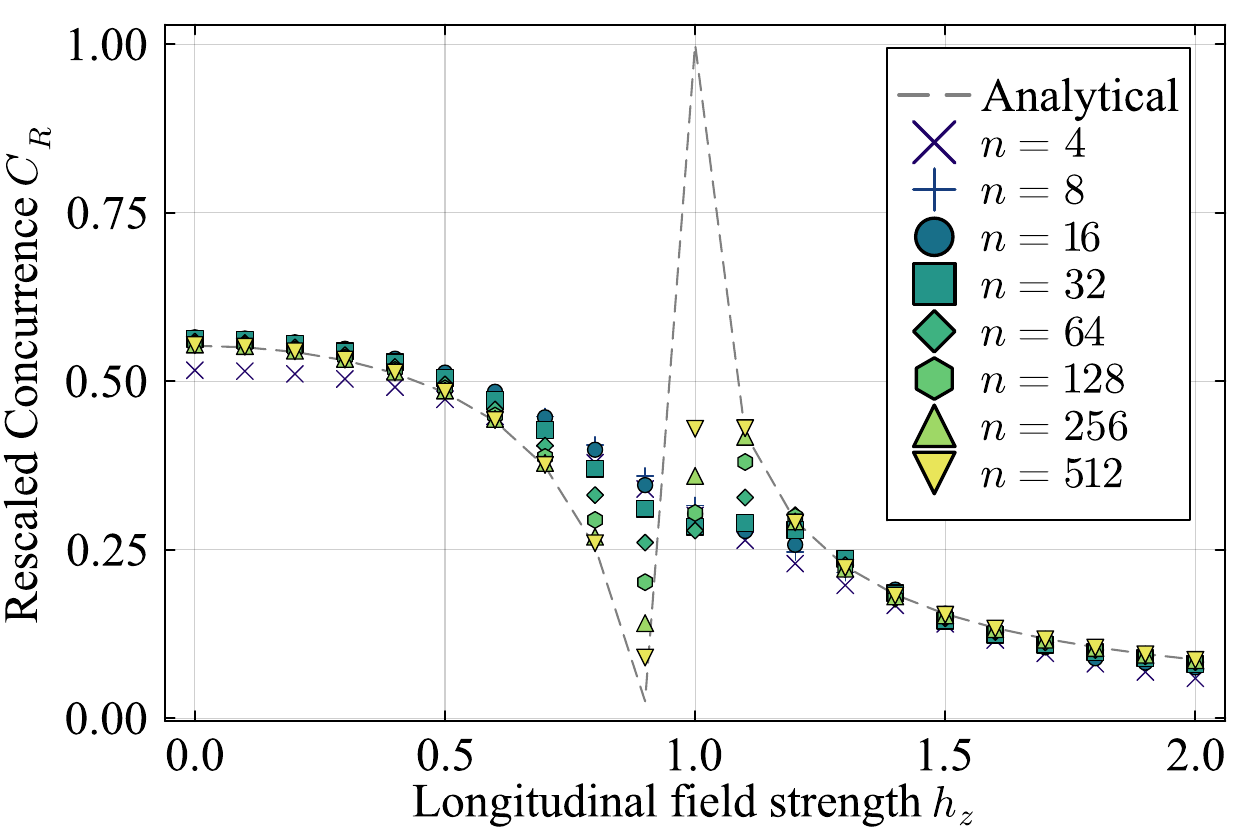}} 
    \caption{\textbf{Rescaled concurrence as a function of the longitudinal field, $h_z$.} The analytical concurrence in the thermodynamic limit is computed using Eq.~\eqref{eq:CR_th} with $J = 1$. The numerically simulated results converge toward the analytical prediction as the number of qubits $n$ increases for both (a) $\gamma = 0.5$ and (b) $\gamma = 0.8$, with the largest deviation from the thermodynamic limit value around $h_z = 1$, where the phase transition takes place. }
    \label{fig:rescaled_conc}
\end{figure}

To further characterize the properties of the prepared states, we also compute their entanglement properties as quantified through the two-qubit concurrence given by 
\begin{equation}
    C  = \max(0, \sqrt{\nu_1} - \sqrt{\nu_2} - \sqrt{\nu_3} - \sqrt{\nu_4})\,.  
\end{equation}
Here, $\nu_1 \ge \nu_2 \ge \nu_3 \ge \nu_4$ are the eigenvalues of the matrix
$R = \rho_{ij}\tilde{\rho}_{ij}$, 
where $\rho_{ij}$ denotes the reduced density matrix of qubits $i$ and $j$, and
$\tilde{\rho}_{ij} = (Y \otimes Y)\rho_{ij}^*(Y \otimes Y)$.
Due to permutation invariance, all qubit pairs are equivalent, and the concurrence is identical for any choice of $(i,j)$. In particular, we are interested in rescaled concurrence defined as
\begin{equation}
    C_R = (n - 1) C \;.
\end{equation}
In the thermodynamic limit, the rescaled concurrence admits a closed-form analytical expression~\cite{vidal2006concurrence} that depends only on $\gamma$ and $h_z$ as 
\begin{equation}
    C_R = \begin{cases}
        1 -  \sqrt{\frac{h_z - 1}{h_z - \gamma}} & \text{for } 1 \le h_z\;, \\ 
        1 -  \sqrt{\frac{1 - h_z^2}{1 - \gamma}} & \text{for } \sqrt{\gamma }\le h_z \le 1\;, \\ 
        1 -  \sqrt{\frac{1 - \gamma}{1 - h_z^2}} & \text{for } h_z \le \sqrt{\gamma}\;.
    \end{cases}
\label{eq:CR_th}
\end{equation}

Fig.~\ref{fig:rescaled_conc} compares the numerical results with the analytical prediction as a function of $h_z$ for different anisotropy parameters, with $J = 1$ fixed throughout. We clearly observe convergence as the system size increases, although the largest deviations occur near the critical point at $h_z = 1$, potentially indicating that the adiabatic procedure requires more steps. Given that our goal is to showcase our simulation algorithms and not optimize adiabatic schedules, we leave such an exploration for further work.

Finally, in Fig.~\ref{fig:runtime}, we numerically verify the classical time complexity scaling stated in Theorem~\ref{thm:time_complexity} by reporting the runtime required to simulate a single adiabatic evolution, averaged over 21 different values of the longitudinal field strength $h_z$.  Here, the upper bound corresponds to $\order{L ( d_{\lm = (n, 0)} )^{\omega}}$ which reduces to the worst-case scaling $\order{n^4}$ when taking $\omega = 3$, since the dynamics is confined to the fully symmetric sector of dimension $d_{\lambda = (n, 0)} = n+1$ and the number of time steps scales as $L \in \order{n}$.   The results clearly demonstrate that the total simulation time scales more favorably than the worst-case scenario of the theorem.

\begin{figure}[h]
    \centering
    \includegraphics[width=0.97\linewidth]{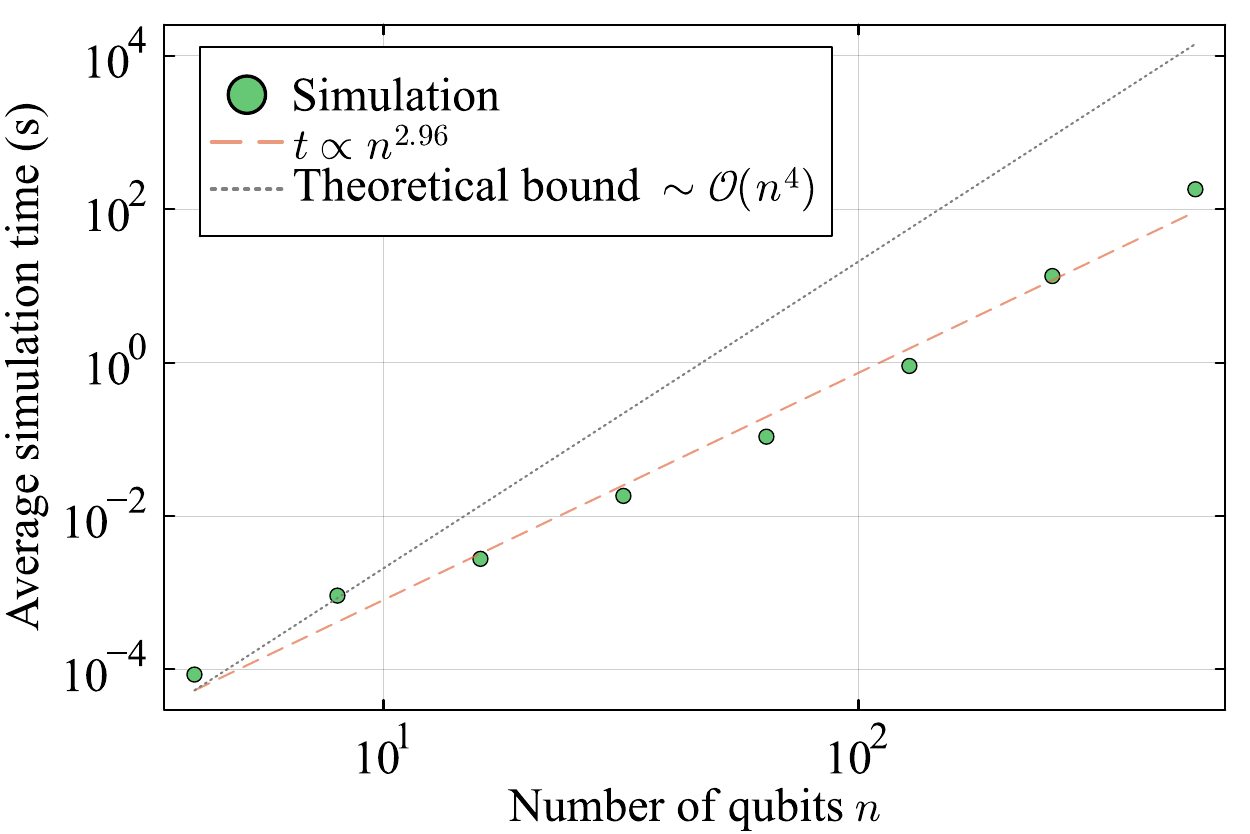}
    \caption{\textbf{Runtime of digitized AQC as a function of the number of qubits.} In this setting, the circuit depth scales as the number of Trotter steps, therefore, linearly with the number of qubits, i.e., $L \in \order{n}$, which results in the theoretical bound scaling as $\order{n^4}$ when the dynamics are confined in the fully symmetric sector. The measured runtime exhibits better scaling behavior than the theoretical upper bound.}
    \label{fig:runtime}
\end{figure}

\section{Discussion\label{sec:discussion}}
Understanding the classical simulability of families of quantum circuits is fundamental to determining the boundary between what is classical and what is genuinely quantum. Here, the gold standard is obtaining classical algorithms which scale polynomially with the system size. However, as any practitioner can attest, ``polynomial scaling'' does not always imply practical tractability. Indeed, even modest improvements in polynomial scaling and prefactors can determine whether a given circuit is simulable, or not, in practice. This is precisely the goal of our paper, as we have presented a more efficient algorithm to simulate physically motivated $S_n$-equivariant quantum circuits. Not only is our reported worst-case scaling better than that of previous techniques, but the actual clock-time obtained from the practical implementation of our algorithm shows that this bound is loose, and that the scaling is much more favorable than that of our theorems.   

Moving forward, we present two future research directions. The first is to extend our methods to $S_n$-equivariant systems of qudits with $d>2$. In this setting, the irrep label $\lm$ is partitioned into $d$ parts, $\lm = (\lm_1,\dots, \lm_d)$, which can make it harder to construct explicit Schur basis states. We expect that such a change will increase the time and memory complexity. Here, we also note that given an initial state, its irrep description can still be easily acquired using PI-CS~\cite{sauvage2024classical}.

Next, we leave for future work the comparison of physical wall-clock execution time between our classical simulation algorithm and the corresponding implementation on a quantum computer. Here, we expect that quantum deployment of $S_n$-equivariant quantum circuits could be difficult in most architectures, as one needs to create all-to-all entanglement among the qubits. If entangling gates must be executed sequentially, this could incur deeper circuits and longer wall-clock times. However,  some architectures, such as those based on trapped ions, can exploit  all-to-all interaction via shared motional modes---often referred to as a quantum bus---together with M\o lmer--S\o rensen type entangling gate~\cite{molmer1999multiparticle}. Such architectures could potentially be more amenable for $S_n$-equivariant circuits. Hence, one can readily see that a fair comparison between classical and quantum simulation run-times will ultimately be hardware dependent. Still, we expect that $S_n$-equivariant circuits could constitute a practical case where shadows plus classical simulation could scale more favorably than most quantum implementations~\cite{cerezo2023does}.

\section*{Acknowledgment}
We thank Zo\"e Holmes and Fr\'ed\'eric Sauvage for valuable discussions. 
 S.Y.C. and M.C. were supported by Laboratory Directed Research and Development (LDRD) program of Los Alamos National Laboratory (LANL) under project number 20260043DR and by the U.S. Department of Energy, Office of Science, Office of Advanced Scientific Computing Research under Contract No. DE-AC05-00OR22725 through the Accelerated Research in Quantum Computing Program MACH-Q project. M.L. and M.C. acknowledge support from LANL's ASC Beyond Moore’s Law project. This work was also supported by the Quantum Science Center (QSC), a National Quantum Information Science Research Center of the U.S. Department of Energy (DOE).

\bibliography{quantum}

@article{wiersema2023classification,
  title={Classification of dynamical Lie algebras of 2-local spin systems on linear, circular and fully connected topologies},
  author={Wiersema, Roeland and K{\"o}kc{\"u}, Efekan and Kemper, Alexander F and Bakalov, Bojko N},
  journal={npj Quantum Information},
  volume={10},
  number={1},
  pages={110},
  year={2024},
  publisher={Nature Publishing Group UK London},
url={https://www.nature.com/articles/s41534-024-00900-2},
doi={10.1038/s41534-024-00900-2}
}

@article{gibbs2024exploiting,
  title={Exploiting symmetries in nuclear Hamiltonians for ground state preparation},
  author={Gibbs, Joe and Holmes, Zo{\"e} and Stevenson, Paul},
  journal={arXiv preprint arXiv:2402.10277},
  year={2024},
  url = {https://arxiv.org/abs/2402.10277},
  doi = {10.48550/arXiv.2402.10277}
}

@article{albash2018adiabatic,
  title = {Adiabatic quantum computation},
  author = {Albash, Tameem and Lidar, Daniel A.},
  journal = {Rev. Mod. Phys.},
  volume = {90},
  issue = {1},
  pages = {015002},
  numpages = {64},
  year = {2018},
  month = {Jan},
  publisher = {American Physical Society},
  doi = {10.1103/RevModPhys.90.015002},
  url = {https://link.aps.org/doi/10.1103/RevModPhys.90.015002}
}

@book{gottesman1997stabilizer,
  title={Stabilizer codes and quantum error correction},
  author={Gottesman, Daniel},
  year={1997},
  publisher={California Institute of Technology}
}

@article{ermakov2024unified,
  title={Unified framework for efficiently computable quantum circuits},
  author={Ermakov, Igor and Lychkovskiy, Oleg and Byrnes, Tim},
  journal={arXiv preprint arXiv:2401.08187},
  year={2024},
  url = {https://arxiv.org/abs/2401.08187},
  doi = {10.48550/arXiv.2401.08187}
}

@article{bravyi2002fermionic,
  title={Fermionic quantum computation},
  author={Bravyi, Sergey B and Kitaev, Alexei Yu},
  journal={Annals of Physics},
  volume={298},
  number={1},
  pages={210--226},
  year={2002},
  publisher={Elsevier}, 
doi = {10.1006/aphy.2002.6254}, 
url = {https://doi.org/10.1006/aphy.2002.6254}
}

@article{wootters1998entanglement,
  title={Entanglement of formation of an arbitrary state of two qubits},
  author={Wootters, William K},
  journal={Physical Review Letters},
  volume={80},
  number={10},
  pages={2245},
  year={1998},
  publisher={APS},
  url={https://journals.aps.org/prl/abstract/10.1103/PhysRevLett.80.2245},
  doi={10.1103/PhysRevLett.80.2245}
}

@article{cerezo2023does,
  title={Does provable absence of barren plateaus imply classical simulability?},
  author={Cerezo, M and Larocca, Martin and Garc{\'\i}a-Mart{\'\i}n, Diego and Diaz, N L and Braccia, Paolo and Fontana, Enrico and Rudolph, Manuel S and Bermejo, Pablo and Ijaz, Aroosa and Thanasilp, Supanut and others},
 journal={Nature Communications},
  volume={16},
  number={1},
  pages={7907},
  year={2025},
  publisher={Nature Publishing Group UK London},
  url = {https://www.nature.com/articles/s41467-025-63099-6},
  doi = {10.1038/s41467-025-63099-6}
}

@article{cerezo2020variationalreview,
   title={Variational quantum algorithms},
   author={Cerezo, M. and Arrasmith, Andrew and Babbush, Ryan and Benjamin, Simon C and Endo, Suguru and Fujii, Keisuke and McClean, Jarrod R and Mitarai, Kosuke and Yuan, Xiao and Cincio, Lukasz and Coles, Patrick J. },
   journal={Nature Reviews Physics},
   volume={3},
   number={1},
   pages={625–644},
   publisher={Nature Publishing Group},
   year={2021},
   url={https://www.nature.com/articles/s42254-021-00348-9},
   doi={10.1038/s42254-021-00348-9}
 }

@article{aharonov2008adiabatic,
  title={Adiabatic quantum computation is equivalent to standard quantum computation},
  author={Aharonov, Dorit and Van Dam, Wim and Kempe, Julia and Landau, Zeph and Lloyd, Seth and Regev, Oded},
  journal={SIAM review},
  volume={50},
  number={4},
  pages={755--787},
  year={2008},
  publisher={SIAM},
url={https://www.jstor.org/stable/20454175}
}

@inproceedings{mernyei2022equivariant,
  title={Equivariant quantum graph circuits},
  author={Mernyei, P{\'e}ter and Meichanetzidis, Konstantinos and Ceylan, Ismail Ilkan},
  booktitle={International Conference on Machine Learning},
  pages={15401--15420},
  year={2022},
  organization={PMLR},
  url={https://proceedings.mlr.press/v162/mernyei22a.html}
}

@article{nguyen2022atheory,
title = {Theory for Equivariant Quantum Neural Networks},
  author = {Nguyen, Quynh T. and Schatzki, Louis and Braccia, Paolo and Ragone, Michael and Coles, Patrick J. and Sauvage, Fr\'ed\'eric and Larocca, Mart\'{\i}n and Cerezo, M.},
  journal = {PRX Quantum},
  volume = {5},
  issue = {2},
  pages = {020328},
  numpages = {40},
  year = {2024},
  month = {May},
  publisher = {American Physical Society},
  doi = {10.1103/PRXQuantum.5.020328},
  url = {https://link.aps.org/doi/10.1103/PRXQuantum.5.020328}
}

@article{knill2001fermionic,
      title={Fermionic Linear Optics and Matchgates}, 
      author={E. Knill},
      year={2001},
  url = {https://arxiv.org/abs/quant-ph/0108033},
  journal={arXiv preprint arXiv:quant-ph/0108033}, 
doi={10.48550/arXiv.quant-ph/0108033} 
}

@article{schatzki2022theoretical,
  title={Theoretical guarantees for permutation-equivariant quantum neural networks},
  author={Schatzki, Louis and Larocca, Martin and Nguyen, Quynh T and Sauvage, Frederic and Cerezo, Marco},
  journal={npj Quantum Information},
  volume={10},
  number={1},
  pages={12},
  year={2024},
  publisher={Nature Publishing Group UK London},
  url = {https://www.nature.com/articles/s41534-024-00804-1},
doi = {10.1038/s41534-024-00804-1},
}

@article{ragone2022representation,
  title={Representation Theory for Geometric Quantum Machine Learning},
  author={ Ragone,  Michael  and Nguyen, Quynh T. and Schatzki,  Louis  and  Braccia, Paolo and
Larocca, Martin and   Sauvage, Frederic and Coles,  Patrick J. and  Cerezo, M. },
  year={2022},
  url = {https://arxiv.org/abs/2210.07980},
  journal={arXiv preprint arXiv:2210.07980},
doi={10.48550/arXiv.2210.07980}
}

@article{verdon2019quantumgraph,
  title={Quantum graph neural networks},
  author={Verdon, Guillaume and McCourt, Trevor and Luzhnica, Enxhell and Singh, Vikash and Leichenauer, Stefan and Hidary, Jack},
  journal={arXiv preprint arXiv:1909.12264},
  year={2019},
  url={https://arxiv.org/abs/1909.12264}, 
doi={10.48550/arXiv.1909.12264}
}

@article{ragone2023lie,
  title={A Lie algebraic theory of barren plateaus for deep parameterized quantum circuits},
  author={Ragone, Michael and Bakalov, Bojko N and Sauvage, Fr{\'e}d{\'e}ric and Kemper, Alexander F and Ortiz Marrero, Carlos and Larocca, Mart{\'\i}n and Cerezo, M},
  journal={Nature Communications},
  volume={15},
  number={1},
  pages={7172},
  year={2024},
  publisher={Nature Publishing Group UK London},
url={https://www.nature.com/articles/s41467-024-49909-3#citeas},
doi={10.1038/s41467-024-49909-3}
}

@article{wan2022matchgate,
  title={Matchgate shadows for fermionic quantum simulation},
  author={Wan, Kianna and Huggins, William J and Lee, Joonho and Babbush, Ryan},
  journal={Communications in Mathematical Physics},
  volume={404},
  pages={629},
  year={2023},
  publisher={Springer},
url={https://link.springer.com/article/10.1007/s00220-023-04844-0},
doi={10.1007/s00220-023-04844-0}
}

@article{sauvage2024classical,
  title={Classical shadows with symmetries},
  author={Sauvage, Frederic and Larocca, Martin},
  journal={arXiv preprint arXiv:2408.05279},
  year={2024},
url={https://arxiv.org/abs/2408.05279},
doi={10.48550/arXiv.2408.05279}
}

@article{zheng2025toward,
    title = {Toward superpolynomial quantum speedup of equivariant quantum algorithms with $\mathrm{SU}(d)$ symmetry},
  author={Zheng, Han and Li, Zimu and Strelchuk, Sergii and Kondor, Risi and Liu, Junyu},
  journal={Physical Review A},
  volume={112},
  number={5},
  pages={052435},
  year={2025},
  publisher={APS}, 
  doi = {10.1103/pt27-v2nj},
  url = {https://link.aps.org/doi/10.1103/pt27-v2nj}
}

@article{sack2022avoiding,
  title={Avoiding barren plateaus using classical shadows},
  author={Sack, Stefan H and Medina, Raimel A and Michailidis, Alexios A and Kueng, Richard and Serbyn, Maksym},
  journal={PRX Quantum},
  volume={3},
  number={2},
  pages={020365},
  year={2022},
  publisher={APS},
  url={https://journals.aps.org/prxquantum/abstract/10.1103/PRXQuantum.3.020365},
  doi={10.1103/PRXQuantum.3.020365}
}

@article{kazi2023universality,
  title={On the universality of $S_n$-equivariant $ k $-body gates},
  author={Kazi, Sujay and Larocca, Martin and Cerezo, M},
doi = {10.1088/1367-2630/ad4819},
url = {https://dx.doi.org/10.1088/1367-2630/ad4819},
year = {2024},
month = {may},
publisher = {IOP Publishing},
volume = {26},
number = {5},
pages = {053030},
journal = {New Journal of Physics}
}

@inproceedings{bartschi2019deterministic,
  title={Deterministic preparation of Dicke states},
  author={B{\"a}rtschi, Andreas and Eidenbenz, Stephan},
  booktitle={International Symposium on Fundamentals of Computation Theory},
  pages={126--139},
  year={2019},
  organization={Springer},
  url={https://link.springer.com/chapter/10.1007/978-3-030-25027-0_9},
  doi={10.1007/978-3-030-25027-0_9}
}

@article{kazi2022landscape,
  title={Analyzing the quantum approximate optimization algorithm: ans{\"a}tze, symmetries, and lie algebras},
  author={Kazi, Sujay and Larocca, Mart{\'\i}n and Farinati, Marco and Coles, Patrick J and Cerezo, M and Zeier, Robert},
  journal={PRX Quantum},
  volume={6},
  number={4},
  pages={040345},
  year={2025},
  publisher={APS},
url={https://journals.aps.org/prxquantum/abstract/10.1103/yfwq-yqmk},
doi={10.1103/yfwq-yqmk}
}

@article{huang2020predicting,
  title={Predicting many properties of a quantum system from very few measurements},
  author={Huang, Hsin-Yuan and Kueng, Richard and Preskill, John},
  journal={Nature Physics},
  volume={16},
  number={10},
  pages={1050--1057},
  year={2020},
  publisher={Nature Publishing Group},
  doi={10.1038/s41567-020-0932-7},
  url={https://www.nature.com/articles/s41567-020-0932-7}
}

@book{fulton1991representation,
  title={Representation Theory: A First Course},
  author={Fulton, William and Harris, Joe},
  year={1991},
  publisher={Springer}
}

@article{jozsa2008matchgates,
  title={Matchgates and classical simulation of quantum circuits},
  author={Jozsa, Richard and Miyake, Akimasa},
  journal={Proceedings of the Royal Society A: Mathematical, Physical and Engineering Sciences},
  volume={464},
  number={2100},
  pages={3089--3106},
  year={2008},
  publisher={The Royal Society London},
url={https://royalsocietypublishing.org/doi/10.1098/rspa.2008.0189},
doi={10.1098/rspa.2008.0189}
}

@article{gottesman1998heisenbergrepresentation,
  title={The Heisenberg representation of quantum computers},
  author={Gottesman, Daniel},
  journal={arXiv preprint quant-ph/9807006},
  year={1998},
  url={https://arxiv.org/abs/quant-ph/9807006}, 
 doi={10.48550/arXiv.quant-ph/9807006}
}

@article{cirstoiu2024fourier,
  title={A Fourier analysis framework for approximate classical simulations of quantum circuits},
  author={Cirstoiu, Cristina},
  journal={arXiv preprint arXiv:2410.13856},
  year={2024},
  url={https://arxiv.org/abs/2410.13856},
doi={10.48550/arXiv.2410.13856}
}

@article{skolik2022equivariant,
  title={Equivariant quantum circuits for learning on weighted graphs},
  author={Skolik, Andrea and Cattelan, Michele and Yarkoni, Sheir and B{\"a}ck, Thomas and Dunjko, Vedran},
  journal={npj Quantum Information},
  volume={9},
  number={1},
  pages={47},
  year={2023},
  publisher={Nature Publishing Group UK London},
  url={https://www.nature.com/articles/s41534-023-00710-y},
  doi={10.1038/s41534-023-00710-y}
}

@article{kirby2017practical,
title={A practical quantum algorithm for the Schur transform},
author = {Kirby, William M. and Strauch, Frederick W.},
year = {2018},
journal={Quantum Information \& Computation},
publisher = {Rinton Press, Incorporated},
address = {Paramus, NJ},
volume = {18},
number = {9–10},
pages = {721–742},
url={https://dl.acm.org/doi/10.5555/3370214.3370215},
doi={10.5555/3370214.3370215}
}

@article{krovi2019efficient,
  title={An efficient high dimensional quantum Schur transform},
  author={Krovi, Hari},
  journal={Quantum},
  volume={3},
  pages={122},
  year={2019},
  publisher={Verein zur F{\"o}rderung des Open Access Publizierens in den Quantenwissenschaften},
  url={https://quantum-journal.org/papers/q-2019-02-14-122/},
  doi={10.22331/q-2019-02-14-122}
}

@article{Vidal2003Efficient,
  title = {Efficient Classical Simulation of Slightly Entangled Quantum Computations},
  author = {Vidal, Guifr\'e},
  journal = {Phys. Rev. Lett.},
  volume = {91},
  issue = {14},
  pages = {147902},
  numpages = {4},
  year = {2003},
  month = {Oct},
  publisher = {American Physical Society},
  doi = {10.1103/PhysRevLett.91.147902},
  url = {https://link.aps.org/doi/10.1103/PhysRevLett.91.147902}
}

@article{elben2022randomized,
  title={The randomized measurement toolbox},
  author={Elben, Andreas and Flammia, Steven T and Huang, Hsin-Yuan and Kueng, Richard and Preskill, John and Vermersch, Beno{\^\i}t and Zoller, Peter},
  journal={Nature Reviews Physics},
  volume={5},
  number={1},
  pages={9--24},
  year={2023},
  publisher={Nature Publishing Group UK London}, 
    doi={10.1038/s42254-022-00535-2},
  url={https://www.nature.com/articles/s42254-022-00535-2}

}

@article{anschuetz2022efficient,
  title={Efficient classical algorithms for simulating symmetric quantum systems},
  author={Anschuetz, Eric R and Bauer, Andreas and Kiani, Bobak T and Lloyd, Seth},
  journal={Quantum},
  volume={7},
  pages={1189},
  year={2023},
  publisher={Verein zur F{\"o}rderung des Open Access Publizierens in den Quantenwissenschaften},
url={https://quantum-journal.org/papers/q-2023-11-28-1189/},
doi={10.22331/q-2023-11-28-1189}
}

@article{childs2021theory,
  title={Theory of trotter error with commutator scaling},
  author={Childs, Andrew M and Su, Yuan and Tran, Minh C and Wiebe, Nathan and Zhu, Shuchen},
  journal={Physical Review X},
  volume={11},
  number={1},
  pages={011020},
  year={2021},
  publisher={APS},
  doi={https://doi.org/10.1103/PhysRevX.11.011020}
}

@article{bacon2006efficient,
  title={Efficient quantum circuits for Schur and Clebsch-Gordan transforms},
  author={Bacon, Dave and Chuang, Isaac L and Harrow, Aram W},
  journal={Physical review letters},
  volume={97},
  number={17},
  pages={170502},
  year={2006},
  publisher={APS},
  doi = {10.1103/PhysRevLett.97.170502},
  url = {https://link.aps.org/doi/10.1103/PhysRevLett.97.170502}
}

@article{east2023all,
  title={All you need is spin: SU (2) equivariant variational quantum circuits based on spin networks},
  author={East, Richard DP and Alonso-Linaje, Guillermo and Park, Chae-Yeun},
  journal={arXiv preprint arXiv:2309.07250},
  year={2023},
url={https://arxiv.org/abs/2309.07250}, 
doi={10.48550/arXiv.2309.07250 }
}

@article{bharti2022noisy,
  title={Noisy intermediate-scale quantum algorithm for semidefinite programming},
  author={Bharti, Kishor and Haug, Tobias and Vedral, Vlatko and Kwek, Leong-Chuan},
  journal={Physical Review A},
  volume={105},
  number={5},
  pages={052445},
  year={2022},
  publisher={APS},
  doi = {10.1103/PhysRevA.105.052445},
  url = {https://doi.org/10.1103/PhysRevA.105.052445}
}

@article{feng2025quon,
  title={Quon Classical Simulation: Unifying Clifford, Matchgates and Entanglement},
  author={Feng, Zixuan and Liu, Zhengwei and Lu, Fan and Wang, Ningfeng},
  journal={arXiv preprint arXiv:2505.07804},
  year={2025},
url={https://arxiv.org/abs/2505.07804},
doi={10.48550/arXiv.2505.07804}
}

@article{bacon2005quantum,
  title={The quantum Schur transform: I. efficient qudit circuits},
  author={Bacon, Dave and Chuang, Isaac L and Harrow, Aram W},
  journal={arXiv preprint arXiv:0601001},
  year={2005},
  url={https://arxiv.org/abs/quant-ph/0601001}, 
doi={10.48550/arXiv.quant-ph/0601001}
}

@article{bezanson2017julia,
    title={Julia: A fresh approach to numerical computing},
    author={Bezanson, Jeff and Edelman, Alan and Karpinski, Stefan and Shah, Viral B},
    journal={SIAM {R}eview},
    volume={59},
    number={1},
    pages={65--98},
    year={2017},
    publisher={SIAM},
    doi={10.1137/141000671},
    url={https://epubs.siam.org/doi/10.1137/141000671}
}

@article{bravyi2016improved,
  title = {Improved Classical Simulation of Quantum Circuits Dominated by Clifford Gates},
  author = {Bravyi, Sergey and Gosset, David},
  journal = {Phys. Rev. Lett.},
  volume = {116},
  issue = {25},
  pages = {250501},
  numpages = {5},
  year = {2016},
  month = {Jun},
  publisher = {American Physical Society},
  doi = {10.1103/PhysRevLett.116.250501},
  url = {https://link.aps.org/doi/10.1103/PhysRevLett.116.250501}
}

@article{angrisani2025simulating,
  title={Simulating quantum circuits with arbitrary local noise using Pauli Propagation},
  author={Angrisani, Armando and Mele, Antonio A and Rudolph, Manuel S and Cerezo, M and Holmes, Zoe},
  journal={arXiv preprint arXiv:2501.13101},
  year={2025},
  url={https://arxiv.org/abs/2501.13101},
  doi ={10.48550/arXiv.2501.13101}
}

@article{angrisani2024classically,
  title = {Classically Estimating Observables of Noiseless Quantum Circuits},
  author = {Angrisani, Armando and Schmidhuber, Alexander and Rudolph, Manuel S. and Cerezo, M. and Holmes, Zo\"e and Huang, Hsin-Yuan},
  journal = {Phys. Rev. Lett.},
  volume = {135},
  issue = {17},
  pages = {170602},
  numpages = {10},
  year = {2025},
  month = {Oct},
  publisher = {American Physical Society},
  doi = {10.1103/lh6x-7rc3},
  url = {https://link.aps.org/doi/10.1103/lh6x-7rc3}
}

@article{meyer2023exploiting,
  title={Exploiting symmetry in variational quantum machine learning},
  author={Meyer, Johannes Jakob and Mularski, Marian and Gil-Fuster, Elies and Mele, Antonio Anna and Arzani, Francesco and Wilms, Alissa and Eisert, Jens},
  journal={PRX Quantum},
  volume={4},
  number={1},
  pages={010328},
  year={2023},
  publisher={APS},
  url={https://doi.org/10.1103/PRXQuantum.4.010328},
  doi={10.1103/PRXQuantum.4.010328}
}

@article{zhao2021fermionic,
  title={Fermionic partial tomography via classical shadows},
  author={Zhao, Andrew and Rubin, Nicholas C and Miyake, Akimasa},
  journal={Physical Review Letters},
  volume={127},
  number={11},
  pages={110504},
  year={2021},
  publisher={APS}, 
  doi = {10.1103/PhysRevLett.127.110504},
  url = {https://link.aps.org/doi/10.1103/PhysRevLett.127.110504}
}

@article{bertoni2024shallow,
  title={Shallow shadows: Expectation estimation using low-depth random {C}lifford circuits},
  author={Bertoni, Christian and Haferkamp, Jonas and Hinsche, Marcel and Ioannou, Marios and Eisert, Jens and Pashayan, Hakop},
  journal={Physical Review Letters},
  volume={133},
  number={2},
  pages={020602},
  year={2024},
  publisher={APS},
url={https://journals.aps.org/prl/abstract/10.1103/PhysRevLett.133.020602},
doi={10.1103/PhysRevLett.133.020602}
}

@article{van2022hardware,
  title={Hardware-efficient learning of quantum many-body states},
  author={Van Kirk, Katherine and Cotler, Jordan and Huang, Hsin-Yuan and Lukin, Mikhail D},
  journal={arXiv preprint arXiv:2212.06084},
  year={2022},
  url={https://arxiv.org/abs/2212.06084}, 
doi={10.48550/arXiv.2212.06084}
}

@incollection{howe2022irreducible,
  title={The Irreducible Representations of S n: Young Symmetrizers},
  author={Howe, R Michael},
  booktitle={An Invitation to Representation Theory: Polynomial Representations of the Symmetric Group},
  pages={103--123},
  year={2022},
  publisher={Springer}, 
  doi={10.1007/978-3-030-98025-2_8},
  url={https://doi.org/10.1007/978-3-030-98025-2_8}
}

@article{holtz2012alternating,
  title={The alternating linear scheme for tensor optimization in the tensor train format},
  author={Holtz, Sebastian and Rohwedder, Thorsten and Schneider, Reinhold},
  journal={SIAM Journal on Scientific Computing},
  volume={34},
  number={2},
  pages={A683--A713},
  year={2012},
  publisher={SIAM},
  url = {https://epubs.siam.org/doi/abs/10.1137/100818893},
  doi = {10.1137/100818893}
}

@article{west2024real,
  title={Real classical shadows},
  author={West, Maxwell and Mele, Antonio Anna and Larocca, Martin and Cerezo, M},
  journal={arXiv preprint arXiv:2410.23481},
  year={2024},
  url={https://arxiv.org/abs/2410.23481}, 
doi={10.48550/arXiv.2410.23481} 
}

@article{brod2016efficient,
  title={Efficient classical simulation of matchgate circuits with generalized inputs and measurements},
  author={Brod, Daniel J.},
  journal={Physical Review A},
  volume={93},
  number={6},
  pages={062332},
  year={2016},
  publisher={APS},
  doi={10.1103/PhysRevA.93.062332},
  url={https://journals.aps.org/pra/abstract/10.1103/PhysRevA.93.062332}
}

@article{chang2025primer,
  title={A Primer on Quantum Machine Learning},
  author={Chang, Su Yeon and Cerezo, M},
  journal={arXiv preprint arXiv:2511.15969},
  year={2025},
url={https://arxiv.org/abs/2511.15969},
doi={10.48550/arXiv.2511.15969
}
}

@article{kokcu2024classification,
  title={Classification of dynamical Lie algebras generated by spin interactions on undirected graphs},
  author={K{\"o}kc{\"u}, Efekan and Wiersema, Roeland and Kemper, Alexander F. and Bakalov, Bojko N.},
  journal={arXiv preprint arXiv:2409.19797},
  year={2024},
  doi={10.48550/arXiv.2409.19797},
  url={https://arxiv.org/abs/2409.19797}
}

@book{fulton1997young,
  title={Young tableaux: with applications to representation theory and geometry},
  author={Fulton, William},
  number={35},
  year={1997},
  publisher={Cambridge University Press}
}

@article{li2024enforcing,
  title={Enforcing exact permutation and rotational symmetries in the application of quantum neural networks on point cloud datasets},
  author={Li, Zhelun and Nagano, Lento and Terashi, Koji},
  journal={Physical Review Research},
  volume={6},
  number={4},
  pages={043028},
  year={2024},
  publisher={APS}, 
  doi = {10.1103/PhysRevResearch.6.043028},
  url = {https://link.aps.org/doi/10.1103/PhysRevResearch.6.043028}
}

@article{cervia2021lipkin,
  title={Lipkin model on a quantum computer},
  author={Cervia, Michael J and Balantekin, AB and Coppersmith, SN and Johnson, Calvin W and Love, Peter J and Poole, C and Robbins, K and Saffman, M},
  journal={Physical Review C},
  volume={104},
  number={2},
  pages={024305},
  year={2021},
  publisher={APS},
  url={https://journals.aps.org/prc/abstract/10.1103/PhysRevC.104.024305},
  doi={10.1103/PhysRevC.104.024305}
}

@article{pal2023complexity,
  title={Complexity in the Lipkin-Meshkov-Glick model},
  author={Pal, Kunal and Pal, Kuntal and Sarkar, Tapobrata},
  journal={Physical Review E},
  volume={107},
  number={4},
  pages={044130},
  year={2023},
  publisher={APS},
  doi = {10.1103/PhysRevE.107.044130},
  url = {https://link.aps.org/doi/10.1103/PhysRevE.107.044130}
}

@article{yadin2023thermodynamics,
  title={Thermodynamics of permutation-invariant quantum many-body systems: A group-theoretical framework},
  author={Yadin, Benjamin and Morris, Benjamin and Brandner, Kay},
  journal={Physical Review Research},
  volume={5},
  number={3},
  pages={033018},
  year={2023},
  publisher={APS}, 
  doi = {10.1103/PhysRevResearch.5.033018},
  url = {https://link.aps.org/doi/10.1103/PhysRevResearch.5.033018}
}

@article{kraus2013ground,
  title={Ground states of fermionic lattice Hamiltonians with permutation symmetry},
  author={Kraus, Christina V and Lewenstein, Maciej and Cirac, J Ignacio},
  journal={Physical Review A—Atomic, Molecular, and Optical Physics},
  volume={88},
  number={2},
  pages={022335},
  year={2013},
  publisher={APS}, 
  doi = {10.1103/PhysRevA.88.022335},
  url = {https://link.aps.org/doi/10.1103/PhysRevA.88.022335}
}

@article{huang2021dynamic,
  title={Dynamic synthesis of Heisenberg-limited spin squeezing},
  author={Huang, Long-Gang and Chen, Feng and Li, Xinwei and Li, Yaohua and L{\"u}, Rong and Liu, Yong-Chun},
  journal={npj Quantum Information},
  volume={7},
  number={1},
  pages={168},
  year={2021},
  publisher={Nature Publishing Group UK London}, 
doi={https://doi.org/10.1038/s41534-021-00505-z}, 
url={https://www.nature.com/articles/s41534-021-00505-z}
}

@article{zhang2024heisenberg,
  title={Heisenberg-limit spin squeezing with the spin Bogoliubov Hamiltonian},
  author={Zhang, Jun and Chang, Sheng and Zhang, Wenxian},
  journal={Physical Review A},
  volume={110},
  number={3},
  pages={033704},
  year={2024},
  publisher={APS}, 
  doi = {10.1103/PhysRevA.110.033704},
  url = {https://link.aps.org/doi/10.1103/PhysRevA.110.033704}
}

@article{dusuel2005continuous,
  title={Continuous unitary transformations and finite-size scaling exponents in the Lipkin-Meshkov-Glick model},
  author={Dusuel, S{\'e}bastien and Vidal, Julien},
  journal={Physical Review B—Condensed Matter and Materials Physics},
  volume={71},
  number={22},
  pages={224420},
  year={2005},
  publisher={APS}, 
  doi = {10.1103/PhysRevB.71.224420},
  url = {https://link.aps.org/doi/10.1103/PhysRevB.71.224420}
}

@article{dusuel2004finite,
  title={Finite-size scaling exponents of the Lipkin-Meshkov-Glick model},
  author={Dusuel, S{\'e}bastien and Vidal, Julien},
  journal={Physical review letters},
  volume={93},
  number={23},
  pages={237204},
  year={2004},
  publisher={APS}, 
  doi = {10.1103/PhysRevLett.93.237204},
  url = {https://link.aps.org/doi/10.1103/PhysRevLett.93.237204}
}

@article{vidal2006concurrence,
  title={Concurrence in collective models},
  author={Vidal, Julien},
  journal={Physical Review A—Atomic, Molecular, and Optical Physics},
  volume={73},
  number={6},
  pages={062318},
  year={2006},
  publisher={APS}, 
  doi = {10.1103/PhysRevA.73.062318},
  url = {https://link.aps.org/doi/10.1103/PhysRevA.73.062318}
}

@article{teng2025leveraging,
  title={Leveraging Symmetry Merging in Pauli Propagation},
  author={Teng, Yanting and Chang, Su Yeon and Rudolph, Manuel S and Holmes, Zo{\"e}},
  journal={arXiv preprint arXiv:2512.12094},
  year={2025}, 
url={https://doi.org/10.48550/arXiv.2512.12094}, 
doi={10.48550/arXiv.2512.12094}
}

@article{barends2016digitized,
  title={Digitized adiabatic quantum computing with a superconducting circuit},
  author={Barends, Rami and Shabani, Alireza and Lamata, Lucas and Kelly, Julian and Mezzacapo, Antonio and Heras, U Las and Babbush, Ryan and Fowler, Austin G and Campbell, Brooks and Chen, Yu and others},
  journal={Nature},
  volume={534},
  number={7606},
  pages={222--226},
  year={2016},
  publisher={Nature Publishing Group UK London}, 
doi={10.1038/nature17658}, 
url={https://doi.org/10.1038/nature17658}
}

@article{born1928beweis,
  title={Beweis des adiabatensatzes},
  author={Born, Max and Fock, Vladimir},
  journal={Zeitschrift f{\"u}r Physik},
  volume={51},
  number={3},
  pages={165--180},
  year={1928},
  publisher={Springer}, 
url={https://link.springer.com/article/10.1007/BF01343193}, 
doi={10.1007/BF01343193}
}

@article{lipkin1965validity,
  title={Validity of many-body approximation methods for a solvable model:(I). Exact solutions and perturbation theory},
  author={Lipkin, Harry J and Meshkov, N and Glick, AJ},
  journal={Nuclear Physics},
  volume={62},
  number={2},
  pages={188--198},
  year={1965},
  publisher={Elsevier}, 
doi={10.1016/0029-5582(65)90862-X}, 
url={https://www.sciencedirect.com/science/article/pii/002955826590862X}
}

@article{molmer1999multiparticle,
  title={Multiparticle entanglement of hot trapped ions},
  author={M{\o}lmer, Klaus and S{\o}rensen, Anders},
  journal={Physical Review Letters},
  volume={82},
  number={9},
  pages={1835},
  year={1999},
  publisher={APS}, 
  doi = {10.1103/PhysRevLett.82.1835},
  url = {https://link.aps.org/doi/10.1103/PhysRevLett.82.1835}
}

@article{park2026hyqurp,
  title={HyQuRP: Hybrid quantum-classical neural network with rotational and permutational equivariance for 3D point clouds},
  author={Park, Semin and Park, Chae-Yeun},
  journal={arXiv preprint arXiv:2602.06381},
  year={2026}, 
  doi={10.48550/arXiv.2602.06381}, 
  url={https://arxiv.org/abs/2602.06381}
}

\clearpage
\newpage
\onecolumngrid
\appendix

\section*{Appendices}

Here, we present additional details, proofs, and further descriptions of the algorithms used in the main text.

\section{Explicit expression for single-body, two-body, and global Pauli operators\label{adx:explicit_blocks}}
In this appendix, we present explicit Schur basis matrix representations, $\left(A_\lm\right)_{q_\lm q'_\lm} = \bramatket{\lm, p^0_\lm, q_\lm}{A}{\lm, p^0_\lm, q_\lm'}$ for several $S_n$-equivariant Pauli operators $A$ using the canonical Schur basis states $\ket{\lm, p^0_\lm , q_\lm}$ introduced in Section~\ref{sec:schur_basis}. We focus on operators of the form 
\begin{equation} 
    \frac{1}{n}\sum^n_{i=1} P_j, \quad \frac{2}{n(n-1)}\sum_{1\leq k<j\leq n } P_j  P_k,  \quad \bigotimes_{j =1}^n P_j\,,
\end{equation}
where $P_j\in \{X,Y,Z\}$ acts on the $j$-th qubit. 

We begin by presenting the Schur basis matrix elements of symmetrized Pauli strings composed only of Pauli-$X$ operators. By analyzing the action of the single-qubit operator $X$ and the two-qubit operator $X\otimes X$ on the antisymmetric and symmetric components of the Schur basis,  the matrix elements are explicitly computed. For $n \ge 3$, they are given by
\small
\begin{equation}
    \left(A_\lm\right)_{q_\lm q'_\lm}  = 
    \begin{cases}       \frac{2\left(q_\lm(n- 2m - q_\lm) -m \right)}{n(n-1)}  \dl_{q_\lm, q'_\lm}   
+ \frac{\al^{-}(\lm, q_\lm) \al^{-}(\lm, q_\lm - 1) }{n(n-1)}\dl_{q_\lm-2, q'_\lm} 
     + \frac{\al^{+}(\lm, q_\lm) \al^{+}(\lm, q_\lm+1)}{n(n-1)}\dl_{q_\lm + 2ß, q'_\lm} 
     
     & \text{if } A = \frac{2}{n(n-1)}\sum_{k < j} X_j X_k\;, \\  
        \frac{1}{n }\left(\al^{-}(\lm, q_\lm) \dl_{q_\lm -1, q'_\lm}  + \al^{+}(\lm, q_\lm)\dl_{q_\lm + 1, q'_\lm}\right) 
        & \text{if } A = \frac{1}{n}\sum_{i} X_i\;, \\  
        (-1)^{m}\dl_{q_\lm, n - 2m - q'_\lm}\;, & \text{if } A = X^{\otimes n}\;.
    \end{cases}\nonumber
\end{equation}
\normalsize
Each irrep block is therefore sparse, with non-zero entries confined only to the main diagonal and at most the second super- and sub-diagonals.
Therefore, each block $A_\lm$ contains at most $\order{3n}$ non-zero entries, which results in $\order{n^2}$ operations for matrix construction.

The Schur basis representations for Pauli-$Y$ follow analogously. Using that $Y = iXZ$, the corresponding matrix elements are
\small 
\begin{equation}
    \left(A_\lm\right)_{q_\lm q'_\lm}  = 
    \begin{cases}       \frac{2\left(q_\lm(n- 2m - q_\lm) -m \right)}{n(n-1)}  \dl_{q_\lm, q'_\lm}   
 - \frac{\al^{-}(\lm, q_\lm) \al^{-}(\lm, q_\lm - 1) }{n(n-1)}\dl_{q_\lm-2, q'_\lm} 
     - \frac{\al^{+}(\lm, q_\lm) \al^{+}(\lm, q_\lm+1)}{n(n-1)}\dl_{q_\lm + 2ß, q'_\lm} 
     
     & \text{if } A = \frac{2}{n(n-1)}\sum_{k < j} Y_j Y_k\;, \\  
        \frac{i}{n}\left(\al^{-}(\lm, q_\lm) \dl_{q_\lm -1, q'_\lm}  - \al^{+}(\lm, q_\lm)\dl_{q_\lm + 1, q'_\lm}\right) 
        & \text{if } A = \frac{1}{n}\sum_{i} Y_i\;, \\  
        \left(-1\right)^{q_\lm + \frac{n}{2}}\dl_{q_\lm, n - 2m - q'_\lm} \;, & \text{if } A = Y^{\otimes n}\;.\nonumber
    \end{cases}
\end{equation}
\normalsize

Finally, as described in Section~\ref{sec:Sn_operators}, Pauli-$Z$ operators are diagonal in the Schur basis. With the shortened notation $h_{\lm,q_\lm} = q_\lm + m$ for the total Hamming weight of the Schur basis $\ket{\lm, p_\lm, q_\lm}$,  the matrix elements are explicitly given as:
\begin{equation}
    \left(A_\lm\right)_{q_\lm q'_\lm}  = 
    \begin{cases}        \frac{1}{n(n-1)}\left(n^2 -n - 4nh_{\lm,q_\lm} + 4h_{\lm,q_\lm}^2\right)\dl_{q_\lm, q'_\lm} & \text{if } A = \frac{2}{n(n-1)}\sum_{k < j} Z_j Z_k\;, \\  
        \frac{1}{n}\left(n - 2 h_{\lm,q_\lm}\right) \dl_{q_\lm, q'_\lm}& \text{if } A = \frac{1}{n}\sum_{i} Z_i\;, \\  
        (-1)^{h_{\lm,q_\lm}} \dl_{q_\lm, q'_\lm} & \text{if } A = Z^{\otimes n}\;.\nonumber
    \end{cases}
\end{equation}

Using the same Schur basis projection procedure, the matrix can be computed for any arbitrary two-local Pauli operators $\TC_{S_n}(P\otimes Q)$, $P, Q\in \{X, Y, Z, I\}$, which are employed, for example, in the computation of two-qubit reduced states in the numerical simulation of Section~\ref{sec:numerics}.

\section{Calculating matrix elements for the $k$-local symmetrized Pauli strings~\label{adx:k_local_generators}} 
In Section~\ref{sec:k_local_PS} of the main text, we presented a theorem stating the computational complexity of evaluating matrix elements for arbitrary $k$-local symmetrized Pauli operators given that $k \in \order{1}$. In this appendix, we justify this claim by presenting an explicit numerical algorithm for computing these matrix elements and by analyzing its computational complexity. Although the resulting expressions do not admit a simple closed form, the algorithm enables efficient evaluation of the matrix elements. Our derivation and complexity analysis are inspired by the approach developed in Appendix~E of Ref.~\cite{anschuetz2022efficient}.

\begin{theorem}[Action of symmetrized Pauli strings in the Schur basis]
Let $P_{\vec{k}}$  be a Pauli string acting on $n$ qubits as defined in Eq.~\eqref{eq:Pk} for $\vec{k} = (k_X, k_Y, k_Z)$ and locality $|\vec{k}|=k$, and let $\TC_{S_n}\!\left(P_{\vec{k}}\right)$  denote the symmetrized Pauli string. Consider the canonical Schur basis $\ket{\lm, p_\lm^0, q_\lm}$ defined by Eq.~\eqref{eq:schur_basis_expression}, which contains $m$ antisymmetric singlet state pairs and the symmetric Dicke state of Hamming weight $q_\lm$. Then, the action of $\TC_{S_n}\!\left(P_{\vec{k}}\right)$ on $\ket{\lm, p_\lm^0, q_\lm}$ is given by
\begin{align}
\TC_{S_n}\!\left(P_{\vec{k}}\right)\ket{\lambda, p_\lambda^0, q_\lambda}
&=
\frac{k_X! k_Y!k_Z!(n-k)!}{n!}
\sum_{\substack{\vec{a} \in\mathbb{Z}_{\ge 0}^3\\ \vec{a}\le \left\lfloor\frac{\vec{k}}{2}\right\rfloor\\ \norm{\vec{a}}_1 \le m}}
\ \sum_{\substack{\vec{s}\in\mathbb{Z}_{\ge 0}^3\\ \vec{s}\le \vec{k}- 2\vec{a}\\ \norm{\vec{s}}_1 \le q_\lambda}}
\mathcal{N}(q_\lm, q_\lm')
c(k_Y, \vec{a}, \vec{s}) W(q_\lm, \vec{k}, \vec{a}, \vec{s} ) 
\ket{\lambda,\; p_\lambda^0, q_\lm'(\vec{k}, \vec{a}, \vec{s})}\;, 
\label{eq:action_schur}
\end{align}
with the output Hamming weight defined as $q_\lm \equiv q_\lm'(\vec{k}, \vec{a}, \vec{s}) = q_\lambda + k_X + k_Y - 2a_X - 2a_Y - 2s_X - 2s_Y$, and the normalization factor relating symmetric states of different Hamming weight 
\begin{equation}
    \mathcal{N}(q_\lm, q_\lm') = \sqrt{\frac{q_\lm'! (n - 2m - q_\lm')! }{ q_\lm! (n - 2m - q_\lm)! }}\;. 
\end{equation}
Here, the phase factor is defined as 
\begin{equation}
    c(k_Y, \vec{a}, \vec{s}) = i^{k_Y}(-1)^{a_X + a_Z + s_Y + s_Z}\, \; 
\end{equation}
and the combinatorial weight counting all admissible Pauli assignments corresponds to
\begin{equation}
    W(q_\lm, \vec{k}, \vec{a}, \vec{s} ) = \frac{m!}{\vec{a}! (m - \norm{\vec{a}}_1)!}  \frac{q_\lm!}{\vec{s}!(q_\lm - \norm{\vec{s}}_1)!} \frac{(n - 2m - q_\lm)!}{(\vec{k} - 2\vec{a} - \vec{s})!( n - 2m - q_\lm - k + 2\norm{\vec{a}}_ 1 + \norm{\vec{s}}_1)!}\;, 
    \label{eq:combinatorial_factor}
\end{equation}
where we use the shorthand notation $\vec{v}! = v_X! v_Y! v_Z!$\; for a vector $\vec{v}$. The summations runs over integer-valued vectors $\vec{a} = (a_X, a_Y, a_Z)$ and $\vec{s} = (s_X, s_Y, s_Z)$ which are subject to componentwise inequalities, and   $\norm{\vec{a}}_1= a_X + a_Y + a_Z$ (similarly $ \norm{\vec{s}}_1= s_X + s_Y + s_Z$) denotes their $\ell_1$-norm. 
\end{theorem}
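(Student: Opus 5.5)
We write the twirl as an average over placements. The twirl $\TC_{S_n}(P_{\vec{k}})$ equals $\frac{k_X!k_Y!k_Z!(n-k)!}{n!}$ times the sum of all distinct Pauli strings with $k_X$ $X$'s, $k_Y$ $Y$'s and $k_Z$ $Z$'s. Each such string appears exactly $k_X!k_Y!k_Z!(n-k)!$ times among the $n!$ conjugates, which produces the prefactor. The task therefore reduces to computing the action of a sum over \emph{placements} of Pauli letters on $\ket{\Psi}^{\otimes m}\otimes\ket{\Sigma^{(n-2m)}_{q_\lm}}$.

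The qubits split into two families: the $m$ singlet pairs, and the $n-2m$ symmetric qubits. The latter further divide into the $q_\lm$ qubits carrying a $1$ and the $n-2m-q_\lm$ carrying a $0$ in each computational term. We handle the two families in turn.

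\textbf{Singlet pairs.} Because $(P\otimes P)\ket{\Psi}=-\ket{\Psi}$ for $P\in\{X,Y,Z\}$, placements putting the same Pauli $P$ on both members of a pair contribute cleanly. We let $a_P$ count the pairs receiving $P\otimes P$. In the twirled sum, placements that put a single Pauli (or two different Paulis) on a pair map the singlet into triplet components. These leave the image of the Young symmetrizer $\mathbf{S}_{p^0_\lm}$, and they must cancel against other placements after the full sum over $S_n$. The cleanest way to argue this is that the twirled operator is $S_n$-equivariant. It therefore preserves $\HC^{p^0_\lm}_\lm$, and we only need the overlap with states of the form $\ket{\Psi}^{\otimes m}\otimes(\cdot)$. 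Contracting each pair against $\bra{\Psi}$ kills single-occupied and mixed pairs, since $\bra{\Psi}P\otimes\eye\ket{\Psi}=0$ and $\bra{\Psi}P\otimes Q\ket{\Psi}=0$ for $P\neq Q$. This leaves the sign $(-1)^{\|\vec a\|_1}$ together with the multinomial choice $\frac{m!}{\vec a!(m-\|\vec a\|_1)!}$.

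\textbf{Symmetric qubits.} The remaining $\vec k-2\vec a$ letters act on the Dicke part. Let $s_P$ be the number of letters $P$ landing on qubits in state $1$, and let $k_P-2a_P-s_P$ be the number landing on qubits in state $0$. Counting the distinct placements for a fixed computational string gives the last two multinomials of $W$, and summing over strings is absorbed into the Dicke normalization. We fix one computational term $\ket{x}$ of the Dicke state and apply the single-qubit rules: $X$ flips, $Y=iXZ$ flips with a phase of $i$ times $(-1)^{\text{bit}}$, and $Z$ contributes $(-1)^{\text{bit}}$. Multiplying these gives the phase $i^{k_Y}(-1)^{s_Y+s_Z}$. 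The new Hamming weight is obtained by adding $k_X+k_Y-2a_X-2a_Y$ flips from $0$ and subtracting twice the flips from $1$, which gives $q'_\lm$.

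\textbf{Reassembly and main obstacle.} We combine the singlet sign with the symmetric-part phase to recover $c$, after reconciling $(-1)^{\|\vec a\|_1}$ with the stated $(-1)^{a_X+a_Z}$ via the $i^{2a_Y}$ factors from $Y$'s. The sum over computational terms then becomes a multiple of $\ket{\Sigma_{q'_\lm}}$ by symmetry. Comparing the number of terms in the two Dicke states, namely $\binom{n-2m}{q_\lm}^{-1/2}$ against the weight-$q'_\lm$ normalization, produces $\mathcal{N}(q_\lm,q'_\lm)$. I expect the main obstacle to be this final normalization bookkeeping. Each target string is hit by many source strings, and $W$ must count placements per source string while avoiding double counting against $\mathcal{N}$. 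To settle it, I would first fix the target string and count preimages, then check the result against the explicit one- and two-local formulas of Appendix~\ref{adx:explicit_blocks}.
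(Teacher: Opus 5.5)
Your route is the paper's: split the qubits into the $m$ singlet pairs and the $N=n-2m$ symmetric qubits, keep only $\eye\otimes\eye$ or $P\otimes P$ on each pair, count placements on a representative string of weight $q$ (writing $q,q'$ for $q_\lambda,q'_\lambda$), and track phases. Your treatment of the pairs makes precise what the paper only asserts. You use invariance of the span of $\ket{\Psi}^{\otimes m}\otimes\ket{\Sigma^{(N)}_{q}}$ under the equivariant operator, and then factorize placement by placement using $\bra{\Psi}P\otimes Q\ket{\Psi}=0$ for $P\neq Q$. Your phase bookkeeping is also right. The symmetric register receives only $k_Y-2a_Y$ letters $Y$, so it contributes $i^{k_Y-2a_Y}(-1)^{s_Y+s_Z}=i^{k_Y}(-1)^{a_Y}(-1)^{s_Y+s_Z}$, and this $(-1)^{a_Y}$ cancels the sign of the $a_Y$ pairs $Y\otimes Y$.

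The step you leave open is the one place where the sketch, read literally, gives the wrong answer. The ratio of the two Dicke normalizations, $\binom{N}{q}^{-1/2}\big/\binom{N}{q'}^{-1/2}$, equals $1/\mathcal{N}(q,q')$, not $\mathcal{N}(q,q')$. The missing factor comes from the many-to-one redistribution you allude to. Fix $(\vec a,\vec s)$ and let $W_{\rm sym}$ be the product of the last two multinomials in $W$. There are then $\binom{N}{q}W_{\rm sym}$ pairs (source string, placement), and they all carry the same phase. This set of pairs is invariant under simultaneous permutations of the $N$ qubits, which act transitively on strings of weight $q'$. Hence every target string is hit exactly $\binom{N}{q}W_{\rm sym}/\binom{N}{q'}$ times, and the coefficient of $\ket{\Sigma^{(N)}_{q'}}$ is
\[
\binom{N}{q}^{-1/2}\,\frac{\binom{N}{q}\,W_{\rm sym}}{\binom{N}{q'}}\,\binom{N}{q'}^{1/2}
= W_{\rm sym}\sqrt{\frac{q'!\,(N-q')!}{q!\,(N-q)!}} = W_{\rm sym}\,\mathcal{N}(q,q'),
\]
as stated. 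Your plan to fix the target and count preimages yields the same number.

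A quick check that discriminates the two options: take $\vec k=(1,0,0)$, $m=0$, $q=0$. The theorem gives $\bra{\Sigma^{(n)}_1}\tfrac1n\sum_i X_i\ket{\Sigma^{(n)}_0}=1/\sqrt n$, matching Appendix~\ref{adx:explicit_blocks}. The reciprocal factor would give $\sqrt n$, which exceeds $\|\tfrac1n\sum_iX_i\|_\infty=1$. Be aware that the normalization line of Algorithm~\ref{alg:schur_pauli_matrix_elements} uses exactly this reciprocal, so check against Appendix~\ref{adx:explicit_blocks}, not the algorithm. With this step filled in, your argument proves the statement.
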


\begin{proof}
The proof exploits two facts: (i) the canonical Schur basis states consist of an antisymmetric register of $m$ singlet pairs and a fully symmetric register of $n-2m$ qubits; (ii) the symmetrized Pauli string $\TC_{S_n}(P_{\vec{k}})$ is permutation-invariant. Therefore, the action of $\TC_{S_n}(P_{\vec{k}})$ depends only on how many Pauli operators of each type act on the singlet pairs and on the $\ket{0}$/$\ket{1}$ positions in the symmetric register. We thus decompose the action of $\TC_{S_n}(P_{\vec{k}})$ into an antisymmetric contribution and a symmetric contribution, and count all admissible assignments combinatorially.

As stated in Eq.~\eqref{eq:schur_basis_expression} of the main text, the canonical Schur basis is defined as
\begin{equation}
\nonumber
\ket{\lambda,p_\lambda^0,q_\lambda}
=
\ket{\Psi}^{\otimes m}\otimes \ket{\Sigma_{q_\lambda}^{(n-2m)}},
\end{equation}
where $\ket{\Psi}=(\ket{01}-\ket{10})/\sqrt{2}$ is the two-qubit singlet state and $\ket{\Sigma_{q_\lm}^{(n-2m)}}$ is the normalized symmetric state of Hamming weight $q_\lm$ on $n-2m$ qubits. Since $\TC_{S_n}(P_{\vec{k}})$ is invariant under permutations, it suffices to count the number of ways Pauli operators can be assigned to these two sectors.

In what follows, we will decompose our proof into studying how an operator $\TC_{S_n}(P_{\vec{k}})$ acts over the Schur basis, first studying its action on the antisymmetric part, then on the symmetric part, and keeping track of phases, combinatorial factors and normalization.

We thus begin by considering the action of a symmetrized Pauli on the antisymmetric singlet pairs. For this purpose, consider a single singlet pair, $\ket{\Psi}$. As the operator $\TC_{S_n}(P_{\vec{k}})$ is fully symmetric while the state $\ket{\Psi}$ is antisymmetric, the action of all mixed Pauli pairs cancel and only identical Pauli contribute non-trivially. For instance, one can verify that the following cases hold
\begin{equation}
\TC_{S_2}(P\otimes Q)\ket{\Psi}
=
\begin{cases}
\ket{\Psi}, & P=Q=I,\\
-\ket{\Psi}, & P=Q\in\{X,Y,Z\},\\
0, & P\neq Q \text{ and } P,Q\in\{I,X,Y,Z\}.
\end{cases}
\label{eq:singlet_action}
\end{equation}

We introduce integers $a_\mu$ with $\mu \in \{X,Y,Z\}$, denoting the number of Pauli-$\mu$ pairs acting on the antisymmetric part, which satisfy
\begin{equation}
    a_\mu \in \left\{0, \dots, \left\lfloor \frac{k_\mu}{2}\right\rfloor \right\},~~~~~a_X + a_Y + a_Z = \norm{\vec{a}}_1 \le m\;. 
\end{equation} 
After allocating $2a_\mu$ Paulis to the singlet pairs, the remaining Pauli counts acting on the symmetric state are 
\begin{equation}
    k_\mu^{(s)} = k_\mu - 2a_\mu,\qquad \mu \in \{X,Y,Z\}\;, 
\end{equation}
where the superscript $(s)$ emphasizes the action on the symmetric state.

The next step is to study how symmetrized Paulis act on the symmetric register. Due to the permutation symmetry of the operators and the state, it suffices to evaluate the action of the symmetrized Pauli string solely on a representative computational basis 
\begin{equation}
 \ket{0}^{n - 2m - q_\lm} \otimes \ket{1}^{q_\lm}\;, \nonumber 
\end{equation}
 and then re-symmetrize. Let $\vec{s} = (s_X, s_Y, s_Z)$ denote how many of the $k_\mu^{(s)}$ operators act on the qubits in state $\ket{1}$ within this representative string. These integers satisfy
\begin{equation}
0 \le s_\mu \le k_\mu^{(s)},\qquad \norm{\vec{s}}_1 = s_X + s_Y + s_Z \le q_\lm\;. \nonumber
\end{equation}

The Pauli-$X$ and Pauli-$Y$ operators flip computational basis bits, while Pauli-$Z$ only contributes phases. Hence, the output Hamming weight is 
\begin{equation}
q'_\lm \equiv q'_\lm(\vec{k},\vec{a},\vec{s})
=
q_\lm
+\left(k_X^{(s)}+k_Y^{(s)}\right)
-2(s_X+s_Y)
=
q_\lm + k_X+k_Y -2a_X-2a_Y-2s_X-2s_Y,
\label{eq:qprime_def}
\end{equation}
which matches the theorem statement.

At this point we note the action of $\TC_{S_n}(P_{\vec{k}})$ will generate a phase one needs to keep track of. This phase comes from two sources: the singlet signs as discussed above, and phases acquired when $Y$ or $Z$ acts on the qubits in the symmetric state. Tracking these contributions yield the final phase factor stated in the theorem
\[
c(k_Y,\vec a,\vec s) = i^{k_Y}(-1)^{a_X+a_Z+s_Y+s_Z}\;,
\]

Then, it is essential to count how many Pauli assignments realize a given choice of $(\vec{a}, \vec{s})$. First, choosing the singlet allocations yields multinomial factor
\begin{equation}
    \frac{m!}{a_X! a_Y! a_Z! (m - \norm{\vec{a}}_1)!}\;. \nonumber
    \label{eq:combinatorial_factor1}
\end{equation} 
On the symmetric state, we choose which of the $q_\lm$ qubits initially in state $\ket{1}$ receive $s_X, s_Y, s_Z$ Pauli operators with the rest identity operators, giving
\begin{equation}
    \frac{q_\lm!}{s_X! s_Y! s_Z! (q_\lm - \norm{\vec{s}}_1)! }\;. \nonumber
    \label{eq:combinatorial_factor2}
\end{equation}
Finally, we assign the remaining $k_\mu^{(s)} - s_\mu = k_\mu - 2a_\mu - s_\mu$ Pauli-$\mu$ operators to the $(n - 2m - q_\lm) $ qubits in state$\ket{0}$, resulting in: 
\begin{equation}
    \frac{(n - 2m  - q_\lm)!}{(k_X - 2a_X - s_X)! (k_Y - 2a_Y - s_Y)! (k_Z - 2a_Z - s_Z)! (n - 2m - q_\lm - k + 2\norm{\vec{a}}_1 + \norm{\vec{s}}_1)!}\;. \nonumber
    \label{eq:combinatorial_factor3}
\end{equation}
Multiplying these factors exactly produces the combinatorial weight $W(q_\lm, \vec{k}, \vec{a}, \vec{s})$ defined in Eq.~\eqref{eq:combinatorial_factor}.  

To finish, the full action described above maps the representative weight-$q_\lm$ state to a representative weight-$q'_\lm$ state, and we need to re-normalize the mapping. Counting all the number of computational basis set in the normalized symmetric states and considering the normalization factors, we introduce the ratio of binomial normalization factors for $q_\lm$ and $q_\lm'$ states
\[
\NC(q_\lm,q'_\lm)
=
\sqrt{\frac{q'_\lm!\,(n-2m-q'_\lm)!}{q_\lm!\,(n-2m-q_\lm)!}},
\]
which matches the theorem.

Combining all the combinatorial coefficients and phase factors above with the symmetrization prefactor $\frac{k_X! k_Y! k_Z! (n-k)!}{n!}$, and summing over all admissible $\vec{a}$ and $\vec{s}$ satisfying the stated constraints yield Eq.~\eqref{eq:action_schur}. 
\end{proof}

\begin{algorithm}[h]

\caption{Matrix elements of $\TC_{S_n}(P_{\vec{k}})$ in the canonical Schur basis}
\label{alg:schur_pauli_matrix_elements}
\KwIn{$n$ (number of qubits); Pauli weights $\vec{k}=(k_X,k_Y,k_Z)$ with $k=k_X+k_Y+k_Z$; Irrep label $\lm \equiv \lm(m) = (n -m, m)$; Basis label $q_\lm$.}
\KwOut{Nonzero matrix elements $\mel{\lambda,p_\lambda^0, q_\lm}{\TC_{S_n}(P_{\vec{k}})}{\lambda,p_\lambda^0,  q'_\lm}$ for all reachable $q'_\lambda$.}

\BlankLine
\textbf{Initialize} an associative array (or sparse map) $M[q'] \leftarrow 0$ for $q'=0,1,\dots,n-2m$\;

\BlankLine
\For(\tcp*[f]{Enumerate Pauli pairs on the antisymmetric component}){$a_X=0$ \KwTo $\min(m, \left\lfloor k_X/2\right\rfloor)$}{
  \For{$a_Y=0$ \KwTo $\min(m - a_X, \left\lfloor k_Y/2\right\rfloor)$}{
    \For{$a_Z=0$ \KwTo $\min(m - a_X - a_Y, \left\lfloor k_Z/2\right\rfloor)$}{
      $\vec{a}\leftarrow(a_X,a_Y,a_Z)$;
      
      \BlankLine
      \tcp{Remaining Pauli operators acting on the symmetric component}
      $k_X^{s}\leftarrow k_X-2a_X$\;
      $k_Y^{s}\leftarrow k_Y-2a_Y$\;
      $k_Z^{s}\leftarrow k_Z-2a_Z$\;

      \BlankLine
      \For(\tcp*[f]{Enumerate the number of Paulis hitting the $\ket{1}$-support}){$s_X=0$ \KwTo $\min(q_\lambda,k_X^{s})$}{
        \For{$s_Y=0$ \KwTo $\min(q_\lambda-s_X,k_Y^{s})$}{
          \For{$s_Z=0$ \KwTo $\min(q_\lambda-s_X-s_Y,k_Z^{s})$}{
            $\vec{s}\leftarrow(s_X,s_Y,s_Z)$;\quad $S\leftarrow \|\vec{s}\|_1$\;

            \BlankLine
            \tcp{Target Hamming weight after action}
            $q' \leftarrow q_\lambda + k_X + k_Y -2a_X-2a_Y-2s_X-2s_Y$\;
            \If{$q'<0$ \textbf{or} $q'>n-2m$}{\textbf{continue}}
            
            \BlankLine
            \tcp{Phase factor}
            $c \leftarrow i^{k_Y}(-1)^{a_X+a_Z+s_Y+s_Z}$\;

            \BlankLine
            \tcp{Combinatorial weight $W(q_\lambda,\vec{k},\vec{a},\vec{s})$}
            $W \leftarrow
            \dfrac{m!}{\vec{a}!\,(m-A)!}
            \cdot
            \dfrac{q_\lambda!}{\vec{s}!\,(q_\lambda-S)!}
            \cdot
            \dfrac{(n-2m-q_\lambda)!}{
              (\vec{k}-2\vec{a}-\vec{s})!\,
              \bigl(n-2m-q_\lambda-\|\vec{k}\|_1+2\norm{\vec{a}}_1+ \norm{\vec{s}}_1\bigr)!
            }$\;

            \BlankLine
            \tcp{Accumulate the matrix element contribution}
            $M[q'] \leftarrow M[q'] + c\cdot W$\;
          }
        }
      }
    }
  }
}

\For(\tcp*[f]{Multiply by the normalization factor}){$q' = 0$ \KwTo $n - 2m$} {
    $M[q'] \leftarrow \frac{k_X! k_Y! k_Z! (n-k)!}{n!} \cdot \sqrt{\frac{q_\lm! (n - 2m  - q_\lm)!}{q'! (n - 2m  - q')!}}\cdot M[q']$
}
\BlankLine
\Return{$\{(q',\,M[q'])\}_{q'=0}^{n-2m}$ as the nonzero matrix elements 
$\mel{\lambda,p_\lambda^0, q'}{\TC_{S_n}(P_{\vec{k}})}{\lambda,p_\lambda^0, q_\lm}$}\;
\end{algorithm}

The previous steps are summarized in Algorithm~\ref{alg:schur_pauli_matrix_elements}, which provides an explicit and exhaustive procedure for evaluating the matrix elements of the symmetrized Pauli string $\TC_{S_n}(P_{\vec{k}})$ in the canonical Schur basis. The procedure is efficient when the Pauli weight is constant, i.e., $k \in \order{1}$, but becomes computationally impractical in the regime $k \in \order{n}$. Based on this algorithm, we now analyze the computational complexity in the regime $k \in \order{1}$.  

\begin{corollary}[Computational complexity for the matrix element computation of symmetrized Pauli strings in the Schur basis]
\label{cor:cost_matrix_calculation}
If the Pauli weight $k=k_X+k_Y+k_Z$ is constant, i.e.\ $k\in \mathcal{O}(1)$,
then all non-zero matrix elements of $\TC_{S_n}(P_{\vec{k}})$ in the canonical Schur basis, defined by Eq.~\eqref{eq:action_schur}, can be evaluated with total computational cost $\mathcal{O}(n^2)$.
\end{corollary}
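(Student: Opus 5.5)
The plan is a direct cost count of Algorithm~\ref{alg:schur_pauli_matrix_elements}. The correctness of its output is already guaranteed by the preceding theorem, Eq.~\eqref{eq:action_schur}. I would count three things: how many input pairs $(\lm,q_\lm)$ the algorithm is run on, the work per call, and the preprocessing needed to make every arithmetic step $\order{1}$.

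First, the pairs. The irrep label $\lm=(n-m,m)$ ranges over $m=0,\dots,\lfloor n/2\rfloor$, and for each $m$ the label $q_\lm$ ranges over $0,\dots,n-2m$. So the number of calls is
\begin{equation}
\sum_{m=0}^{\lfloor n/2\rfloor}(n-2m+1)\in\order{n^2}.
\end{equation}

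Second, the work per call, which I will show is independent of $n$. The outer loops enumerate $\vec a$ with $a_\mu\le\lfloor k_\mu/2\rfloor$, giving at most $\prod_\mu(k_\mu/2+1)$ iterations. The inner loops enumerate $\vec s$ with $s_\mu\le k_\mu-2a_\mu\le k_\mu$, giving at most $\prod_\mu(k_\mu+1)$ iterations. Both counts are $\order{1}$ when $k\in\order{1}$; the constraints involving $m$ and $q_\lm$ only shrink the ranges. Every pair $(\vec a,\vec s)$ contributes to exactly one output index, $q'_\lm=q_\lm+k_X+k_Y-2a_X-2a_Y-2s_X-2s_Y$. Hence each input column has at most $\order{1}$ nonzero entries, so the block is banded with bandwidth at most $k$. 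This banded structure is also what the later eigendecomposition claim relies on.

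Third, and this is the main obstacle: the final normalization loop in the algorithm as written runs over all $q'=0,\dots,n-2m$. Run naively, that costs $\order{n}$ per call and $\order{n^3}$ in total. I would fix this by keeping $M$ as a sparse map and applying the prefactor and $\mathcal N(q_\lm,q'_\lm)$ only to the $\order{1}$ keys actually reached. This is a harmless restructuring and does not change the output.

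Next, the arithmetic inside the loops. The weight $W$, the factor $\mathcal N$ and the prefactor $k_X!k_Y!k_Z!(n-k)!/n!$ are built from factorials of integers at most $n$. I would precompute $\log j!$ (or $j!$ itself) for all $j\le n$ in $\order{n}$ time and memory. After that, each multinomial coefficient is a constant number of table lookups, so each loop body costs $\order{1}$. Working with logarithms also avoids overflow, since these factorials grow super-exponentially.

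Combining the counts gives $\order{n}+\order{n^2}\cdot\order{1}=\order{n^2}$ total operations for all nonzero matrix elements across all irreps, which proves the corollary. For completeness I would note where the assumption $k\in\order{1}$ enters: the inner loop count $\prod_\mu(k_\mu+1)^2$ is $\order{1}$ only for constant $k$. If $k$ grows with $n$ this count becomes polynomial in $n$, which is consistent with the remark in the main text that generic $k\in\order{n}$ recovers a higher cost.
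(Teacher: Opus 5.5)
Your proposal is correct and follows the same route as the paper's proof. Both bound the number of admissible $(\vec{a},\vec{s})$ per call by a constant depending only on $\vec{k}$, multiply by the $\sum_{m}(n-2m+1)\in\mathcal{O}(n^2)$ calls over $(\lambda,q_\lambda)$, and make factorials $\mathcal{O}(1)$ through a precomputed log-factorial table. You also point out that the dense initialization and the final normalization loop over all $q'=0,\dots,n-2m$ in Algorithm~\ref{alg:schur_pauli_matrix_elements} would, taken literally, cost $\mathcal{O}(n)$ per call and $\mathcal{O}(n^3)$ in total; your sparse-map fix tightens a point the paper covers only by assuming matrix initialization is $\mathcal{O}(1)$, which leaves that final loop unaddressed.
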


\begin{proof}
For simplicity, we assume throughout the proof that matrix initialization and factorial evaluations appearing within a single loop iteration take $\order{1}$ time. 
Algorithm~\ref{alg:schur_pauli_matrix_elements} evaluates a matrix element of a symmetrized Pauli string by iterating over integer vectors $\vec{a}$ and $\vec{s}$ subject to the componentwise constraints
\begin{equation}
    \vec{a}\le \left\lfloor \frac{\vec{k}}{2}\right\rfloor,
\qquad
\vec{s}\le \vec{k}-2\vec{a}, \nonumber
\end{equation}
together with the $\ell_1$-norm conditions $\norm{\vec{a}}_1 \le m$ and $\norm{\vec{s}}_1 \le q_\lm$. 
When $k \in \order{1}$, the total number of admissible combinations of $\vec{a}$ and $\vec{s}$ is bounded by a constant that depends only on $\vec{k}$ and is independent of $n$. Therefore, the cost of evaluating all contributions for fixed $(m, q_\lm)$ scales as $\order{(k_X\cdot k_Y \cdot k_Z)^2}\sim\order{1}$. 

Moreover, to obtain the full action for a fixed irrep label $\lm =(n-m, m)$, one must evaluate Eq.~\eqref{eq:action_schur} for all the Hamming weight $q_\lm = 0,\dots,n-2m$, which requires $\Theta(n - 2m)$ calls to Algorithm~\ref{alg:schur_pauli_matrix_elements}. Since each call costs $\order{1}$, the total time complexity for fixed $m$ is $\order{n - 2m}$. 
Finally, summing over all possible values $m = 0, 1,\dots, \left\lfloor \frac{n}{2}\right\rfloor$ yields a total runtime of
\[
\sum_{m=0}^{\lfloor n/2\rfloor} \mathcal{O}(n-2m)\;\sim\; \mathcal{O}(n^2).
\]
Therefore, for constant Pauli string weight $k \in \order{1}$, all nonzero matrix elements of $\TC_{S_n}(P_{\vec{k}})$ in the canonical Schur basis can be computed in total time $\order{n^2}$. 
\end{proof}

In practice, the assumption of constant-time factorial evaluation is satisfied by precomputing the corresponding log-factorials $\log(t!)$ for $t = 1,\dots,n$ once at the start of the algorithm, thereby, avoiding repeated factorial evaluations within the inner loops.
To validate the claim of Corollary~\ref{cor:cost_matrix_calculation}, Figure~\ref{fig:simulation_time_matrix} shows the numerical simulation time required to compute the matrix elements for all irrep blocks $\lambda$ as a function of the number of qubits. 

\begin{figure}
\subfloat[$k_X = k_Y = k_Z = 1$]{\includegraphics[width=0.33\linewidth]{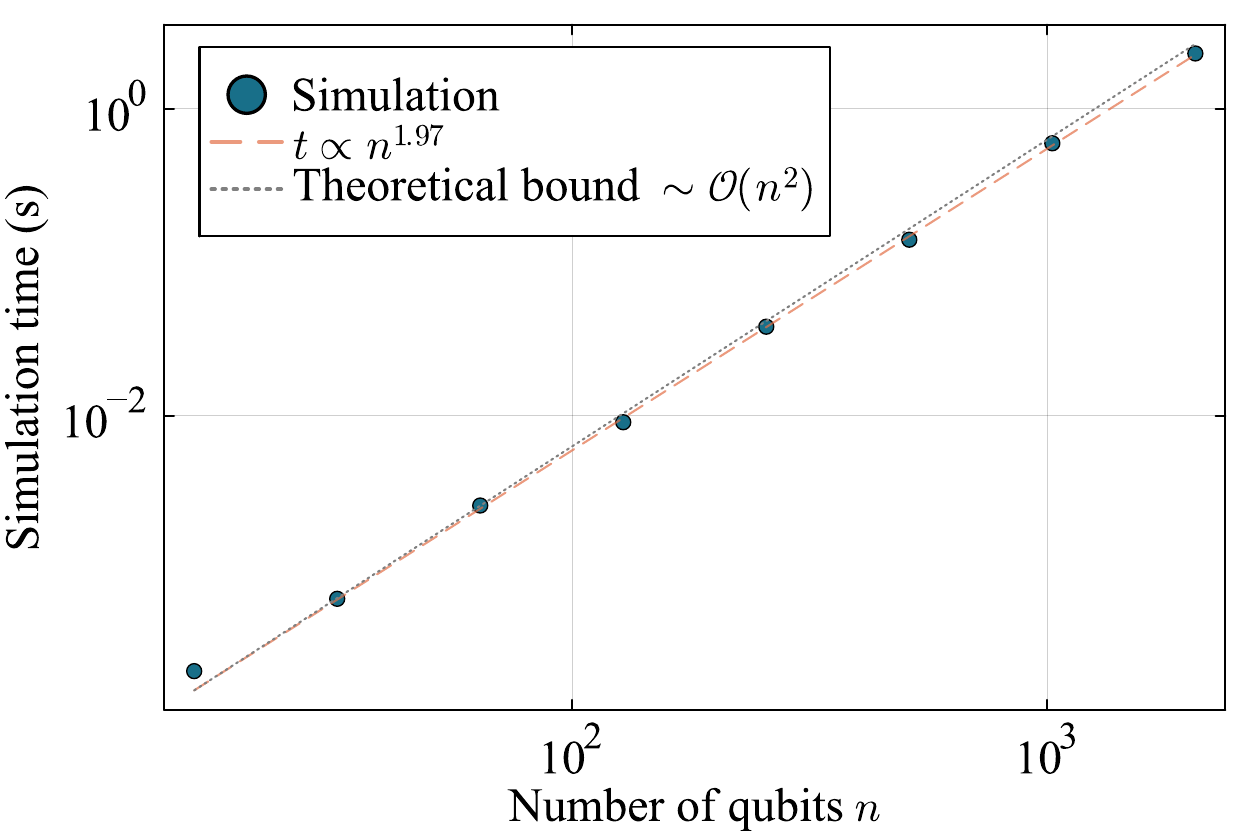}}
\subfloat[$k_X = k_Y = k_Z = 5$]{\includegraphics[width=0.33\linewidth]{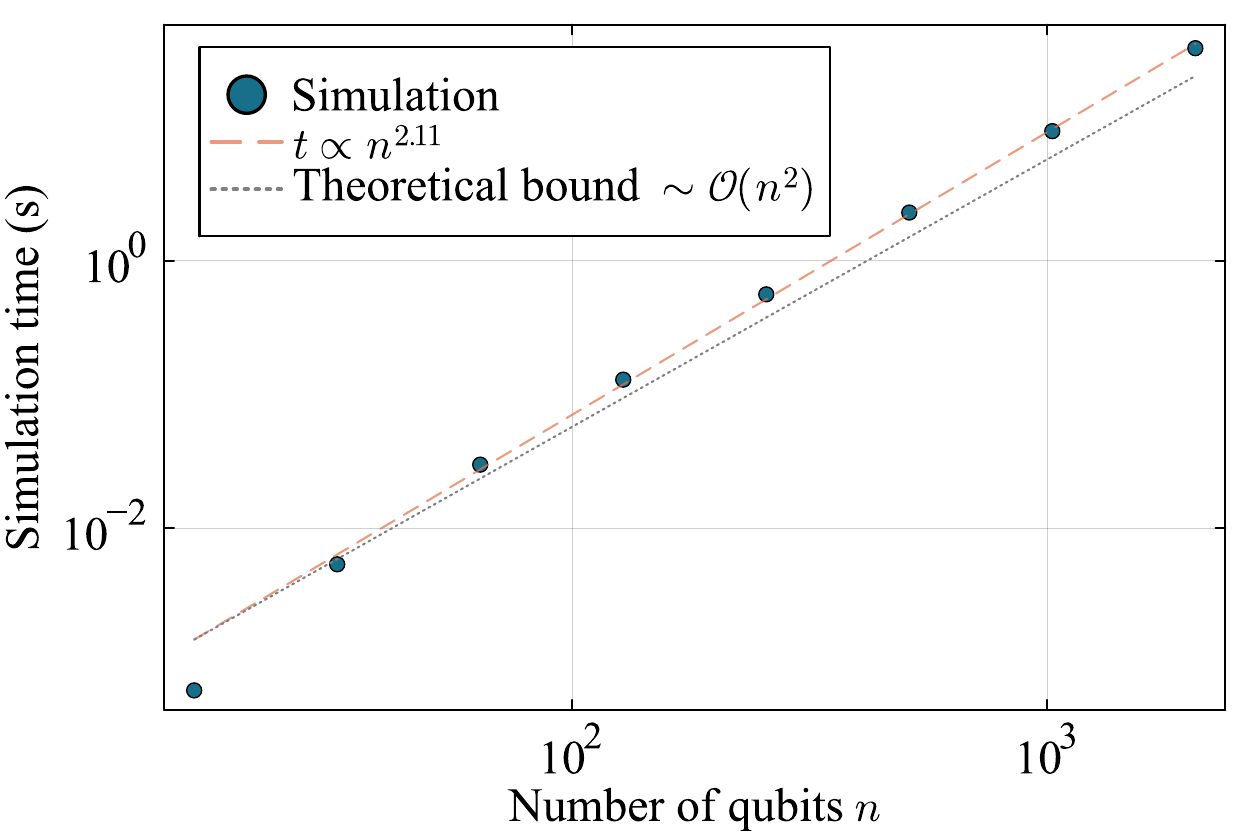}}
\subfloat[$k_X = 4, k_Y = 2, k_Z = 6$]{\includegraphics[width=0.33\linewidth]{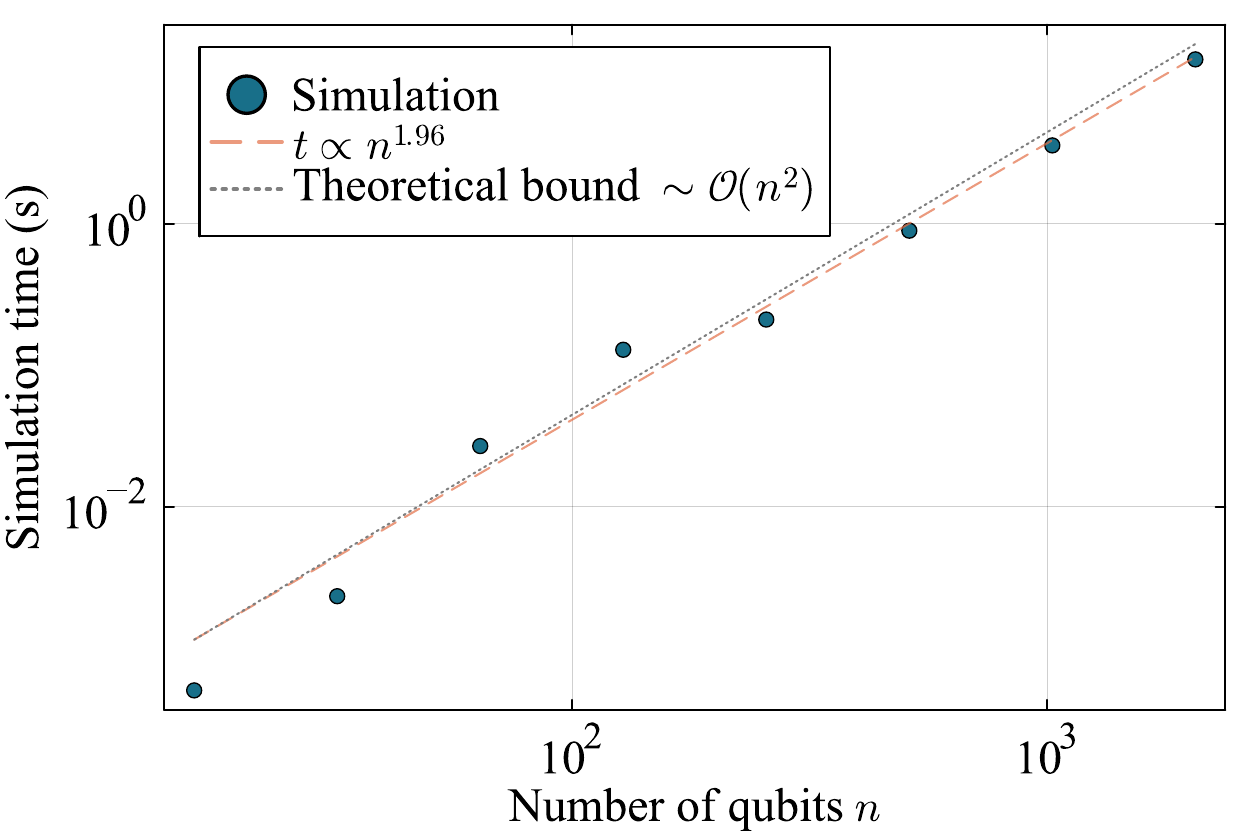}}
\caption{\textbf{Simulation time required to compute the matrix elements of $\TC_{S_n}(P_{\vec{k}})$ in the Schur basis for different choices of $\vec{k}$ with $k \in \order{1}$ using Algorithm~\ref{alg:schur_pauli_matrix_elements}.} All the simulations were performed using the Julia package and factorial evaluations inside the algorithm were optimized by precomputing log-factorials at the start. The observed simulation time scales at most as $\order{n^2}$, in agreement with the claim of Corollary~\ref{cor:cost_matrix_calculation}, with a slightly higher runtime observed in (b). This discrepancy is due to additional practical overhead, primarily arising from the evaluation of bound conditions. }
    \label{fig:simulation_time_matrix}
\end{figure}

\section{Classical shadow}
Before presenting the details of quantum sample complexity for $S_n$-equivariant quantum models, we first provide a brief overview of the classical shadow protocol~\cite{huang2020predicting, sack2022avoiding, zhao2021fermionic, bertoni2024shallow} which estimates the expectation values $\Tr[\rho O_i]$ for a set of $M$ observables $\{O_i\}_{i = 1}^M$ given a $d$-dimensional quantum state. 
The classical shadow protocol extracts information from an unknown quantum state $\rho$ by evolving it under a random unitary $V$ sampled from an ensemble of unitaries, $\VC$, and then measuring it in a fixed orthonormal basis $\WC = \{\ket{w_i}\}_{i = 1}^d$, satisfying $\sum_{i = 1}^d \Pi_{w_i} = \eye$ with the projectors $\textbf{P} = \{\Pi_{w_i} = \ketbraq{w_i}\}_{w_i}$. The output of a shadow, namely the tuple $\{V,\ket{w}\}$ is called a snapshot, and we denote as $\MC$ the quantum channel that maps the state into the snapshot. 

More specifically,  $\MC$ is given by
\begin{align}
    \MC_{\VC}(\rho) & = \Ebb_{V \thicksim \VC, w, \rho}\left[V\ad  \Pi_w V \right] = \sum_w \int_{V\thicksim \VC} dV \, \mu(V) \Tr[\rho V\ad \Pi_w V] V\ad  \Pi_w V \nonumber \\  
    & = \Tr_{A}\left[\left(\Ebb_V\left[\left(V\ad\right)^{\otimes 2}\tilde{\Pi}V^{\otimes 2}\right]\right)\Big(\rho \otimes \eye\Big)\right],~~~\text{with } \tilde{\Pi} = \sum_w \left(\Pi_w\right)^{\otimes 2} 
        \;.
    \label{eq:measurement_channel}
\end{align}
where the expectation value $\Ebb_{V \thicksim \VC, w, \rho}$ is taken with respect to the probability distribution $p_\rho(V, w) = \mu(V)\Tr[V\ad \rho V \Pi_w]$ given the sampling probability $\mu(V)$ for $V\sim \VC$. In the last line, the measurement channel is expressed on $2n$-qubits, where we trace out the first $n$-qubits via $\Tr_A[\; \cdot\;]$. 

Our goal is to recover an unbiased estimator of $\rho$ by inverting the action of $\MC$ on the snapshots. That is, we need to compute  the pseudo-inverse $\MC_{\VC}^{-1}$\footnote{If the unitary ensemble $\VC$ is tomographically complete, then the corresponding measurement channel $\MC$ is invertible, allowing one to explicitly construct its inverse. } 
\begin{equation}
    \rholh_{V, w} = \MC^{-1}_{\VC}\left(V\ad \Pi_w V\right)\;.
\end{equation}

By repeating this procedure $N$ times, one collects $N$ independent snapshots with the corresponding measurement data $\{V,\ket{w}\}$. 
Since $\rholh_{V, w}$ forms unbiased estimators of $\rho$, we can compute an unbiased estimator $\hat{o}_i$ for the expectation value of the observable $O_i$ given by  
\begin{equation}
    \hat{o}_i= \Tr[\rholh_{V, w} O_i ] = \Tr[O_i \MC^{-1}(V\ad \Pi_{w} V)]\;,  
    \label{eq:cs_estimates}
\end{equation}
satisfying $\Ebb_{V\sim \VC, w}[\hat{o}_i]  = \Tr[\rho O_i]$.

In particular, the quantum channel $\MC$ defines an operator subspace $\LC_{\rm vis}$, so-called visible operator space~\cite{van2022hardware}, consisting of all the operators that can be estimated via the classical shadow: 
\begin{equation}
    \LC_{\rm vis} = \frak{Im}(\MC) =  \spn\{V\ad \Pi_w V\; \lvert \; V \in \VC,~\ket{w}\in \WC \}\; \subseteq \LC(\HC)\;, 
    \label{eq:visible_space}
\end{equation}
where $\frak{Im}(\MC)$ denotes the image of $\MC$ and $\LC(\HC)$ is the space of linear operators acting on the Hilbert space $\HC$. 

A central goal of the classical shadow is to minimize the number of snapshots $N$ required to accurately estimate the expectation values. This is controlled by the variance of the estimator, 
\begin{equation}
    \Var[\hat{o}_i]  = \Ebb_{V\sim \VC, w}[\hat{o}_i^2] - \left(\Ebb_{V\sim \VC, w}[\hat{o}_i] \right)^2 \le \norm{O_i}^2_{\rm shadow} 
    \end{equation}
where $\norm{O_i}^2_{\rm shadow}$ denotes the shadow norm defined as  
\begin{equation}
    \norm{O_i}^2_{\rm shadow} = \max_{\sigma : {\rm state}} \Ebb_{V\thicksim\VC,~w}\left[ \Tr[\MC^{-1}(V\ad \Pi_w V)O ]^2\right]\;.
    \label{eq:shadow_norm}
\end{equation}
The previous equation showcases that the shadow norm explicitly depends on the unitary ensemble $\VC$ as well as the measurement basis $\WC$~\cite{sauvage2024classical, wan2022matchgate, holtz2012alternating, west2024real}.

This leads to an upper bound on the number of snapshots $N$ required for the classical shadow protocol to accurately estimate the set of expectation values $\{\Tr[\rho O_i]\}_{i=1}^M$ up to an (additive) error $\ep$ and the success probability of at least $1- \dl \in [0, 1]$ as follows
    \begin{equation}
        N \in \order{\frac{\log(M/\dl)}{\ep^2} \max_{i = 1, \dots M} \norm{O_i}^2_{\rm shadow}}\;,
    \label{eq:classical_shadow}
    \end{equation}

\section{Deep permutation-invariant classical shadow\label{adx:pi_cs}}
In this section, we provide a review of a variant of classical shadows, tailored for quantum circuits with symmetry, specifically permutation group. We focus on Permutation Invariant Classical Shadow (PI-CS)~\cite{sauvage2024classical}, which is applied to a block-diagonalized quantum state by preparing the initial state $\rho$ and implementing the Schur transform on a quantum computer. 

\subsection{Symmetric classical shadow}

Given a group $G$ with unitary representation $R \to \Ubb(\HC)$, the goal of symmetric classical shadow protocol is to choose the measurement primitives $\VC$ and the set of projections $\mathbf{P}$ such that the corresponding visible operator space satisfies
\begin{equation}
    \LC_{\rm vis} = \LC^G = \left\{A \in \LC(\HC) \; \middle \lvert \; [A, R(g)] = 0,~\forall g \in G \right\} \;,  
\end{equation}
that is, all operators in the visible space (c.f. Eq.~\eqref{eq:visible_space}) should be equivariant with respect to $G$. Here, we denote $\LC^G = \text{comm}(G)$ to explicitly express that it is a space of linear operators.

\begin{figure}[h]
    \centering
    \includegraphics[width=0.55\textwidth, trim = {0.5cm, 0.5cm, 0.5cm, 0.5cm}]{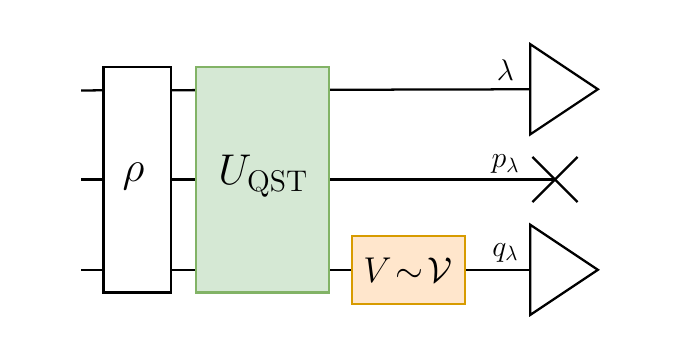}
    \caption{\textbf{Schematic diagram of Permutation Invariant Classical Shadow.} The initial state $\rho$ is transformed into a block-diagonal form by the quantum circuit responsible for the Schur transform $\QST$. We perform the permutation invariant classical shadow by measuring only the irrep register $\lm$ and the dimension register $q_\lm$, while leaving the multiplicity register $p_\lm$ untouched.}
    \label{fig:pi_cs}
\end{figure}

Let us formalize the symmetric classical shadow framework for estimating the expectation value $\Tr[\rho O_i]$ of a quantum state $\rho$, where the observables $\{O_i\}_{i = 1,\dots,M}$ are equivariant with respect to a symmetry group $G$. By complete reducibility of finite-dimensional unitary representations, there exists a basis set $\{\ket{\lm, p_\lm, q_\lm}\}_{\lm, p_\lm, q_\lm}$ in which the unitary representation $R$ admits an isotypic decomposition 
\begin{equation}
    R(g) \cong \bigoplus_\lm r_\lm \otimes \eye_{d_\lm}\;, 
\end{equation}
where $r_\lm$ is the irrep of dimension $m_\lm := \text{dim}(r_\lm)$ and $d_\lm := \text{mult}(r_\lm)$ denotes its multiplicity.
Accordingly, the Hilbert space admits the corresponding isotypic decomposition
\begin{equation}
    \HC \cong \bigoplus_\lm \HC_\lm \cong \bigoplus_\lm \left(S_\lm \otimes W_\lm \right)\;, 
\end{equation} 
where $S_\lm$ is the space of the irrep $r_\lm$ and $W_\lm \cong \text{Hom}_G(S_\lm, \HC)$ is the multiplicity space, i.e., the space of  $G$-equivariant linear maps from $S_\lm$ to $\HC$ whose dimension equals the multiplicity $d_\lm$.
Under the same basis change, the $G$-equivariant operator $A$ admits a block-diagonal decomposition of the form 
\begin{equation}
    A \cong  \bigoplus_\lm \eye_{m_\lm} \otimes A_\lm\;,  \label{eq:irrep_A_adx}
\end{equation}
where $A_\lm$ acts only on the multiplicity space $W_\lm$, and therefore has dimension $d_\lm$.

We assume that the isotypic basis can be decomposed into a tensor product of separate registers as $\ket{\lm, p_\lm, q_\lm} = \ket{\lm} \otimes \ket{p_\lm} \otimes \ket{q_\lm}$. We will refer to each register as irrep, multiplicity and dimension registers, respectively, with $p_\lm  \in [m_\lm]$ and $q_\lm\in [d_\lm]$. Throughout, the terminology ``multiplicity'' and ``dimension'' refer to the block structure of $G$-equivariant operators (i.e., the commutant of the group action), rather than to the decomposition of the group representation itself, in order to remain consistent with the main text.
In particular, storing a basis label $q_\lm$ requires $\lceil \log_2(d_{\max})\rceil$ qubits, where $d_{\max} = \max_\lm (d_\lm)$ denotes the largest irrep dimension of the group-equivariant operators. For example, in the case of the permutation group, the largest irrep dimension of the $S_n$-equivariant operators scales as $d_{\max} = \max_\lm(d_\lm) = n+1$. Hence, the dimension register consists of $\lceil \log_2(n+1) \rceil$ qubits, while the associated Hilbert space has dimension $n+1$. Finally, we emphasize that the definitions of dimension and multiplicity differ from those used in the original study~\cite{sauvage2024classical}. In this work, we define these notions for the irrep blocks of a commutant algebra whereas Ref.~\cite{sauvage2024classical} defines them for irreducible components in unitary representations of the group.  These roles are effectively interchanged between the two contexts, which can lead to confusion if not carefully distinguished. In particular, what we denote as the largest dimension of the group-commutant irrep, $d_{\max}$, corresponds to the largest multiplicity of a group irrep underlying the shadow protocol.

The key idea of the symmetric classical shadow is to perform the measurement only on the irrep and dimension registers, while leaving the multiplicity register untouched. Accordingly, we consider random unitaries $V \in \VC$ that has the form $V = \eye \otimes \eye \otimes V_\lm$, where $V_\lm$ acts solely on the dimension register, as illustrated in Fig.~\ref{fig:pi_cs}. 

The symmetric shadow protocol proceeds as follows:
\begin{enumerate}
    \item Apply a quantum circuit $U_{\rm block}$ to block-diagonalize the initial state. 
    \item Measure the irrep register $\ket{\lm}$, obtaining the irrep label $\lm$.
    \item Apply a random unitary $V_\lm \in \VC$ to the dimension register $\ket{q_\lm}$. 
    \item Measure the dimension register, yielding outcome $q_\lm$.   
\end{enumerate}
Within this procedure, the corresponding projective measurement operators are given by:
\begin{equation}
    \mathbf{P}^G  = \left\{\Pi_{\lm, q_\lm}\right\}_{\lm, q_\lm} = \left\{ \sum_{p_\lm=1}^{m_\lm} \ketbraq{\lm}\otimes \ketbraq{p_\lm}  \otimes \ketbraq{q_\lm} \right\}_{q_\lm = 1} ^{d_\lm}\;,
\end{equation}
and given the unitary primitives $\VC$, the resulting visible space $\LC_{\rm vis}$ contains operators:
\begin{equation}
    V\ad \Pi_{\lm, q_\lm} V = \sum_{p_\lm = 1}^{m_\lm} \ketbraq{\lm}  \otimes \ketbraq{p_\lm} \otimes V_\lm\ad \ketbraq{q_\lm}V_\lm = \ketbraq{\lm} \otimes \eye \otimes V_\lm\ad \ketbraq{q_\lm}V_\lm \,.
\end{equation}
Therefore, $\LC_{\rm vis} \subseteq  {\rm comm}(G)$, which consists of $G$-equivariant observables of the form in Eq.~\eqref{eq:irrep_A_adx}. Thus, it suffices to apply standard classical shadow techniques---such as local or global Clifford measurements---on $q$-qubits of the dimension register.

Mathematically, the expectation value can be written as a sum over the irreps: 
\begin{equation}
    o = \Tr[\rho O] = \sum_\lm   \Tr[\rho \left(\eye^{m_\lm} \otimes O_\lm \right)]  = \sum_\lm \sum_{p_\lm=1}^{m_\lm} \Tr[\rho^{p_\lm}_{\lm} O_\lm]\;. 
    \label{eq:exp_val_symm_CS}
\end{equation}
 Here $\rho_\lm^{p_\lm}$ denotes the $p_\lm$-th diagonal multiplicity block, defined as the restriction of the state $\rho$ to the irrep subspace $\HC_\lm^{p_\lm} = \spn\{\ket{\lm, p_\lm, q_\lm}\}_{q_\lm=1}^{d_\lm}$. More precisely, $\rho_\lm^{p_\lm}$ is an operator acting on $\HC_\lm^{p_\lm}$, i.e, $\rho_\lm^{p_\lm} \in \BC(\HC_{\lm}^{p_\lm}) \cong \BC(\Cbb^{d_\lm})$, and is obtained by projecting $\rho$ onto the $p_\lm$-th copy of the irrep as
\begin{equation}
    \rho_\lm^{p_\lm} =  (\bra{\lambda, p_\lm}\otimes \eye )\, \rho\, (\ket{\lm, p_\lm} \otimes \eye)\;, 
\end{equation}
where we denote $\ket{\lambda, p_\lm}$ the orthonormal basis living in the multiplicity space, $\Cbb^{m_\lm}$. 

It is worth noting that, in Ref.~\cite{sauvage2024classical}, the authors restrict attention to symmetric states of the form $\rho \cong \bigoplus_\lm \eye_{m_\lm}\otimes \rho_\lm $, for which the reduced state $\rho_\lm$ is identical across all multiplicity sectors. In this case, the sum in Eq.~\eqref{eq:exp_val_symm_CS} simplifies to $o = \sum_\lm m_\lm \Tr[\rho_\lm O_\lm]$ as the state $\rho_\lm$ is repeated $m_\lm$ times. However,  for a general input state $\rho$, the reduced states $\rho^{p_\lm}_\lm$ corresponding to different multiplicity labels $p_\lm$ need not coincide, as visually explained in Fig.~\ref{fig:rho_symmetric}.

Upon measuring the irrep label, the outcome label $\lm$ is obtained with probability $p(\lm) = \Tr[\tau_\lm]$  where we denote $o_\lm = \Tr[\tau_\lm O_\lm]$ with $\tau_\lm = \sum_{p_\lm=1}^{m_\lm} \rho_\lm^{p_\lm}$. If $\rho$ is also $G$-equivariant, the probability simplifies to $p(\lm) = m_\lm \Tr[\rho_\lm ]$, which is presented in Ref.~\cite{sauvage2024classical}.   Tracing out the multiplicity space yields the normalized reduced state $\tilde{\tau}_\lm = \tau_\lm /\Tr[\tau_\lm]$. Applying the classical shadow protocol on the dimension register then allows one to construct an unbiased estimator $\hat{o}_\lm$ of $o_\lm = \Tr[\tilde{\tau_\lm} O_\lm]$ such that the total expectation value is recovered as $o = \sum_\lm p(\lm) o_\lm$. 

It follows that the unbiased estimator $\hat{o}$ of $o$ has variance defined as 
\begin{equation}
    \Var[\hat{o}]_{\rm{block-CS}} = \left(\sum_\lm p(\lm)\Var[\hat{o}_\lm]_{\rm{CS}}\right) + \Var_\lm o_\lm\;.
\end{equation}
Here, we use the subscripts  $\textrm{block-CS}$ and $\textrm{CS}$ to denote that the variance arises from the symmetric classical shadow and the standard classical shadow applied within each irrep sector, respectively. The first term corresponds to the average shadow-induced variance within each sector $\lm$, while the second term captures the classical variance associated with sampling over irrep labels. Using the fact that $\Var[\hat{o}_\lm]_{\rm CS} \le \norm{O_\lm}_{\rm CS}\le d_{\max}^2\norm{O}_\infty^2$ and $\Var_\lm o_\lm \le \norm{O_\lm}^2_\infty \le \norm{O}^2_\infty$ together with standard  shadow norm bounds, we obtain the following variance bounds: 
\begin{equation}
    \Var[\hat{o}]_{\rm{block-LC}} \le (d_{\max}^2 + 1) \norm{O}^2_{\infty} 
    \label{eq:var_block_lc}
\end{equation}
\begin{equation}
    \Var[\hat{o}]_{\rm{block-GC}} \le 3(d_{\max}^2 + 1) \norm{O}^2_{\infty} 
    \label{eq:var_block_gc}
\end{equation}
with the notation $\text{LC}$ for local Clifford and $\text{GC}$ for global Clifford measurements on the dimension register. The result implies that the sample complexity of the symmetric shadow protocol is bounded by the largest irrep dimension of the group-equivariant operators, which is equivalently given by the largest irrep multiplicity of the underlying group representation. 

\subsection{Permutation-invariant classical shadow}
We now focus on the case $G = S_n$ and formalize the PI-CS framework. The most direct implementation uses quantum Schur transform (QST), denoted as $\QST$,  which changes the basis from the computational basis into Schur basis. 
Efficient quantum circuits implementing QST have already been proposed with the quantum time complexity of $\order{n \cdot \poly(\log \ep_{\rm QST}^{-1})}$ where $\ep_{\rm QST}$ quantifies the accumulated approximation error~\cite{bacon2005quantum, bacon2006efficient, kirby2017practical, anschuetz2022efficient}.

Under this setting, following Eqs.~\eqref{eq:var_block_lc} and \eqref{eq:var_block_gc}, the variance of the estimated expectation values can be bounded as~\cite{sauvage2024classical} 
\begin{equation}
    \Var[\hat{o}]_{\rm QST-LC}\le (n^2 + 2n + 2)  \norm{O}_\infty^2,  
\end{equation}
and
\begin{equation}
    \Var[\hat{o}]_{\rm QST-GC}\le 3(n^2 + 2n + 2)  \norm{O}_\infty^2\,,  
\end{equation}
where we use $d_{\max} = n+1$ for the Schur basis. 
In particular, for the (normalized) symmetrized Pauli operators defined in Eq.~\eqref{eq:Sn_observables}, the spectral norm satisfies $\norm{O}_{\infty}\le 1$ by its construction.

Following the bounds of the variance, we obtain the following theorem for quantum sample complexity using both local and global Clifford measurements.
\begin{theorem}[Quantum Sample Complexity of deep PI-CS]
    Consider a general input state $\rho$ and a  collection of $M$ $S_n$-equivariant observables $O_i$, as well as parameters $\ep, \dl > 0$. Then, with probability (at least) $1-\dl$, classical shadows of size 
    \begin{equation}
        N \in \order{\log(\frac{M}{\dl})\frac{n^2}{\ep^2} \max_i \norm{O_i}^2_\infty} \;, 
    \end{equation}
    suffice to jointly estimate all $M$ expectation values up to additive accuracy $\ep$.
\label{thm:classical_deep_shadow_PI}
\end{theorem}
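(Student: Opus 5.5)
The proof will combine the per-observable variance bound for deep PI-CS with the standard median-of-means concentration argument, Eq.~\eqref{eq:classical_shadow}, and a union bound over the $M$ observables.

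\textbf{Step 1: the estimator is unbiased.} We apply $\QST$ to each copy of $\rho$. We then measure the irrep register to get $\lm$ with probability $p(\lm)=\Tr[\tau_\lm]$, where $\tau_\lm=\sum_{p_\lm}\rho_\lm^{p_\lm}$. Finally we perform a local or global Clifford shadow on the $\lceil\log_2(n+1)\rceil$-qubit dimension register. Averaging the snapshot estimator $\hat o_\lm$ over the Clifford ensemble gives $\Tr[\tilde\tau_\lm O_\lm]$, because the protocol is a standard shadow on the reduced state $\tilde\tau_\lm$. Averaging over $\lm$ then reproduces $\sum_\lm \Tr[\tau_\lm O_\lm]=\Tr[\rho O]$, by Eq.~\eqref{eq:exp_val_symm_CS}. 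Hence $\hat o=\hat o_\lm$ is an unbiased single-shot estimator of $\Tr[\rho O_i]$ for every $i$. We need not assume that $\rho$ is $S_n$-equivariant, since only the multiplicity-summed blocks $\tau_\lm$ enter.

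\textbf{Step 2: variance bound.} We use the law of total variance, which splits $\Var[\hat o]$ into the average within-sector shadow variance plus the variance of $o_\lm$ over $\lm$. This gives Eqs.~\eqref{eq:var_block_lc}--\eqref{eq:var_block_gc}. The within-sector term is controlled by the Clifford shadow norm on a register of dimension $d_\lm\le d_{\max}=n+1$.
\begin{itemize}
\item For global Cliffords, the shadow norm is at most $3\Tr[O_\lm^2]\le 3d_{\lm}\norm{O_\lm}_\infty^2$.
\item For local Cliffords, we bound the norm using $4^{\#\text{qubits}}\lesssim d_{\max}^2$.
\end{itemize}
In both cases the bound is $\order{n^2\norm{O}_\infty^2}$, using $\norm{O_\lm}_\infty\le\norm{O}_\infty$ since $O_\lm$ is a block of $O$. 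The between-sector term is at most $\norm{O}_\infty^2$. Together these give $\Var[\hat o]\le c\,(n^2+2n+2)\norm{O_i}_\infty^2$ with $c\in\{1,3\}$.

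\textbf{Step 3: concentration.} We split the $N$ snapshots into $K=\order{\log(M/\dl)}$ batches of size $\order{\max_i\Var[\hat o_i]/\ep^2}$. For each observable we take the mean of each batch and then the median of the batch means. Chebyshev's inequality gives each batch mean within $\ep$ with probability at least $3/4$. A Hoeffding bound on the median then gives failure probability $e^{-\Omega(K)}\le\dl/M$ per observable, and a union bound over the $M$ observables yields the joint guarantee. The resulting sample size is $N\in\order{\log(M/\dl)\,n^2\ep^{-2}\max_i\norm{O_i}_\infty^2}$, as claimed.

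The main obstacle is Step 2, specifically the within-sector shadow norm. The unknown state is the normalized $\tilde\tau_\lm$, which mixes different multiplicity copies. The register dimension $n+1$ is generally not a power of two, so we must embed it in $2^{\lceil\log_2(n+1)\rceil}\le 2(n+1)$ dimensions. We then need to check that the Clifford shadow bounds survive this padding with only constant-factor loss. This is where the factor $d_{\max}^2$, and thus the $n^2$, enters.
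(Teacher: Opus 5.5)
Your proposal is correct and follows essentially the same route as the paper: an unbiased sector-sampled estimator, and the law-of-total-variance split into an average within-sector Clifford shadow variance of order $d_{\max}^2\norm{O}_\infty^2$ (with $d_{\max}=n+1$) plus a between-sector term at most $\norm{O}_\infty^2$. You then plug this into the standard median-of-means bound of Eq.~\eqref{eq:classical_shadow} with a union bound over the $M$ observables, exactly as the paper does. Your explicit handling of padding the $(n+1)$-dimensional register to $2^{\lceil\log_2(n+1)\rceil}$ dimensions is a detail the paper glosses over; it only changes constants, so for local Cliffords the factor $c=1$ should really be a constant such as $4$, and the stated $\order{n^2}$ scaling is unaffected.
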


Although this approach is conceptually straightforward, it requires substantial circuit overhead. In particular, implementing QST requires polynomial depth circuit and additional qubits, since ancillary registers are needed. Another alternative method that mitigates this quantum overhead is shallow permutation classical shadows which achieves quantum data acquisition using $\order{1}$ circuit depth by avoiding the explicit implementation of the Schur transform. 
However, this advantage comes at the cost of increased classical post processing. In particular,  reconstructing components of the state projected onto irrep subspace becomes computationally expensive, while access to the quantity $\sum_{p_\lm = 1}^{d_\lm} \rho_{\lm}^{p_\lm}$ is essential for efficient classical simulation as highlighted in Eq.~\eqref{eq:loss_block}. This underlines the inherent trade-off between quantum resources (circuit depth and qubit count) and classical resources (post-processing). 

\section{Efficient PI-CS shadow for $S_n$-equivariant input state\label{adx:symmetrized_pi_cs}}
Now, we consider the case where the initial state is also $S_n$-equivariant, i.e., $\rho = \bigoplus_\lm \eye_{m_\lm} \otimes \rho_\lm$ with $\rho_\lm$ a $d_\lm \times d_\lm$ sized matrix. In this case, we are allowed to use symmetrized PI-CS, which avoids implementing a QST. In this section, we summarize the symmetrized PI-CS protocol formulated entirely in Pauli basis and Schur basis. Reader may refer to Appendix B and C in Ref.~\cite{sauvage2024classical} for further details as well as rigorous mathematical proofs of the method.

\subsection{Symmetrized PI-CS protocol}

In symmetrized PI-CS, we take the projective measurement ensemble $\VC_{\rm symm-PI}$ defined as: 
\begin{equation}
    \VC_{\rm symm- PI} = \{V = W^{\otimes n} \; \lvert \; W\in \mathbb{SU}(2) \} 
\end{equation}
In practice, each run draws a single-qubit gate $W$ from the Haar measure on $\mathbb{SU}(2)$ by parameterizing $W$ with a vector of Euler-angles $\vec{\bt} = [\bt_1, \bt_2, \bt_3]$ as
\begin{equation}
    W_{\vec{\bt}} = e^{-i \frac{\bt_3}{2} Z} e^{-i \frac{\bt_2}{2} Y} e^{-i \frac{\bt_1}{2} Z},~~~\text{with } \vec{\bt}~\text{sampled from } 
    \begin{cases}
        p(\bt_1) = p(\bt_3) = \frac{1}{2\pi},  & \bt_1, \bt_3 \in [0, 2\pi]\;, \\ 
        p(\bt_2) = \frac{\sin(\bt_2)}{2}, & \bt_2 \in [0,\pi]\;.
    \end{cases}
\label{eq:W_theta}
\end{equation}

After applying $V = W^{\otimes n}$, the measurement is performed with a set of Hamming weight projectors, which consists of $n+1$ projectors given by
\begin{equation}
    \Pi_{h} = \sum_{\substack{x \in \{0, 1\}^n \\ {\rm HW}(x) = h}} \ketbraq{x} = \frac{1}{h!(n-h)!} \sum_{\sg \in S_n} \sg\left(\ketbraq{0}^{\otimes(n-h)} \ketbraq{1}^{\otimes h } \right)\sigma\ad\;. 
    \label{eq:HW_projectors}
\end{equation}
Then, the classical-shadow estimator for an observable $O$ can be explicitly written as:
\begin{equation}
    \hat{o} = \Tr[\hat{\rho} O] = \Tr[O \MC^{-1}_{\rm symm-PI} \left((W\ad_{\vec{\bt}})^{\otimes n } \Pi_h W_{\vec{\bt}}^{\otimes n}\right)]\;.
    \label{eq:o_estimate}
\end{equation}

\subsection{Constructing classical shadow estimates}
To construct the classical shadow estimates for symmetrized PI-CS, we evaluate the expectation value $\Ebb_{V}[(V\ad)^{\otimes 2} \tilde{\Pi} V^{\otimes 2}] = \Ebb_{\vec{\bt}}[(W_{\vec{\bt}}\ad)^{\otimes 2n } \tilde{\Pi} W_{\vec{\bt}}^{\otimes 2n}] $ in Eq.~\eqref{eq:measurement_channel}. 
We introduce an orthonormal symmetrized Pauli basis set which consists of all the symmetrized Pauli string $\TC_{S_n}(P_{\vec{k}})$ defined over $n$-qubits 
\begin{equation}
    B_{\rm Pauli} = \left\{ B_{\vec{k}} = \TC_{S_n}\left(P_{\vec{k}}\right) = \frac{1}{\sqrt{d n! (n-k)! \vec{k}!} }  \sum_{\sg \in S_n} R(\sg)\left(X^{\otimes k_X} Y^{\otimes k_Y } Z^{\otimes k_Z} \eye^{\otimes n -k}\right)R\ad(\sg) \right\}_{\vec{k}}\;, 
    \label{eq:B_pauli}
\end{equation}
where  $d = 2^n$ denotes the Hilbert space dimension.
Here, we use a shorthanded notation $\vec{k}! = k_X! k_Y! k_Z!$ and $P_{\vec{k}}$ is defined as Eq.~\eqref{eq:Pk} with $k = \abs{\vec{k}}$. The basis spans $\mathrm{comm}(S_n)$ with dimension 
\begin{equation}
    d_{\rm PI} = |B_{\rm Pauli}|= \binom{n+1}{3} = \frac{(n+1)(n+2)(n+3)}{6}\;.
\end{equation}  Expressed in this basis, the two-copy moment admits the expansion
\begin{equation}
    \Ebb_{W}\left[ (W_{\vec{\bt}}\ad)^{\otimes 2n} \tilde{\Pi} W^{\otimes 2n }_{\vec{\bt}}\right] = \sum_{\vec{k}, \vec{k}'} c(\vec{k}, \vec{k}') B_{\vec{k}} \otimes B_{\vec{k}'}\,,
    \label{eq:2n_fold_expectation}
\end{equation}
where $c(\vec{k}, \vec{k}')$ builds a matrix $C$ of size $d_{\rm PI} \times d_{\rm PI}$. This matrix $C$ fully characterizes the symmetrized PI measurement channel $\MC_{\rm symm-PI}$ leading to the expression 
\begin{equation}
    \MC_{\rm symm-PI}(\rho) = \sum_{\vec{k}, \vec{k}'}c(\vec{k}, \vec{k}') \Tr[\rho B_{\vec{k}}]B_{\vec{k}'}\;.
    \label{eq:m_symmetrized}
\end{equation}

For a given linear operator $A \in \LC(\HC)$, the action of the symmetrized PI measurement channel admits a compact vectorized representation, 
\begin{equation}
    |\MC_{\rm symm-PI}(A)\rangle\!\rangle = C |A^{S_n}\rangle\!\rangle\;,  
\end{equation}
where $A^{S_n} = \TC_{S_n}(A)$ denotes the projection of $A$ onto the symmetric subspace ${\rm comm}(S_n)$. The vector $|A^{S_n}\rangle\!\rangle$ is a $d_{\rm PI}$ dimensional vector defined via its components in the symmetrized Pauli basis as $|A^{S_n}\rangle\!\rangle_\al = \sum_{\al \in [d_{\rm PI}]}\Tr[B_\al\ad A ]|B_\al\rangle\!\rangle$. Here, we relabel the multi-index $\vec{k}$ by a single index $\al \in [d_{\rm PI}]$ for notational simplicity. 

Since the measurement channel is supported on ${\rm comm}(S_n)$, its (pseudo-)inverse $\MC_{\rm symm-PI}^{-1}$ is obtained by inverting the matrix $C$ on this subspace. Explicitly, the inverse channel acts as $|\MC_{\rm symm-PI}(A)\rangle\!\rangle = C^{-1} |A^{S_n} \rangle\!\rangle$. 
Combining the expression with Eq.~\eqref{eq:o_estimate}, the classical shadow estimator for the expectation value of an observable $O$ can be written in vectorized form as:
\begin{equation}
    \hat{o} = \Tr[\hat{\rho} O] = \langle\!\langle (W\ad_{\vec{\bt}})^{\otimes n}  \Pi_h W_{\vec{\bt}}^{\otimes n}| C^{-1} |O\rangle\!\rangle
    \label{eq:cs_estimate_vector}
\end{equation}
with the vector 
\begin{equation}
    |(W\ad_{\vec{\bt}})^{\otimes n}  \Pi_h W_{\vec{\bt}}^{\otimes n} \rangle\!\rangle = \sum_{\vec{k}} \Tr[B_{\vec{k}}(W\ad_{\vec{\bt}})^{\otimes n}  \Pi_h W_{\vec{\bt}}^{\otimes n}] |B_{\vec{k}}\rangle \! \rangle \;. 
\end{equation}
Therefore, evaluating the symmetrized PI-CS estimator reduces to two tasks: (i) computing matrix $C$ that characterizes the measurement channel, and (ii) evaluating the entries of the measurement vector $|(W\ad_{\vec{\bt}})^{\otimes n}  \Pi_h W_{\vec{\bt}}^{\otimes n} \rangle\!\rangle$ in the symmetrized Pauli basis.

We begin by computing the matrix $C$. To this end, it is convenient to first express the Hamming weight projectors $\{\Pi_h\}$ in the symmetrized Pauli basis. We introduce the shorthand notation= $B^m = B_{(0, 0, m, n-m)} = \TC_{S_n}(Z^{\otimes m} \eye^{\otimes (n -m)})$ which denotes the normalized symmetrization of Pauli-$Z$ operators acting on $m$ qubits. In this basis, each Hamming-weight projector with weight $h$ admits the form
\begin{equation}
    \Pi_h = \sum_{m = 0}^n \al(h, m) B^m\;.
    \label{eq:Hamming_weight_projector}
\end{equation}
Since $B^m$ are orthonormal, the coefficients $\al(h, m)$ can be evaluated as
\begin{equation}
    \al(h,m) = \Tr[B^m \Pi_h] = d^{-\frac{1}{2}}\binom{n}{m}^{\frac{1}{2}}a(h,m), ~~\text{with } a(h,m) = \sum_{l = 0}^h\binom{m}{l}\binom{n-m}{h -l}(-1)^{l}\;. 
\end{equation}

We now evaluate the expression of $2n$-qubit operator $\widetilde{\Pi}$ defined in Eq.~\eqref{eq:measurement_channel}. Substituting Eq.~\eqref{eq:Hamming_weight_projector} into its definition yields
\begin{equation}
    \tilde{\Pi} = \sum_{h = 0}^n \left(\Pi_h\right)^{\otimes 2} = \sum_{m, m'=0}^n \tilde{\al}(m, m') B^m \otimes B^{m'}, \qquad \tilde{\al}(m, m') = \sum_h \al(h, m) \al(h, m').
    \label{eq:pi_tilde}
\end{equation}
Inserting Eq.~\eqref{eq:pi_tilde} into the two-copy moment $\Ebb_{W}\left[ (W_{\vec{\bt}}\ad)^{\otimes 2n} \tilde{\Pi} W^{\otimes 2n }_{\vec{\bt}}\right]$ given by Eq.~\eqref{eq:2n_fold_expectation} and collecting all the expressions lead to a closed form expression for the matrix elements $C$ as:
\begin{equation}
    c(\vec{k}, \vec{k}') =
    \begin{cases}
        \frac{(-1)^{\frac{\abs{k - k'}}{2}}}{d(k + k' + 1)\sqrt{\vec{k}!\vec{k'}!(n-k)!(n-k')!}} \cdot \frac{(\vec{k} + \vec{k'})!}{\left(\frac{\vec{k} + \vec{k}'}{2}\right)!}\cdot \frac{(2n - k - k')!}{\left(\frac{2n - k - k'}{2}\right)!}  & \text{if } k_\al + k'_\al \in 2\mathbb{N},~~\forall \al \in \{X,Y,Z\}\,,  \\ 
        0 & \text{otherwise.}
    \end{cases} 
    \label{eq:ckk_expression}
\end{equation}
We again use the shorthand notation for multi-index factorials $\vec{k}! = k_X!\cdot k_Y! \cdot k_Z!$ and denote by $k := \abs{\vec{k}} = k_X + k_Y + k_Z$ the total weight of the vector $\vec{k}$. 
The parity constraint implies that $C$ decomposes into 8 independent blocks, corresponding to the parity sectors of $(k_X, k_Y, k_Z)$. Constructing $C$ explicitly requires $\order{n^6}$ operations. 

Having determined $C$, we now compute the entries of the measurement vector $|(W_{\vec{\bt}}\ad)^{\otimes n} \Pi_h W_{\vec{\bt}}\rangle\!\rangle$ on which the inverse matrix $C^{-1}$ acts following Eq.~\eqref{eq:cs_estimate_vector}. Using the expansion of $\Pi_h$ in Eq.~\eqref{eq:Hamming_weight_projector} together with the definitions of Pauli basis $B_{\vec{k}}$ (see Eq.~\eqref{eq:B_pauli}) and single-qubit rotation $W_{\vec{\theta}}$ (c.f. Eq.~\eqref{eq:W_theta}), we obtain
\begin{align}
    \Tr[B_{\vec{k}}(W_{\vec{\bt}}\ad )^{\otimes n} \Pi_h W_{\vec{\bt}}^{\otimes n}] & = \sum_{m = 0}^n \al(h, k) \Tr[B_{\vec{k}} (W_{\vec{\bt}}\ad)^{\otimes n}  B^m W_{\vec{\bt}}^{\otimes n}]  \nonumber \\ & = \frac{\sqrt{n!}}{\sqrt{d \vec{k}!(n-k)!}} a(h, k) \cos(\bt_1)^{k_X}\sin(\bt_1)^{k_Y} \sin(\bt_2)^{k_X + k_Y} \cos(\bt_2)^{k_Z}\;.
\end{align}
Together with the inverse matrix $C^{-1}$, these expressions fully determining the symmetrized PI-CS estimator.

The main computational bottleneck of evaluating symmetrized PI-CS estimators is the inversion of the matrix $C$ given in Eq.~\eqref{eq:cs_estimates}.  By reordering the vector $\vec{k}$, we can reduce the matrix $C$ into 8 blocks, thereby reducing the effective size of the matrices that must be handled. Nevertheless, a direct inversion of these blocks remains computationally expensive.
In practice, we therefore avoid explicitly computing $C^{-1}$, which costs $\order{d_{\rm PI}^3} \approx \order{n^9}$. Instead, we perform a one-time LU decomposition of $C$ and subsequently solve the system of linear equations $C\kett{x} = |O\rangle\!\rangle$ for each observable $O$ of interest. The following theorem summarizes the above discussion: 

\begin{theorem}[Cost of post-processing symmetrized PI-CS in symmetrized Pauli basis]
    Consider the symmetrized PI-CS applied on $S_n$-equivariant initial state. Under the assumption that matrix $C$ is precomputed and its LU decomposition is evaluated, the cost of post-processing the symmetrized PI-CS scales as $N_{\CT} \in \order{n^6}$.
\end{theorem}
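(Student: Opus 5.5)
The plan is to count the operations in each step of evaluating one estimator $\hat{o}$ from Eq.~\eqref{eq:cs_estimate_vector}. The matrix $C$ and its LU factors are assumed precomputed, so what remains is: (i) building the measurement vector $|(W\ad_{\vec{\bt}})^{\otimes n}\Pi_h W_{\vec{\bt}}^{\otimes n}\rangle\!\rangle$ for each snapshot; (ii) solving $C\kett{x}=|O\rangle\!\rangle$ for each observable; (iii) taking the inner product of the two vectors. I will bound each step in terms of $d_{\rm PI}=\dim(\mathrm{comm}(S_n))\in\Theta(n^3)$, which follows from Eq.~\eqref{eq:dim-com}.

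Step (i). I will use the closed form derived just above for $\Tr[B_{\vec{k}}(W_{\vec{\bt}}\ad)^{\otimes n}\Pi_h W_{\vec{\bt}}^{\otimes n}]$. Each entry is a product of a prefactor, the coefficient $a(h,k)$, and trigonometric powers. First I precompute the log-factorials up to $n$, as in Appendix~\ref{adx:k_local_generators}, and the table $a(h,m)$ for the observed $h$. The table needs at most $\order{n^2}$ work via its defining sum, or a recursion in $m$. After that, each of the $d_{\rm PI}$ entries costs $\order{1}$, so this step is $\order{n^3}$ per snapshot.

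Step (ii). This is the dominant term. I will use that $C$ splits into 8 parity blocks, by Eq.~\eqref{eq:ckk_expression}, each of size $\Theta(d_{\rm PI})$. Given the LU factors, one forward and one backward substitution on a dense block of size $D$ cost $\order{D^2}$. With $D\in\order{n^3}$ this gives $\order{n^6}$ per observable. Before the solve, the right-hand side $|O\rangle\!\rangle$ must be formed. For the observables in Eq.~\eqref{eq:Sn_observables}, only $\order{1}$ components in the orthonormal basis $B_{\rm Pauli}$ are nonzero, except $X^{\otimes n}$-type strings, which are a single basis element anyway. So forming $|O\rangle\!\rangle$ is negligible. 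The solution $\kett{x}=C^{-1}|O\rangle\!\rangle$ is independent of the snapshot, so it is computed once per observable and reused.

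Step (iii). Each estimate $\hat{o}$ is then an inner product of length $d_{\rm PI}$, costing $\order{n^3}$ per snapshot. Summing the three steps, the post-processing is dominated by the triangular solves, which gives $N_{\CT}\in\order{n^6}$ per observable, with the remaining factors in $M$ and the number of snapshots being linear.

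The main obstacle I expect is justifying that the triangular solves really are $\order{D^2}$ without any hidden dense inverse. The theorem's hypothesis grants the LU factors, but I must check that the blocks need no pivoting beyond what the stored factorization includes, and that the solve does not need $C^{-1}$ explicitly. A further subtlety is that, strictly, $C$ is only pseudo-invertible on $\mathrm{comm}(S_n)$. I will restrict the LU factorization to this subspace, which is exactly the coordinate space of $B_{\rm Pauli}$, so $C$ is a genuine $d_{\rm PI}\times d_{\rm PI}$ invertible matrix there. Invertibility should follow from tomographic completeness of $\VC_{\rm symm-PI}$ on $\mathrm{comm}(S_n)$ as established in Ref.~\cite{sauvage2024classical}.
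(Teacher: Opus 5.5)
Your proposal is correct and is essentially the paper's argument: once the LU factors of $C$ are stored, each observable requires only a forward and a backward substitution on a matrix of size $d_{\rm PI}\in\Theta(n^3)$, costing $\order{d_{\rm PI}^2}=\order{n^6}$, which is how the paper avoids the $\order{n^9}$ explicit inversion. Your extra bookkeeping of the $\order{n^3}$ per-snapshot costs (building the measurement vector and taking the final inner product) is consistent with this, and your pivoting concern is harmless, since any row permutation is stored with the factorization and leaves the $\order{D^2}$ solve cost unchanged.
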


Finally, the following bound can be found for the observable estimator: 
\begin{equation}
    \Var[\hat{o}] \le (2n + 1)\norm{O}_F^2. 
\end{equation}
where $\norm{A}_F$ denotes the Frobenius norm of the matrix $A$, determining the quantum sample complexity of symmetrized PI-CS protocol.

\subsection{Symmetrized PI-CS in Schur basis}
One issue with the symmetrized PI-CS in the symmetrized Pauli basis explained in the section above is that we cannot obtain $\hat{o}_\lm$, which is required in the classical simulation procedure explained in the main text. Therefore, it is needed to work in Schur basis.  
In this section, we formulate the symmetrized PI-CS protocol expressed in the Schur basis. Up to this point, we have interpreted the dimension label $q_\lm$ as the Hamming weight associated with the symmetric component of the canonical Schur basis. Here, we introduce a new label $\tilde{q}_\lm$, defined by
\begin{equation}
    \tilde{q}_\lm \equiv \tilde{q}_{\lm(m)} =  q_\lm - \left(\frac{n}{2} - m\right)\;, 
\end{equation}
which satisfies $\tilde{q}_\lm \in Q_\lm$ where $Q_\lm$ denotes the set of all admissible half-integer values of $\tilde{q}_\lm$ for the given irrep label $\lm$ as 
\begin{equation}
    Q_\lm = \{-s_\lm, -s_\lm + 1,\dots,s_\lm -1, s_\lm\}\;, \qquad s_\lm \equiv s_{\lm(m) } = \frac{n}{2} - m\;, 
\end{equation}
with $|Q_\lm| = d_\lm = n - 2m + 1 $.  With this convention, the Schur basis states are relabeled as $\ket{\lm, p_\lm, q_\lm} \equiv \ket{\lm, p_\lm, \tilde{q}_\lm}$. 

Using the relabeled Schur basis states, we begin by introducing an orthonormal PI operator Schur basis, defined as
\begin{equation}
    B_{\rm Schur} = \left\{B_{\tilde{q}_\lm, \tilde{q}'_\lm}^\lm   = \frac{1}{\sqrt{m_\lm}} \sum_{p_\lm =1}^{m_\lm} \ketbra{\lm, p_\lm, \tilde{q}_\lm}{ \lm, p_\lm, \tilde{q}'_\lm} \right\}_{\lm, \tilde{q}_\lm, \tilde{q}'_\lm }\;.
\end{equation}
To derive the explicit form of the classical shadow estimators, we now express Eq.~\eqref{eq:HW_projectors} in terms of the  Schur basis as 
\begin{equation}
    \Pi_h = \sum_{\lm: \tilde{q}(h) \in Q_\lm} B^{\lm}_{\tilde{q}(h), \tilde{q}(h)} \sqrt{m_\lm}\;, 
    \label{eq:Pi_h_Schur}
\end{equation}
where we define the notation $\tilde{q}(h) = h - \frac{n}{2}$ to make explicit the dependence of the label $\tilde{q}$ on the Hamming weight $h$. The condition $\lm: \tilde{q}(h) \in Q_\lm$ indicates that the summation runs over irrep label $\lm$ for which the value $\tilde{q}(h)$ belongs in the set $Q_\lm$. 
Having established this Schur-basis representation of $\Pi_h$, we now turn to constructing the associated $2n$-qubit operator $\widetilde{\Pi}$ appearing in Eq.~\eqref{eq:measurement_channel}.  Substituting Eq.~\eqref{eq:Pi_h_Schur}  into the definition of $\widetilde{\Pi}$, we obtain
\begin{equation}
    \widetilde{\Pi} = \sum_{h = 0}^n (\Pi_h)^{\otimes 2} = \sum_{\tilde{q} \in Q_{(n, 0)}}\sum_{\substack{\lm: \tilde{q}\in Q_\lm \\ \nu : \tilde{q}\in Q_\nu } } \sqrt{m_\lm m_\nu} \tilde{A}^{\lm, \nu}_{(\tilde{q}, \tilde{q}), (\tilde{q}, \tilde{q})}\;,
\end{equation}
where $Q_{(n, 0)}$ denotes the largest admissible set among the $Q_\lm$, corresponding to the irrep $\lm = (n, 0)$ with $s_{(n, 0)} = \tfrac{n}{2}$. Here, we have introduced the operators $\tilde{A}^{\lm, \nu}_{(\tilde{q}_\lm, \tilde{q}_\lm'), (\tilde{q}_\nu, \tilde{q}'_\nu)}$ as 
\begin{equation}
     \tilde{A}^{\lm, \nu}_{(\tilde{q}_\lm, \tilde{q}_\lm'), (\tilde{q}_\nu, \tilde{q}'_\nu)} = B^\lm_{\tilde{q}_\lm, \tilde{q}_\lm'} \otimes B^{\nu}_{\tilde{q}_\nu, \tilde{q}_\nu'}\;.
\end{equation}

We next evaluate the average $\Ebb_{W\sim \mathbb{SU}(2)}\left[ (W \ad)^{\otimes 2n} \tilde{\Pi} W^{\otimes 2n }\right] $, which corresponds to twirling $\widetilde{\Pi}$ with respect to $2n$-fold tensor products $W^{2n}$. Equivalently, it implements the orthogonal projection of $\widetilde{\Pi}$ onto the space of operators commuting with $W^{\otimes 2n}$ for all $W \sim \mathbb{SU}(2)$. This commutant space can be spanned by an orthogonal set of $2n$-qubit Schur operator basis 
\begin{equation}
    B^{2n} = \left\{ \widetilde{B}^{\mu}_{p_\mu, p_\mu'} = \frac{1}{\sqrt{d_\mu}} \sum_{\tilde{q}_\mu \in Q_\lm}| \mu, p_\mu, \tilde{q}_\mu \rangle\!\rangle\!\langle\!\langle \mu, p'_\mu, \tilde{q}_\mu| \right\}_{\mu, p_\mu, p'_\mu}\;.
\end{equation}
Therefore, the average $\Ebb_{W\sim \mathbb{SU}(2)}\left[ (W \ad)^{\otimes 2n} \tilde{\Pi} W^{\otimes 2n }\right] $ can be expanded in terms of those basis as : 
\begin{equation}
    \Ebb_{W\sim \mathbb{SU}(2)}\left[ (W \ad)^{\otimes 2n} \tilde{\Pi} W^{\otimes 2n }\right] = \sum_{p_\mu, p'_\mu} \Tr[\widetilde{B}^{\mu}_{p_\mu, p_\mu'} \widetilde{\Pi}] \widetilde{B}^{\mu}_{p_\mu, p_\mu'}\;.
\end{equation}

We now introduce the coefficients  $C^{\tilde{q}_\lm, \tilde{q}_\nu}_{\lm, \nu; \mu}$ defined in terms of the standard Clebsch--Gordan coefficients $CG(j_1, j_2, J, m_1, m_2, M)$ as 
\begin{equation}
    C^{\tilde{q}_\lm, \tilde{q}_\nu}_{\lm, \nu; \mu} = CG(s_\lm, s_\nu, s_\mu, \tilde{q}_\lm, \tilde{q}_\nu, \tilde{q}_\lm + \tilde{q}_\nu)\;. 
\end{equation}
This definition makes explicit the correspondence between the standard angular momentum quantum numbers $(j,m)$ and the Schur basis related symbols $(s_\lm, \tilde{q}_\lm)$. Using these coefficients, we evaluate the overlap with respect to $\widetilde{B}^{\mu}_{p_\mu, p_\mu'}$ and obtain the expansion 
\begin{equation}
    \Ebb_{W\sim \mathbb{SU}(2)}\left[ (W \ad)^{\otimes 2n} \tilde{\Pi} W^{\otimes 2n }\right] = \sum_{\lm, \nu} \sum_{\mu \doteq \lm + \nu}  \frac{\widetilde{\Pi}_{\mu, \lm, \al} \sqrt{m_\lm m_\nu}}{d_\mu} \left(\sum_{\substack{\tilde{q}_\lm, \tilde{q}_\nu, \tilde{q}_\lm', \tilde{q}_\nu' \\ \tilde{q}_\lm+ \tilde{q}_\nu = \tilde{q}'_\lm + \tilde{q}'_\nu \in Q_\mu}}C^{\tilde{q}_\lm', \tilde{q}_\nu'}_{\lm, \nu, \mu} C^{\tilde{q}_\lm, \tilde{q}_\nu}_{\lm, \nu; \mu} \widetilde{A}^{\lm, \nu}_{(\tilde{q}_\lm, \tilde{q}'_\lm), (\tilde{q}_\nu, \tilde{q}_\nu')}\right)\;.
    \label{eq:twirl_schur_basis}
\end{equation}
where the coefficient $\widetilde{\Pi}_{\mu, \lm, \nu}$ is given by projection of $\widetilde{\Pi}$ on the operator $\tilde{B}^{\mu}_{p_\mu p_\mu}$ as:
\begin{equation}
    \widetilde{\Pi}_{\mu, \lm, \nu} = \sqrt{m_\mu} \Tr[\widetilde{B}^{\mu}_{p_\mu, p_\mu} \widetilde{\Pi}] = \sum_{\substack{\tilde{q}: \\ \tilde{q}\in  Q_\lm \cap Q_\nu \\ 2\tilde{q}\in Q_\mu }} (C^{\tilde{q}, \tilde{q}}_{\lm, \nu, \mu})^2 \;. 
    \label{eq:coeff_pi}
\end{equation} 
In Eq.~\eqref{eq:twirl_schur_basis}, the summation constraint $\mu \doteq \lm + \al$ indicates that the sum runs over those irrep $\mu$  for which there exist $\tilde{q}_\lm\in Q_\lm$ and $\tilde{q}_\nu\in Q_\nu$ such that $\tilde{q}_\lm + \tilde{q}_\nu \in Q_\mu$. This condition encodes  the Clebsch--Gordan selection rules associated with angular momentum addition, and is equivalent to the triangle inequality between spin numbers $\abs{s_\lm - s_\nu} \le s_\mu \le s_\lm + s_\nu$.

Inserting Eq.~\eqref{eq:twirl_schur_basis} to Eq.~\eqref{eq:measurement_channel}, we obtain the following expression for the measurement channel 
\begin{align}
    \MC_{\rm symm-PI}(\rho)  = \sum_\nu \sum_{\tilde{q}_\nu, \tilde{q}_\nu'}\left(\sum_{\lm} \sum_{\tilde{q}_\lm, \tilde{q}_\lm'}  c(\lm, \tilde{q}_\lm, \tilde{q}_\lm', \nu, \tilde{q}_\nu, \tilde{q}_\nu')  \Tr[\rho B^\lm_{\tilde{q}_\lm, \tilde{q}_\lm'}] \right) B^{\nu}_{\tilde{q}_\nu, \tilde{q}_\nu'} 
    \label{eq:measurement_channel_schur_basis}
\end{align}
where the coefficient $ c(\lm, \tilde{q}_\lm, \tilde{q}_\lm', \nu, \tilde{q}_\nu, \tilde{q}_\nu') $ is defined as
\begin{equation}
     c(\lm, \tilde{q}_\lm, \tilde{q}_\lm', \nu, \tilde{q}_\nu, \tilde{q}_\nu') = \eye_{[\tilde{q}_\lm + \tilde{q}_\nu = \tilde{q}_\lm' + \tilde{q}_\nu' ]} \sum_{\substack{\mu \\ \mu \doteq \lm + \nu, \\ \tilde{q}_\lm + \tilde{q}_\nu \in Q_\mu}} \frac{\widetilde{\Pi}_{\mu, \lm, \nu} \sqrt{m_\lm m_\nu}}{d_\mu} \left(C^{\tilde{q}_\lm', \tilde{q}_\nu'}_{\lm, \nu;\mu} C^{\tilde{q}_\lm, \tilde{q}_\nu}_{\lm, \nu; \mu}\right)\;.
\end{equation}
Here, $\eye_{[\textrm{C}]}$ denotes a function such that $\eye_{[\textrm{C}]} = 1$ if the condition $\rm C$ is satisfied and $0$ otherwise. 

To further simplify the computation, we exploit the additional block structure of the measurement channel. To this end, we relabel the Schur operator basis $B^{\lm}_{\tilde{q}_\lm, \tilde{q}_\lm'}$ as $B^\lm_{\Delta_\lm, \tilde{q}_\lm + \Delta_\lm}$ according to their shift $\Delta_\lm = \tilde{q}_\lm - \tilde{q}_\lm'$. Under this notation, the measurement channel $\MC_{\rm symm-PI}$ takes the form: 
\begin{equation}
\MC_{\rm symm-PI}(\rho) = \sum_\nu \sum_{\tilde{q}_\nu, \Delta_\nu} \left(  \sum_{\lm}  \sum_{\tilde{q}_\lm, \Delta_\lm} \tilde{c}(B^\l_{\Delta_\lm, \tilde{q}_\lm}, B^\nu_{\Delta_\nu,  \tilde{q}_\nu}) \Tr[\rho B^\l_{\Delta_\lm, \tilde{q}_\lm}] \right)  B^\nu_{\Delta_\nu,  \tilde{q}_\nu}\;, 
\label{eq:measurement_channel_relabeld}
\end{equation}
with the coefficients defined as 
\begin{align}
    \tilde{c}(B^\l_{\Delta_\lm, \tilde{q}_\lm}, B^\nu_{\Delta_\nu,  \tilde{q}_\nu}) & \equiv c(\lm, \tilde{q}_\lm, \tilde{q}_\lm + \Delta_\lm, \nu, \tilde{q}_\nu, \tilde{q}_\nu + \Delta_\nu) \nonumber \\ &  = \eye_{[\Delta_\lm = -\Delta_\nu]} \sum_{\substack{\mu \\ \mu \doteq \lm + \nu, \\ \tilde{q}_\lm + \tilde{q}_\nu \in Q_\mu}} \frac{\widetilde{\Pi}_{\mu, \lm, \nu} \sqrt{m_\lm m_\nu}}{d_\mu} \left(C^{\tilde{q}_\lm + \Delta_\lm, \tilde{q}_\nu+ \Delta_\nu }_{\lm, \nu;\mu} C^{\tilde{q}_\lm, \tilde{q}_\nu }_{\lm, \nu; \mu}\right)  \;. 
    \label{eq:tilde_c}
\end{align}
By construction, the measurement channel has non-trivial contribution only when $\Delta_\lm = -\Delta_\nu$. As a result, by reordering the basis operator according to the shift label $\Delta$, the matrix representation $C$ of the measurement channel can be decomposed into  a direct sum of independent blocks indexed by $(\Delta, -\Delta)$. As $\Delta \in \{-n, -n + 1, \dots, n \}$, the channel decomposes into $2n + 1$ blocks, each acting on the subspace spanned by all basis operators satisfying $\Delta_\lm = \Delta$. This block structure substantially reduces the computational complexity required for inverting $C$. 

Finally, we compute the entries in the vector $|(W\ad)^{\otimes n } \Pi_h W^{\otimes n} \rangle \! \rangle$ on which we apply the inverse measurement channel with $W = W_{\vec{\bt}}$ defined in Eq.~\eqref{eq:W_theta}. Recall that, under vectorization of operators, the adjoint action of an operator $A$ on an arbitrary operator $B$, $B \mapsto A\ad B A$, is represented  by a matrix $\bar{A}$ acting on the vectorized form $|B\rangle \! \rangle$. We denote $\bar{Y}$ and $\bar{Z}$ to be the matrix form of the commutation action generated by $\sum_i Y_i$ and $\sum_i Z_i$, respectively. With this notation, we obtain 
\begin{equation}
    |(W_{\vec{\theta}}\ad)^{\otimes n } \Pi_h W_{\vec{\theta}}^{\otimes n} \rangle \! \rangle = \exp\left(-i \frac{\theta_3}{2} \bar{Z}\right) \exp\left(-i \frac{\theta_2}{2} \bar{Y}\right) \exp\left(-i \frac{\theta_1}{2} \bar{Z}\right)|\Pi_h \rangle\!\rangle \;, 
\end{equation}
where the entries of vector $ |\Pi_h \rangle\!\rangle$ can be evaluated using Eq.~\eqref{eq:Pi_h_Schur}.

Combining all the results derived above, we can now compute the classical shadow estimator associated with each irrep. To this end, we define $\hat{o}_\lm = \Tr[\hat{\rho}_\lm O_\lm]$ where $\hat{\rho}_\lm$ denotes the estimator of $\rho_\lm$, and rewrite Eq.~\eqref{eq:o_estimate} as a sum of $\hat{o}_\lm$ from each irrep $\lm$ as 
\begin{equation}
    \hat{o} = \Tr[\hat{\rho} O] = \sum_\lm \Tr[\hat{\rho} (\eye_{m_\lm} \otimes O_\lm)] = \sum_\lm \langle\!\langle (W_{\vec{\theta}}\ad)^{\otimes n } \Pi_h W_{\vec{\theta}}^{\otimes n}  |   C^{-1}|O_\lm\rangle\!\rangle =: \sum_\lm m_\lm \hat{o}_\lm
\end{equation}
where the matrix $C$ is block-diagonalized and the vector $|O_\lm \rangle\!\rangle $ has non-zero components only in the entries corresponding to the operator $B^\lm_{\tilde{q}_\lm, \tilde{q}'_\lm}$. We emphasize that, although $C$ is block-diagonal, $|O_\lm\rangle\!\rangle$ has support across multiple blocks of $C$, as each block is indexed solely by the shift values $(\Delta, -\Delta)$ which is independent of $\lm$.

As the above analysis shows, the computation of the classical shadow estimators requires evaluating a large number of Clebsch--Gordan coefficients to compute the CS estimators, which constitutes the main computational bottleneck of the protocol. Assuming that the cost of evaluating a single Clebsch--Gordan coefficient is $\order{1}$ using a precomputed lookup table, the total computational cost scales at least as $\order{n^6}$, since the construction of $\MC_{\rm symm-PI}$ involves nested loops over the indices $\lm$,  $\nu$, $\mu$, $q_\lm$, $q_\nu$ and $\Delta$ (see Eqs.~\eqref{eq:measurement_channel_relabeld} and \eqref{eq:tilde_c}), each of which takes at most $\order{n}$ distinct values. As a result, the procedure becomes computationally demanding for large numbers of qubits $n$. Without pre-computation, an additional overhead is incurred due to the repeated evaluation of the Clebsch--Gordan coefficients.  

This cost can be amortized, since the measurement channel matrix $C$ depends only on the measurement protocol and the number of qubits $n$, and is independent of the initial state. Hence, for a fixed $n$, the matrix $C$  needs to be precomputed only once and can be reused across multiple CS simulations. In practice, this significantly reduces the per-simulation computational overhead and makes the approach feasible even for moderately large $n$.

\end{document}